\newmdenv[topline=false, bottomline=false, skipabove=\topsep, skipbelow=\topsep]{siderules}
\tikzset{snake it/.style={decorate, decoration=snake}}
\tikzstyle arrowstyle=[scale=1]
\tikzstyle directed=[postaction={decorate,decoration={markings,mark=at position .65 with {\arrow[arrowstyle]{stealth}}}}]
\tikzstyle reverse directed=[postaction={decorate,decoration={markings,mark=at position .65 with {\arrowreversed[arrowstyle]{stealth};}}}]
\newtheorem{theorem}{Theorem}
\newtheorem{corollary}{Corollary}
\newtheorem{proposition}{Proposition}
\newtheorem{lemma}{Lemma}
\newtheorem{remark}{Remark}
\newtheorem{assumption}{Assumption}
\def\bD{{\mathbb D}}
\def\bN{{\mathbb N}}
\def\bS{{\mathbb S}}
\def\bZ{{\mathbb Z}}
\def\gA{{\mathfrak A}}
\def\gB{{\mathfrak B}}
\def\gM{{\mathfrak M}}
\newcommand{\A}{\mathfrak{A}}
\newcommand{\ca}[1]{{\cal #1}}
\newcommand{\ben}{\begin{equation}}
\newcommand{\een}{\end{equation}}
\def\bena{\begin{eqnarray}}
\def\eena{\end{eqnarray}}
\def\cD{{\ca D}}
\def\cH{{\ca H}}
\def\cK{{\ca K}}
\def\cP{{\ca P}}
\def\cV{{\ca V}}
\renewcommand{\H}{\mathcal{H}}
\newcommand{\K}{\ensuremath{\mathcal{K}}}
\def\1{{\mathds{1}}}
\def\supp{{\mathrm{supp} \,}}
\def\GU{{\mathrm{SU}(1,1)}}
\def\Ad{{\mathrm{Ad}}}
\newcommand{\dd}{{\rm d}}
\newcommand{\tr}{\operatorname{Tr}}
\newcommand{\bint}{\operatorname{\int \!\!\!\!\! -- \, }}
\renewcommand{\log}{\operatorname{ln}}
\renewcommand{\epsilon}{\varepsilon}
\newcommand{\RR}{\mathbb{R}}
\newcommand{\CC}{\mathbb{C}}
\renewcommand{\SS}{\mathbb{S}}
\newcommand{\ZZ}{\mathbb{Z}}
\newcommand{\half}{\tfrac{1}{2}}
\begin{document}
\title{On the modular operator of mutli-component regions in chiral CFT}

	\author{
	Stefan Hollands$^{1}$\thanks{\tt stefan.hollands@uni-leipzig.de} 
		\\ \\
{\it ${}^{1}$Institut f\" ur Theoretische Physik,
Universit\" at Leipzig, }\\
{\it Br\" uderstrasse 16, D-04103 Leipzig, Germany} \\
	}
\date{\today}
	
\maketitle

\begin{abstract}
We introduce a new approach to find the Tomita-Takesaki modular flow for multi-component regions in general chiral conformal field theory. 
Our method is based on locality and analyticity of primary fields as well as the so-called Kubo-Martin-Schwinger (KMS) condition. 
These features can be used to transform the problem to a Riemann-Hilbert problem on a covering of the complex plane cut along the regions, 
which is equivalent to an integral equation for the matrix elements of the modular Hamiltonian. Examples are considered.
\end{abstract}

\section{Introduction}

The reduced density matrix of a subsystem  induces an intrinsic internal dynamics called the ``modular flow''. The flow is non-trivial only for non-commuting 
observable algebras -- i.e., in quantum theory -- and depends on both the subsystem and the given state of the total system. It has been subject to much attention in theoretical physics in recent times because it is closely related to information theoretic concepts. As examples for some topics such as 
Bekenstein bounds, Quantum Focussing Conjecture, c-theorems, holography we mention \cite{Longo1,bousso1,bousso2,Casini2,Hubeny}. In mathematics, the modular flow has played an important role in the study of operator algebras through the work of Connes, Takesaki and others, see \cite{Takesaki} for an encyclopedic account.

It has been known almost from the beginning that the modular flow has a geometric nature in local quantum field theory 
when the subsystem is defined by a spacetime region of a simple shape such as an interval in chiral conformal field theory (CFT) \cite{bisognano, hislop, longo3}: it is the 1-parameter group 
of M\" obius transformations leaving the interval fixed. For more complicated regions,
important progress was made only much later in a pioneering work by Casini et al. \cite{Casini0}, who were able to determine the flow for multi-component regions for a chiral half of free massless fermions in two dimensions. Recently in \cite{Casini1} they have generalized their method to the conformal theory of a chiral $U(1)$-current. 
Unfortunately, the method by  \cite{Casini0,Casini1}, as well as all other concrete methods known to the author, is based in an essential way on special properties of free quantum field theories. The purpose of this paper is to develop methods that could give a handle on the problem in general chiral CFTs, i.e. the left-moving half of a CFT on a (compactified) lightray in 1+1 dimensional Minkowski spacetime, and to make some of the constructions in the literature rigorous by our alternative method. 

Consider a (possibly mixed) state in the chiral CFT, described by a density matrix $\rho$. Typical states of interest are the vacuum $\rho = |\Omega_0\rangle \langle \Omega_0|$, 
or a thermal state $\rho = e^{-\beta L_0}/\tr e^{-\beta L_0}$. Given a union $A=\cup_j (a_j,b_j)$ of intervals of the (compactified) lightray, we can consider its reduced density 
matrix $\tr_{A'} \rho = \rho_A$, where $A'$ is the complement of $A$. For the purposes of this discussion, we restrict to the vacuum state, although in the main part, 
thermal states will play a major role as well. If $\phi(x)$ is a primary field localized at $x \in A$, the ``modular flow'' is the Heisenberg time evolution 
$\rho_A^{it} \phi(x) \rho^{-it}_A$. The object $\rho_A$ is not actually well-defined in quantum field theory, but the modular flow is. Below we will use the rigorous 
framework of Tomita-Takesaki theory in our construction, but for pedagogical purposes, we here pretend that $\rho_A$ exists. Formally, the Hilbert space $\H$ splits as 
$\H_A \otimes \H_{A'}$ and if $\rho$ is pure, then $\rho_A$ is formally a density matrix on $\H_A$. Its -- equally formal -- logarithm $H_A = \log \rho_A$ is 
called the modular Hamiltonian in the physics literature. 

 In mathematical terms, the quantity which is well defined is the operator $\Delta = \rho_A \otimes \rho_{A'}^{-1}$. For $x \in A$, we can then also write 
 $\rho_A^{it} \phi(x) \rho^{-it}_A = \Delta^{it} \phi(x) \Delta^{-it}$ and $\log \Delta = H_A \otimes 1_{A'}-1_A \otimes H_{A'}$. Furthermore, one can 
 write 
\ben
\label{GNS}
\tr(\phi(x)\rho^{it}_A\phi(y)\rho^{1-it}_A) = \langle \Omega_0| \phi(x) \Delta^{it} \phi(y) \Omega_0 \rangle. 
\een
and since the conformal primaries generate the full Hilbert space (mathematically, the Reeh-Schlieder theorem), we see that we knowledge of this 
quantity for all primaries $\phi$ suffices, in principle, to determine all 
matrix elements of $\Delta^{it}$, hence the operator itself, hence the flow. Alternatively, to know the generator of the flow, it suffices to know
$\langle \Omega_0| \phi(x) (\log \Delta) \phi(y) \Omega_0 \rangle$. It is those types of quantities which we will study in this paper. 

Our main trick is the following observation and it variants. For $s>0$ and fixed $y \in A$, define a function of $x$ on the complex plane cut along the intervals $A$,
\ben
\label{Fdef11}
F(s, x, y) = 
\begin{cases}
\langle \Omega_0 | \phi(x) [1-e^s (1-\Delta)^{-1}]^{-1} \phi(y) \Omega_0 \rangle & \text{if $\Im(x)<0$,}\\
\langle \Omega_0 | \phi(y) [1-e^s(1-\Delta^{-1})^{-1}]^{-1} \phi(x) \Omega_0 \rangle & \text{if $\Im(x)>0.$}
\end{cases}
\een
Then not only do the usual properties of CFTs imply that this function is holomorphic on the mutliply cut plane, but we also know 
its jumps across the cuts, given by the functional equation
\ben\label{jump001}
(1-e^s) F(s, x-i0, y) - F(s, x+i0, y) = \langle \Omega_0 | [ \phi(x), \phi(y) ] \Omega_0 \rangle.
\een
The commutator on the right side is given by a sum of $\delta$-functions and their derivatives by locality. 
We also prove certain further general properties of this function such as the degree of divergences as $x$ approaches $y$ or
any boundary of a cut which depend on the conformal dimension of $\phi$. 
Using this functional equation and a standard contour argument appearing frequently in the study of Riemann-Hilbert type problems, we then 
obtain a linear integral equation for $F$ of Cauchy-type, see cor. \ref{cor:3} (in the case of bosonic fields).
The desired matrix 
elements of the modular Hamiltonian are related by the integral 
\ben
\langle \Omega_0| \phi(x) (\log \Delta) \phi(y) \Omega_0 \rangle = \int_0^\infty \dd s \, F(s,x,y) .
\een
A variant of this method also works for fermionic fields and for 
thermal states where the corresponding function $F$ lives on a torus cut along $A$ and satisfies a corresponding integral 
equation, see cor. \ref{cor:2}. 

The basis of our method is in some sense an old trick in quantum statistical mechanics. Consider a statistical operator $\rho$. The expectation functional acting an observable $X$ is $\omega(X)=\tr(X\rho)$ and the modular flow acting on an observable $X$ is, by definition, $\sigma^t(X) = \rho^{it} X \rho^{-it}$. For observables $X,Y$, consider the function $\varphi_{X,Y}(t) = \omega(X\sigma^t(Y)) = \tr(X\rho^{it}Y\rho^{1-it})$. Since $\rho$ is a positive operator, one expects this function to be analytic inside the strip $\{ t \in \CC |  -1 < \Im(t) < 0 \}$. The values at the two boundaries of the strip are evidently related by the functional equation
\ben
\varphi_{X,Y}(t-i) = \varphi_{Y,X}(-t). 
\een
This functional equation is called the ``KMS-condition.'' Its fundamental importance was first understood in \cite{HHW}. 

Note that on the right side, $X,Y$ appear in opposite order. Thus, if we have information about their commutator, we can sometimes get a closed equation for $\varphi_{X,Y}(t)$ or related quantities. This is for instance the case for the ideal quantum Bose or Fermi gas, where  $\rho = Z^{-1} e^{-\beta H}, H=\sum E_k^{} a^*_{\bf k} a_{\bf k}^{} $, the modular flow is just the Heisenberg time evolution
with `time' parameter $-\beta t$, and where one 
takes $Y=a_{\bf k}, X=a_{\bf p}^*$. Using the commutators and $\varphi_{\bf p,k}(t)=e^{-it\beta E_k}\varphi_{\bf p,k}(0)$, the  KMS condition is thereby 
equivalent in the case of bosons to the condition
\ben
(1-e^{-\beta E_k})\varphi_{\bf p,k}(0) = e^{-\beta E_k} \delta^3({\bf k}-{\bf p}) .
\een
In this way, one can easily derive the standard Bose-Einstein (or Fermi-Dirac-) formula for the 2-point function. In our problem, these ideas are modified and applied 
to the reduced density matrix $\rho_A$ of a CFT.

This paper is organized as follows. In secs. \ref{sect1}, \ref{sect3}, we review basic notions from operator algebras, Tomita-Takesaki theory, and the operator algebraic 
approach to CFT (conformal nets) in order to make the paper self-contained. In secs. \ref{sect2}, \ref{sec:sect5}  we introduce our method and study several examples. 
We conclude in sec. \ref{sec:Conclusion}.
Some conventions for elliptic functions are described in the appendix.

\medskip
\noindent
{\bf Notations and conventions:} Gothic letters $\gA, \gM, \dots$ denote $*$-algebras, usually v. Neumann algebras. Calligraphic letters $\H, \K, \dots$ denote linear spaces, always assumed to be separable. The inverse temperature $\beta$ and modular parameter $\tau$ are related by $-2\pi i \tau = \beta$. The branches of 
$\log z$ and $z^\alpha$ are taken along the negative real axis. $\bS=\{z \in \CC \mid |z|=1\}$ denotes the unit circle, $\bD^\pm$ its interior/exterior.

\medskip
\noindent
{\bf Note added in proof:} After this preprint was submitted, it was pointed out to us by the authors of \cite{blanco} that one of our calculations related to thermal states 
contained an error, creating a tension between some of our results and those by \cite{blanco}, see also \cite{fries}. We are grateful to these authors for making us aware of this issue, which has been fixed 
in the current version.

\section{Review of modular theory}\label{sect1}

\subsection{Modular flow}

For the convenience of the unfamiliar reader we review the basic elements of modular ($=$ Tomita-Takesaki-) theory; detailed references are \cite{Takesaki,Bratteli}. 
Connections to quantum information theory are described in \cite{sanders_2}. An exposition directed towards a theoretical physics audience is \cite{witten}. 

The notion of modular flow is embedded into the theory of v. Neumann algebras. Such an algebra, $\gM$, can be defined as 
a complex linear space of bounded operators on some Hilbert space\footnote{We always assume that $\H$ is separable.} $\H$ that is 
closed under taking products, adjoints (denoted by $*$). Such limits are understood in the so called ``weak'' topology, i.e. convergence of 
matrix elements.  It is common to denote by $\gM'$ the commutant, defined as the set of all bounded operators on $\H$ 
commuting with all operators in $\gM$.

To define the objects of main interest of the theory, one has to assume that $\gM$ is in ``standard form'', meaing: 
$\H$ contains a ``cyclic and separating'' vector for $\gM$, that is, a unit vector $|\Omega \rangle$
such that the set consisting of $X|\Omega\rangle$, $X \in \gM$ is a dense subspace of $\H$, and such that $X|\Omega\rangle=0$ implies $X=0$
for any $X \in \gM$. The point is that one can then consistently define the anti-linear Tomita operator $S$ on the domain 
${\mathcal D}(S) = \{ X|\Omega\rangle \mid X \in \gM\}$ 
by the formula
\ben
SX|\Omega \rangle = X^* |\Omega \rangle . 
\een
The cyclic property is needed in order that $S$ is densely defined, whereas without the separating property the definition would not be self-consistent. One can show that $S$ is a closable operator. This technical property guarantees that $S$ has a polar decomposition. It is customarily denoted by $S=J\Delta^\frac12$, where $J$ anti-linear and unitary and $\Delta$ self-adjoint and non-negative. Tomita-Takesaki theory is about the interplay between the operators $\Delta, J$ and the algebras $\gM, \gM'$.  The basic theorem is: 

\begin{enumerate}
\item[(i)]
$J$ exchanges $\gM$ with the commutant in the sense that 
$J \gM J = \gM'$. Furthermore, $J^2 = 1, J\Delta J = \Delta^{-1}$. 

\item[(ii)] 
The modular flow $\sigma^t(X) =  \Delta^{it} X \Delta^{-it}$ leaves $\gM$ and $\gM'$ invariant for all $t \in \RR$.

\item[(iii)]
From the vector $|\Omega\rangle$, one can define the state functional 
$
\omega(X) = \langle \Omega |X\Omega \rangle, \quad \omega: \gM \to \CC. 
$
It is positive and normalized (meaning $\omega(X^*X) \ge 0 \, \, \forall X \in \gM, \omega(1) = 1$), and invariant under the modular flow 
in the sense that $\omega \circ \sigma^t = \omega$ for all $t \in \RR$. The {\bf KMS-condition} holds: for all $X,Y \in \gM$, the bounded function 
\ben
\label{Fdef}
t \mapsto \varphi_{X,Y}(t) = \omega (X\sigma^t(Y)) \equiv \langle \Omega | X \Delta^{it} Y \Omega \rangle
\een
has an analytic continuation to the strip $\{ z \in \CC \mid -1 < \Im z < 0 \}$ with the property that its boundary value for $\Im z \to -1^+$ exists and
is equal to
\ben
\label{KMS}
\varphi_{X,Y}(t-i) = \omega (\sigma^t(Y) X).
\een
\end{enumerate}

A partial converse to (iii) is: If $\omega'$ is a normal (i.e. continuous in the weak$^*$-topology) positive linear functional on $\gM$, 
then it has a unique vector representative $|\Omega' \rangle$ in the natural cone 
$\cP^\sharp:=\{ Xj(X) |\Omega \rangle \mid X \in \gM\}$,
where $j(X) = JXJ$; in other words
$\omega'(X) = \langle \Omega' |X\Omega' \rangle$ for all $X \in \gM$. 

The objects $J,\Delta,\cP^\sharp$ depend on the algebra $\gM$ and the state $|\Omega\rangle$. 

\medskip
\noindent
{\bf Example 0}: Even tough Tomita-Takesaki theory is most interesting in the case of infinite dimensional 
v. Neumann algebras of types II, III, it helps with intuition to have in mind the finite dimensional case, i.e.
the type I$_n$' (algebra of $n$ by $n$ matrices). 
In this case, $\gM = M_n(\CC) \otimes 1_n$, which acts 
on the Hilbert space $\H = \CC^n \otimes \CC^n$. 
Evidently, the commutant is $\gM' = 1_n \otimes M_n(\CC)$. A vector $|\Omega\rangle$ in this Hilbert space 
is cyclic and separating if $|\Omega \rangle = \sum_{j=1}^n \sqrt{p_j} |j \rangle \otimes |j\rangle$ in some ON basis $\{|j\rangle\}$ and iff all $p_j > 0$, $\sum_{j=1}^n p_j=1$. 
The state functional $\omega$ can be written in this example in terms of the ``reduced density matrix''  
\ben
\rho_\omega = \sum_{j=1}^n p_j |j \rangle \langle j|, \quad \omega(X) = \tr_{\CC^n}(X \rho_\omega) \quad (X \in \gM).
\een
In fact, any positive normalized state functional $\omega'$ arises from a unique reduced density matrix $\rho_{\omega'}$ in this way. 
It is easy to go through the definition of $\Delta, J$ via $S$ giving for instance that
\ben
\Delta^\frac12 = \rho^{\frac12}_\omega \otimes \rho^{-\frac12}_\omega .  
\een
Therefore, the modular flow is $\sigma^t(X) = \rho^{it}_\omega X \rho^{-it}_\omega$. The ``modular Hamiltonian'' is defined as 
the self-adjoint operator $\log \Delta$. In our example, therefore,
$
\log \Delta = \log \rho_\omega \otimes 1_n - 1_n \otimes \log \rho_\omega , 
$ 
where the first term belongs to $\gM$ and the second to $\gM'$. It is important to stress that the split of $\log \Delta$ into a part from $\gM$ and one from $\gM'$ 
is impossible for general v. Neumann algebras, in particular for the type III$_1$-factors appearing in quantum field theories\footnote{The possibility of making the split implies that 
$\sigma^t$ is inner, i.e. can be written as $\sigma^t(X) = U(t) X U(t)^*$ for unitaries $U(t)$ in $\gM$. One characterization of type $III$ v. Neumann algebras 
is that $\sigma^t$ precisely cannot be inner for any normal state $\omega$.}. Therefore, apart from trivial cases, the object $\log \rho_\omega$, hence the reduced 
density operator $\rho_\omega$ itself, does not exist. On the other and, $\log \Delta$ and $\omega$ always exist. 
We will make sure to work with these well-defined objects in our setting.

\medskip

Sometimes, a state $\omega$ is only given as an abstract (weakly continuous) expectation functional on an abstract\footnote{We mean a $C^*$-algebra with a preferred ``folium'' of normal states, see \cite{Takesaki}. In particular, it is not assumed that $\gM$ is a priori represented by bounded linear operators on some Hilbert space.} v. Neumann algebra $\gM$. Then one can perform the basic but very important {\bf GNS construction} in order to obtain a Hilbert space in which the state is represented by a vector. 

The starting point of this construction is the simple observation that the algebra $\gM$ itself, as a linear space, always forms a representation $\pi$ by left multiplication, i.e. $\pi(X) Y \equiv XY$. To equip this representation with a Hilbert space structure, it is natural to define $\langle X|Y\rangle = \omega(X^* Y)$, but this will in general lead to non-zero vectors with vanishing norm, unless $\omega$ is separating. Introduce ${\mathfrak J}_\omega = \{ X \in \gM \mid \omega(X^*X) = 0\}$. By the Cauchy-Schwarz inequality, $|\omega(X^*Y)| \le \omega(X^*X)^{1/2} \omega(Y^*Y)^{1/2}$, we have ${\mathfrak J}_\omega = \{ X \in \gM \mid \forall Y\in \gM, \omega(Y^*X) = 0\}$, so it is a closed linear subspace and a left ideal of $\gM$ containing precisely the null vectors. We can then define $\H_\omega = \gM/{\mathfrak J}_\omega$ and complete it in the induced inner product. The left representation induces a representation on $\H_\omega$ which is called $\pi_\omega$. It is the desired GNS-representation. The vector $|\Omega_\omega \rangle \in \H_\omega$ representing $\omega$ is simply the equivalence class of the unit operator, $1$. It is by construction ``cyclic'' in the sense that the set $\pi_\omega(\gM)|\Omega_\omega \rangle$ is dense in $\H_\omega$. The vector is standard if $\omega(X^*X)$ implies $X=0$ (meaning ${\mathfrak J}_\omega = \{0\}$), in which case we say that it is faithful.

\section{Review of chiral CFTs}
\label{sect3}

\subsection{Conformal nets on the real line (lightray)}
One way to formalize the structure of chiral conformal quantum field theories (CFTs) is via nets of operator algebras.  A chiral conformal field theory is associated with one  lightray. It is given abstractly by an assignment of 
an algebra of operators $\gA(I)$ with each open interval $I = (a,b) \subset \RR$ of the this lightray. 

This assignment is called a {\bf conformal net} if it obeys the following rules (see \cite{haag} for a general introdution to algebraic quantum field theory 
and e.g. \cite{frohlich,longo2} for conformal nets):

\begin{enumerate}
\item[a1)] (Isotony) The algebras $\gA(I)$ are v. Neumann algebras acting on a common Hilbert space $\H$. If $I \subset J$ are intervals, then 
$\gA(I) \subset \gA(J)$. 

\item[a2)] (Causality) Setting $I' = \RR \setminus [a,b]$ if $I=(a,b)$, we have $\gA(I') \subset \gA(I)'$, i.e. observables from disjoint intervals commute. 

\item[a3)] (Covariance) On $\H$, there is a unitary representation $g \mapsto U(g)$ of the group ${\rm SL}(2, \RR)/\{\pm 1\}$. If we let elements $g = 
\left( 
\begin{matrix}
a & b \\
c & d
\end{matrix}
\right)
$ of this group act locally on 
$\RR$ by fractional transformations $g(x) = \frac{ax + b}{dx + c}$, then it is assumed that 
$U(g) \gA(I) U(g)^* = \gA(g(I))$ for all intervals $I$ and $g \in {\rm SL}(2, \RR)$ such that $g(x)$ is well defined for all $x \in I$. We also use the notation
\ben
\alpha^g(X) = \Ad_{U(g)}(X) \equiv U(g) X U(g)^*. 
\een

\item[a4)] (Spectrum) The representation $g \mapsto U(g)$ is strongly continuous. The infinitesimal generator $P$ of translations ${\rm tra}_t(x) = x+t$, i.e. $P = -i\frac{\dd}{\dd t} U({\rm tra}_t)|_{t=0}$ has non-negative spectrum.   

\item[a5)] (Vacuum) There is a unique (unit) vector $|\Omega_0 \rangle \in \H$ such that $U(g) |\Omega_0 \rangle = |\Omega_0 \rangle$.
The corresponding state functional will be called $\omega_0(X) = \langle \Omega_0 | X \Omega_0 \rangle$ throughout. The vacuum 
should be cyclic for $\bigvee_I \gA(I)$, the v. Neumann algebra generated by all intervals. 
\end{enumerate}

The algebra of observables associated with the union of $p$ open intervals  with disjoint closures,
\ben
\label{arcs}
A = \bigcup_{j=1}^p (a_j, b_j) \subset \RR \quad \text{or $\bS$ below,}
\een
where each $I_j = (a_j, b_j)$ is an interval of $\RR$ (or arc of the circle $\bS$ below), is defined to be 
\ben\label{generatedvN}
\gA(A) = \bigvee_{i=1}^p \gA(I_i), 
\een
where the symbol $\vee$ means the v. Neumann algebra that is generated by the algebras for the individual arcs/intervals.

\subsection{Conformal nets on the circle (compactified lightray)}

If we want to insist on a global action of the M\" obuis group ${\rm SL}(2, \RR)/\{\pm 1\}$ on the net, we must pass from the light ray to a 
compactified lightray, i.e. the circle. The compactification proceeds via the  Caley transformation $C: \bS \setminus \{+1\} \to \RR, C(x) = -i(x+1)/(x-1)$, and 
under this transformation intervals get mapped to arcs of the circle. The Caley transform intertwines the action of ${\rm SL}(2, \RR)/\{\pm 1\}$ on the lightray with the action $z \mapsto g(z) = \frac{\alpha z+ \beta}{\bar \beta z + \bar \alpha}$ of $\GU/\{\pm 1\}$ on the circle, where $g$ now corresponds to the matrix 
$
 \left(
 \begin{matrix}
 \alpha & \beta\\
 \bar \beta & \bar \alpha
 \end{matrix}
 \right) \in \rm{SU}(1,1)
 $
 under the standard isomorphism between the groups ${\rm SL}(2, \RR)$ and $\GU$. 
 
 The axioms for a conformal field theory, i.e. net of operator algebras, on the 
 circle are completely analogous to those for the lightray. In the circle picture, it is more standard and natural to use the generators of $\GU$ called $L_0, L_{\pm 1}$, where $L_0$ is the generator of rotations $z \to e^{it} z, t \in \RR$. The requirement a4) is equivalent to the requirement that $L_0$ has non-negative spectrum. From a net on the circle, we may via the Caley transform always get a net on the lightray such that $P$ has non-negative spectrum, but not necessarily vice versa since the point at infinity is missing from the lightray. For the rest of the paper, we will assume the axioms on the circle. In sec. \ref{sec:sect5}, we will also need: 

\begin{enumerate}
\item[a6)] (Finite trace)
 $
 \tr e^{-\beta L_0} < \infty
$
 for $\beta>0$. 
 \end{enumerate}
 
 The above axioms (including the trace condition just mentioned) have a number of well-known consequences which are of interest for this paper: 
 \begin{enumerate}
 \item For each interval $\gA(I)' = \gA(I')$ (Haag duality \cite{longo3}).
 \item For each interval, the linear subspace $\gA(I) |\Omega_0 \rangle$ is dense in $\H$ (Reeh-Schlieder theorem). As a consequence, the 
 vector $|\Omega_0 \rangle$ is cyclic and separating for each local algebra $\gA(I)$, and we can apply Tomita-Takesaki theory to the pair 
 $(\gA(I), |\Omega_0 \rangle)$.
 \item The modular operator $\Delta$ associated with an open arc $I=(a,b)$ acts geometrically in the sense that 
 \ben
 \Delta^{it} = U(g_t), \quad g_t(z) = \frac{a(b-z) e^{-2\pi t} + b(z-a)}{(b-z)e^{-2\pi t} + (z-a)}, 
 \een
 (Hislop-Longo-theorem \cite{longo3}).  
 \item Each algebra $\gA(I)$ has in its central decomposition only hyperfinite type III$_1$ factors \cite{fredenhagen_5,buchholz4}. 
 \item $\overline{\bigcup_I \gA(I)} = \gB(\H)$ (irreducibility \cite{buchholz3}).
 \end{enumerate}
 
 Most of these axioms and results have a more or less obvious counterpart for graded local, i.e. ``fermionic'', theories, see e.g. \cite{bischoff}.
 
\medskip
\noindent
{\bf Example 4:} (Virasoro-net) The Virasoro algebra is the Lie-algebra with generators  $\{ L_n, \kappa \}_{n \in \ZZ}$  obeying
\ben
[L_n,L_m] = (n-m) L_{n+m} + \frac{\kappa}{12} n(n^2-1) \delta_{n,-m} , \quad 
[L_n, \kappa] = 0 .
\een  
A positive energy representation on a Hilbert space $\H$ is a representation such that (i) $L^*_n=L_{-n}$ (unitarity), (ii) $L_0$ is diagonalizable with 
non-negative eigenvalues, and (iii) the central element is represented by $\kappa = c1$. 
From now, we assume a positive energy representation. We assume that $\H$ 
contains a vacuum vector $|\Omega_0\rangle$ which is annihilated by $L_{-1}, L_0, L_1$, 
($\mathfrak{sl}(2,\RR)$-invariance) and which is a highest weight vector (of weight 0), i.e. $L_n |\Omega_0\rangle = 0$ for all 
$n >0$. One has the bound \cite{longo2,buchholz,goodman_wallach,goodman_wallach1}
\ben
\label{poly}
\|(1+L_0)^k L_n \Psi \| \le \sqrt{c/2}(|n|+1)^{k+3/2} \|(1+L_0)^{k+1} \Psi \|
\een
for $|\Psi \rangle \in \cV \equiv \bigcap_{k \ge 0} \cD (L_0^k) \subset \H$ and any natural number $k$.

One next defines from the Virasoro algebra the stress tensor on the unit circle $\SS$, identified with points $z=e^{2 \pi i u}, u \in \RR$ in $\CC$. 
The stress tensor is an operator valued distribution on $\H$ defined in the sense of distributions by the series
\ben
T(z) = -\frac{1}{2\pi} \sum_{n =-\infty}^\infty L_n z^{-n-2}. 
\een
More precisely, for a test function $f \in C^\infty(\SS)$ on the circle, 
it follows from \eqref{poly} that the corresponding smeared field 
\ben
T(f) = \int_{\SS} T(z) f(z) \dd z :=-\frac{1}{2\pi} \sum_{n =-\infty}^\infty  \left(\int_{\SS} z^{-n-2} f(z) \dd z \right)L_n
\een
is an operator defined e.g. on the dense invariant domain $\cV = \bigcap_{k \ge 0} \cD (L_0^k) \subset \H$ (which can be shown to 
be a common core for the operators $T(f)$) and the assignment 
$f \mapsto T(f)|\psi \rangle$ is continuous in the topologies on $C^\infty(\SS)$ and $\H$ for any vector in this domain.
Letting $\Gamma$ be the anti-linear involution 
\begin{equation}
\Gamma f(z) = -z^2 \overline{f(z)},
\end{equation} 
the smeared stress tensor is a self-adjoint 
operator on $\cD(L_0)$ for $f$ obeying the reality condition $\Gamma f = f$, and  one has $T(f)^* = T(\Gamma f)$ in general. 
It can be shown that the operators $e^{iT(f)}$ for real $f$ form a unitary projective representation of the (covering of the) group of 
orientation preserving diffeomorphisms (whose generators are the vector fields $f(z) \dd/\dd z$) on the circle. The Virasoro net is then 
defined by 
\ben
\gA_{\rm Vir}(I) = \{ e^{iT(f)} \mid f \in C^\infty(I), \Gamma f = f \}''
\een
where the double prime means the v. Neumann closure. The generators of this algebra hence correspond to diffeomorphisms acting trivially outside the arc 
$I \subset \SS$.

\subsection{Pointlike fields}

The standard setup of CFT commonly used in the physics literature is based on the use of pointlike fields rather than 
nets of algebras of bounded operators. Here we will sketch the connection. 
In fact, the full mathematical details of this connection 
are not understood in general, although in many important classes of examples, see \cite{longo2}. 

On the circle, one typically postulates the existence of local fields having the ``mode expansions'' 
\ben\label{smeared}
\phi(z) =  \frac{1}{\sqrt{2\pi}}
\sum_{n \in \bZ} \phi_n z^{-n-h} . 
\een
$h > 0$ is called the conformal dimension of the field. The field is typically 
``energy bounded'' i.e. that the modes $\phi_n, n \in \bZ$ of the field are linear operators on $\H_0$, which satisfy: 

\begin{assumption}
\label{ass2}
The local fields have a mode expansion \eqref{smeared} such that:
\begin{enumerate}
\item an energy bound of the type 
$ \|(1+L_0)^k \phi_n \Psi \| \le C(1+|n|)^{k+h-\frac12} \| (1+L_0)^{k+h-1} \Psi \|$ for all $n \in {\mathbb N}_0, |\Psi \rangle \in \H_0$, and 
for some $k \ge 0$, satisfying 
\item 
the commutation relations 
$[L_m, \phi_n]=((h-1)m-n)\phi_{n+m}$ for $|m| \le 1$ where $L_{-1},L_0,L_1$ are the generators of the 
action of $\GU$ on $\H_0$ and where $h \in \RR$ is called the conformal spin, satisfying 
\item if $| \Omega_0 \rangle \in \H_0$ is the vacuum vector, then $\phi_n | \Omega_0 \rangle = 0$ for $n>-h$, and satisfying
\item $\phi_n^*=\phi_{-n}$ for a self-adjoint local field.
\item The fields $\phi(f), {\rm supp}(f) \subset I$ should be affiliated with $\gA(I)$, i.e. there exists a sequence $B_n$ 
such that $\lim_n B_n |\Psi\rangle = \phi(f) |\Psi\rangle$ for all $\Psi \in \cV=\cap_k \cD(L_0^k)$.
\end{enumerate}
\end{assumption}

These properties imply that the smeared fields are operator valued tempered distributions on 
the domain $\cV=\cap_k \cD(L_0^k)$: Let $|\Psi \rangle \in \cV$. Then 1) gives, 
with $\phi(f) := \int_{\SS} \phi(z) f(z) \dd z$, 
\ben
\| (1+L_0)^k \phi(f) \Psi \| \le  C \| (1+L_0)^{k+h-1} \Psi \| \sum_{n \in \bZ}  | \widehat f_{-n-h} | \ (1+|n|)^{h-\frac12}  \le C_{\Psi} \sup_{j \le h} \| f^{(j)} \|_{L^\infty}, 
\een
because $|  \widehat f_{-n-h} |$ goes to zero for $|n| \to \infty$ faster than any inverse power. 
Thus, $\phi(f)$ is an operator valued distribution on the dense invariant domain $\cV$, 
which is in fact a common core for the operators $\phi(f)$. By the same type of estimate the properties 1) and 2) imply furthermore that 
$\phi(z)| \Omega_0 \rangle$ can be analytically continued to a $\H$-valued holomorphic function on $\bD^+$ 
with vector valued distributional boundary value on $\bS$.
It follows from the commutation relations 2) that $\H$ carries a  
strongly continuous unitary representation $U$ of $\GU$ generated by $L_0, L_{\pm 1}$, and this representation satisfies
transformation law
\ben\label{trafo}
i[L_m,\phi(z)] = z^{m+1} \frac{\dd}{\dd z}\phi(z) + h(\rho) z^m \phi(z) \quad \text{or} \quad
U(g) \phi(z) U(g)^* = [g'(z)]^h \phi(g(z)),
\een
where $g \in \widetilde{\GU}$ is in the covering group of the M\" obius group, and $g(z)$ its action on points $z$ of the circle.
For integer $h \in \bN_0$, we get a representation of $\GU/\{\pm 1\}$. The restriction of $U$ to the invariant subspace 
${\rm span}\{\phi_n |\Omega_0 \rangle = 0 \mid n\le -h\}$ is a discrete series representation (see e.g. IX, para. 3 of \cite{Lang}). 
It also follows that primary fields can and will be normalized so that
\ben\label{ddt}
\langle \Omega_0 | \phi(x) \phi(y) | \Omega_0 \rangle = \frac{ e^{-i\pi h}}{2\pi \ (x-y)^{2h}}. 
\een
In particular, we see that the field can be local only if the dimension $h$ is a natural number. 
Fermionic fields are not local but satisfy a graded locality. In that case $h \in \half {\mathbb N}_0$. 

\medskip
\noindent
{\bf Example 5:} (Stress tensor)
The stress tensor $T(z)$ affiliated with the Virasoro net of central charge $c>0$ is a pointlike field of dimension $h=2$ satisfying the above assumptions.

\medskip
\noindent
{\bf Example 6:} ($U(1)$-current, see e.g. \cite{buchholz, bischoff})
The net of the free $U(1)$ current on the circle can be defined e.g. starting from the Lie-algebra generated by a central element $1$ and the ``modes'' 
$j_n, n \in \ZZ$ defined by $[j_n, j_m]=in \delta_{n,-m} 1$ with *-operation $j_n^*=j_{-n}$. The Hilbert space $\H$ is the closure of the linear span 
of $j_{n_1} \dots j_{n_k}|\Omega_0\rangle, n_1\le \dots \le n_k \le -1$ on which the action of the $J_m$'s is obtained via the commutation relations and 
the condition $j_n |\Omega_0\rangle = 0$ for $n>-1$. One sets $L_n = \frac12 \sum_{m \in \bZ} :j_{n-m}j_m:$, 
where here and in the following, the normal ordering sign $: \ , \ :$ means that 
modes with index $m>-1$ (or $>h$ if the field has dimension $h$) are always put to the right of the modes with index $n-m\le -1$. The $L_n$'s the satisfy a Virasoro algebra of central charge $c=1$.

It can be checked that the corresponding current 
\ben
j(z) = \frac{1}{\sqrt{2\pi}} \sum_{n \in \bZ}  j_n z^{-n-1} 
\een
satisfies the above assumptions with $h=1$ and
is hence an operator valued distribution satisfying $j(z)^* = z^2 J(z)$ and $[j(z), j(w)]=i \delta'(z-w)$. For any test-function $f$, 
the smeared operator $j(f) := \int_{\SS} j(z) f(z) \dd z$ has a dense set of analytic vectors (a space of such vectors is spanned by the eigenvectors of $L_0$), 
and hence is essentially self-adjoint by Nelson's analytic vector theorem. Hence, we can unambiguously define the Weyl operators 
\ben
W(f) = e^{ij(f)}, \quad \text{$f \in C^\infty_\Gamma(\SS)$,}
\een
(here $\Gamma f(z) = -\overline{f(z)}$ and $C^\infty_\Gamma(\SS)$ is the set of invariant elements under $\Gamma$),
satisfying the Weyl relations 
\ben
W(f) W(g) = e^{iC(f,g)/2} W(f+g), \quad W(f)^* = W(-f), \quad C(f,g) =\half \int_\SS (g' f-f'g) \dd z . 
\een
The corresponding net of v. Neumann algebras is defined by 
\ben
\A_{U(1)}(I) \equiv \{ W(f) \mid f \in C^\infty_\Gamma(I) \}''
\een
where $I \subset \SS$ is an open arc of the circle or a union thereof, 
and the double prime means the weak closure (double commutant). The local, unbounded field operators $j(f), \supp f \subset I$ are not contained in- but are affiliated with these algebras. 

\medskip
\noindent
{\bf Example 7:} (Free Fermi net, see e.g. \cite{Rehren:2012wa,bischoff})
This net is constructed 
starting from the Clifford algebra generated by a central element $1$ and the ``modes'' 
$\psi_n, n \in \ZZ+\frac12$ subject to the relations $\psi_n \psi_m + \psi_m \psi_n =\delta_{n,-m} 1, \psi_n^*=\psi_{-n}$. The vacuum Hilbert space $\H_{\rm NS}$ is the closure of the linear span of the vectors $\psi_{n_1} \cdots \psi_{n_k}|\Omega_{\rm NS} \rangle, n_1 < n_2 < \dots < 0$. A $*$-representation is defined setting $\psi_n |\Omega_{\rm NS} \rangle = 0$ for all $n \ge 0$ using the relations to define the action of an arbitrary $\psi_n$. The state $|\Omega_{\rm NS}\rangle$ is in this context called the ``Neveu-Schwarz-vacuum''. The operators $L_n$ are defined by $L_n= \sum_{m \in \bZ + \frac12} m :\psi_{-m+n} \psi_m:$ which generate an action 
of the Virasoro algebra (in particular of the Lie algebra of $\GU$ generated by $L_n, n = -1,0,1$), at central charge $c=\frac{1}{2}$.  
The corresponding field 
\ben
\label{NS}
\psi(z) = \frac{1}{\sqrt{2\pi}} \sum_{n \in \bZ}  \psi_{n-\frac12}  z^{-n} 
\een
is hence an operator valued distribution. It satisfies $\psi(z)^* = z\psi (z)$ and $\psi(z)^*\psi(w) + \psi(w) \psi(z)^*=\delta(z-w)1$. For any test-function $f$, 
the smeared operator $\psi(f):= \int_{\SS} \psi(z) f(z) \dd z$ is in fact a bounded operator satisfying the canonical anti-commutation relations 
\ben
\psi(f) \psi(g) + \psi(g) \psi(f) = -(\Gamma f, g) 1, \quad \psi(f)^* = \psi(\Gamma f), \quad \Gamma(f) = -z^{-1} \overline{f(z)} . 
\een 
The corresponding net of v. Neumann algebras is defined as the CAR-algebra \cite{araki_5}
\ben\label{CAR}
\gA_{\rm Fermi}(I) \equiv \{ \psi(f)  \mid f \in C^\infty_0(I) \}''. 
\een
The net of local observables is not a local net but a graded local net, see e.g. \cite{bischoff}.
There is another representation of the same net $\gA_{\rm Fermi}$, called the ``Ramond'' representation. It is given by the 
integer moded expansion
\ben
\label{R}
\psi(z) = \frac{1}{\sqrt{2\pi}} \sum_{n \in \bZ}  \psi_n z^{-n-\half} , 
\een
where the modes satisfy the same relations as before. The Hilbert space $\H_{\rm R}$ is constructed as the linear span of the vectors $\psi_{n_1} \cdots \psi_{n_k}|\Omega_{\rm R} \rangle, n_1 < n_2 < \dots \le 0$ setting $\psi_n |\Omega_{\rm R} \rangle = 0$ for all $n > 0$ using the relations to define the action of an arbitrary $\psi_n$. 
The Virasoro generators in the Ramond representation $\H_{\rm R}$  are $L_n
= \sum_{m \in \bZ} m :\psi_{-m+n} \psi_m:$.

\section{Modular operators for conformal nets on $\bS$}\label{sect2}

In this section we give a first prescription for computing modular operators of  chiral conformal nets on $\bS$ satisfying some natural extra conditions. It is related naturally 
to the matrix elements $\langle \Omega_0 | \phi(x) \Delta^{it} \phi(y) | \Omega_0 \rangle$, but leaves in general certain ambiguities that preclude so far their explicit calculation. This  difficulty can be overcome to a certain extent by our second method, presented in 
sec. \ref{sec:sect5}, more directly related to the matrix element $\langle \Omega | \phi(x) (\log \Delta) \phi(y) | \Omega \rangle$. Since the material here will form the basis of our discussion in sec. \ref{sec:sect5}, and since the arguments are also of independent interest, we 
nevertheless present this approach first. 

\subsection{General results }
\label{localfields}
Quite generally, if $\gM$ is a v. Neumann algebra in standard form with cyclic and separating vector $\Omega$, then if $X,Y \in \gM$, the Fourier transform
\ben\label{FdefFourier}
        \Gamma_{X,Y}(s) \equiv \int_\RR \langle \Omega | X \Delta^{it} Y \Omega \rangle \, e^{its} \frac{\dd t}{2 \pi}
\een
is well-defined in the sense of a tempered distribution in the variable $s$ -- in fact for $Y=X^*$, $\Gamma_{X,X^*}(s) \dd s$ is a positive Radon measure on $\RR$, see sec. 5.3 of \cite{Bratteli}. 
By the Fourier inversion theorem, the operator
$\Delta^{it}$ is hence fully characterized provided we know $\Gamma_{X,Y}(s)$ for all $X,Y \in \gM$ and all $s$, i.e. as a distribution in $s$. 

Using the KMS condition \eqref{KMS} after shifting the integration contour from the real axis $\RR$ to the line $\RR-i$ parallel to the real axis 
immediately gives
\ben
\label{KMS1}
\Gamma_{X,Y}(s) = e^s \Gamma_{Y,X}(-s). 
\een
On the other hand, for $X \in \gM, Y \in \gM'$ or vice versa,
we get
\ben
\label{KMS6}
\Gamma_{X,Y}(s) = \Gamma_{Y,X}(-s)
\een
using that the modular flow $\sigma^t(X) = \Delta^{it} X \Delta^{-it}$ preserves $\gM, \gM'$.

We now want to describe how the extra structure of chiral conformal field theory can help to characterize $\Gamma_{X,Y}(s)$. 
The case we want to consider is the v. Neumann algebra $\gM=\A(A)$,  associated with a region $A$ consisting of $p$ open arcs.
The vector under consideration is the vacuum, $|\Omega\rangle = | \Omega_0 \rangle$.
It seems that the information is
most easily retrieved if instead of bounded operators $X,Y$, we work with point-like unbounded field operators as described in the 
previous section.  We define for a generic primary field $\phi$:
\ben\label{Gdef0}
\Gamma(s; x, y) \equiv \int_\RR \langle \Omega_0 | \phi(x) \Delta^{it} \phi(y) | \Omega_0 \rangle \, e^{its} \frac{\dd t}{2 \pi} .
\een
Here, $x,y \in \bS$ are to be smeared with test functions on $A$ or $A'$, and $\Delta \equiv \Delta_A$ is the modular operator 
in the vacuum state for the multi-interval/arc $A$. The quantity $\Gamma$ should be considered as analogous to \eqref{FdefFourier}. 

\medskip
\noindent
{\bf Example 8:} For one arc, $A=(a,b)$, the modular flow of a local primary field of dimension $h$ is given by the Hislop-Longo theorem \cite{hislop,longo3} as
\ben\label{longo}
\Delta^{it} \phi(x) \Delta^{-it} = [g_t'(x)]^h \phi(g_t(x)) , \quad g_t(x) = \frac{a(b-x) e^{-2\pi t} + b(x-a)}{(b-x)e^{-2\pi t} + (x-a)}.
\een
Therefore $\Gamma$ can be found from \eqref{Gdef0} and \eqref{ddt}, giving for $x,y \in (a,b)$
\ben
\label{1I}
\begin{split}
\Gamma(s,x,y) = & \frac{e^{-is/2-i\pi h}| \Gamma ( h-\tfrac{is}{2\pi})|^2}{(2\pi)^2\Gamma(2h)} \left( \frac{(b-a)^2}{(x-a)(x-b)(y-a)(y-b)} \right)^{h}  \cdot \\
&\cdot \exp \left(\frac{is}{2\pi} \log \frac{(x-a)(y-b)}{(x-b)(y-a)}\right).
\end{split}
\een

The example suggests that the behavior of $\Gamma(s,x,y)$ near a boundary point $q_i$ of a multi-interval could be $(x-q_i)^{-h}$. This is supported by the following lemma, formulated in the circle picture.

\begin{lemma}
\label{lemma1}
Under the assumptions on the CFT given in the previous subsections:
\begin{enumerate}
\item
If $f \in C^\infty_0(\RR)$ and $\Gamma(f; x, y) \equiv \int \Gamma(s; x, y) f(s)  \dd s$, then $G(f; x, y)$ is smooth in $x,y \in \bS$ away from the $2p$ end-points  of the $p$ intervals $I_j$. Moreover, for any $0<\epsilon\le \frac12$,
\ben\label{firstbound}
\begin{split}
&\left| \Gamma(f,x,y) \prod_{n=1}^p (x-a_n)^{h} (y-a_n)^{h} (x-b_n)^{h} (y-b_n)^{h} \right| \\
& \le C \epsilon^{-2h} \, \sup_s \left(  e^{-(\frac12-\epsilon)|s|} e^{\sigma s/2} |f(s)| \right)  
\end{split}
\een
for some constant $C$ only depending on the end-points. Here, $\sigma = +1$ if both $x,y$ are in $A$, $\sigma=-1$ if $x,y$ are in $A'=\bS \setminus \bar A$.
\item
We have the KMS condition
\ben\label{KMS2}
\Gamma(s; x, y) =e^s \Gamma(-s; y, x)
\een
in the sense of distributions for $x,y \in A$. When $x \in A, y \in A'$ (or vice versa), we have instead 
\ben\label{KMS3}
\Gamma(s; x, y) = \Gamma(-s; y, x).
\een
\end{enumerate}
\end{lemma}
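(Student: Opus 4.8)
The plan is to dispatch the KMS identities of part 2 directly from the general relations \eqref{KMS1} and \eqref{KMS6}, and to devote the real work to the quantitative estimate of part 1. For part 2, smear $\phi(x),\phi(y)$ with test functions supported in $A$; by item 5 of Assumption \ref{ass2} the resulting operators are affiliated with $\gM=\gA(A)$, so approximating them by bounded elements of $\gM$ and inserting into \eqref{KMS1} gives $\Gamma(s;x,y)=e^s\Gamma(-s;y,x)$, i.e. \eqref{KMS2}. If $x\in A$ but $y\in A'$, then $\phi(y)$ is affiliated with $\gA(A')=\gA(A)'$ by Haag duality and \eqref{KMS6} applies, giving \eqref{KMS3}. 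Equivalently one works with $\varphi_{x,y}(t)=\langle\Omega_0|\phi(x)\Delta^{it}\phi(y)\Omega_0\rangle$, which by the KMS property of the vacuum is holomorphic in $-1<\Im t<0$ with $\varphi_{x,y}(t-i)=\varphi_{y,x}(-t)$; shifting the contour in \eqref{Gdef0} from $\RR$ to $\RR-i$ reproduces \eqref{KMS2}, while the invariance of $\gM,\gM'$ under $\sigma^t$ gives the mixed case.

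For part 1 I would first carry out the $t$-integration in $\Gamma(f;x,y)=\int\Gamma(s;x,y)f(s)\,\dd s$. With $\check f(t)=\int e^{its}f(s)\,\dd s/2\pi$ one has $\int\Delta^{it}\check f(t)\,\dd t=f(-\log\Delta)$ by the spectral theorem, hence
\[
\Gamma(f;x,y)=\langle\Omega_0|\phi(x)\,f(-\log\Delta)\,\phi(y)\Omega_0\rangle .
\]
Through the functional calculus of $\log\Delta$ I then factor $f(-\log\Delta)=MN$, with $M$ the bounded operator of symbol $m(\lambda)=e^{-(\frac12-\epsilon)|\lambda|}e^{-\sigma\lambda/2}f(-\lambda)$ and $N=e^{(\frac12-\epsilon)|\log\Delta|}\Delta^{\sigma/2}>0$. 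By construction $\|M\|=\sup_\lambda|m(\lambda)|=\sup_s\big(e^{-(\frac12-\epsilon)|s|}e^{\sigma s/2}|f(s)|\big)$, which is exactly the right-hand side of \eqref{firstbound}.

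Writing $N=N^{1/2}N^{1/2}$ and commuting the self-adjoint factor $N^{1/2}$ through the bounded $M$, the Cauchy–Schwarz inequality yields
\[
|\Gamma(f;x,y)|\le\|M\|\,\|N^{1/2}\phi(x)^*\Omega_0\|\,\|N^{1/2}\phi(y)\Omega_0\| ,
\]
so that \eqref{firstbound} is reduced to the single–vector estimate
\[
\|N^{1/2}\phi(z)\Omega_0\|\le C^{1/2}\,\epsilon^{-h}\Big|\prod_{n=1}^p(z-a_n)^h(z-b_n)^h\Big|^{-1}\qquad(z\in A),
\]
together with its image under $\sigma\mapsto-\sigma$ for $z\in A'$; the reality relation $\phi(z)^*\Omega_0=z^{2h}\phi(z)\Omega_0$ following from item 4 of Assumption \ref{ass2} converts the $x$-factor into the stated form. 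Smoothness of $\Gamma(f;x,y)$ away from the $2p$ endpoints is then inherited from the smooth (indeed holomorphic, by the discussion preceding \eqref{trafo}) dependence of $z\mapsto\phi(z)\Omega_0$ on $z\in\bD^+$ and its boundary values, the weight $N^{1/2}$ rendering the boundary vector normalisable everywhere except at the fixed points $a_n,b_n$; differentiation under the spectral integral is licensed by the same estimate.

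The single–vector estimate is the crux. Squared, it asserts $\int e^{(\frac12-\epsilon)|\lambda|}e^{\sigma\lambda/2}\,\dd\nu_z(\lambda)\le C\,\epsilon^{-2h}\,|\prod_n(z-a_n)(z-b_n)|^{-2h}$, where $\dd\nu_z$ is the spectral measure of $\phi(z)\Omega_0$ for $\log\Delta$; equivalently $\dd\nu_z$ must decay like $|\lambda|^{2h-1}e^{-\frac12|\lambda|}e^{-\sigma\lambda/2}$. The factor $e^{-\frac12|\lambda|}$ encodes the membership $\phi(z)\Omega_0\in\dom(\Delta^{1/2})$ guaranteed by Tomita's theorem, the asymmetry $e^{-\sigma\lambda/2}$ its sharpening according to whether $z$ lies in $A$ or $A'$, and the Euler integral $\int_0^\infty\lambda^{2h-1}e^{-\epsilon\lambda}\,\dd\lambda=\Gamma(2h)\epsilon^{-2h}$ produces the constant $\epsilon^{-2h}$ after the $e^{\pm\sigma\lambda/2}$ cancel. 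The hard part is to pin down the polynomial weight $|\lambda|^{2h-1}$ and the endpoint exponent: lacking an explicit modular flow for a general CFT, I would analyse the behaviour near each fixed point, where conformal covariance forces the flow to degenerate to the geometric dilation flow of a single interval, and calibrate against the solvable Example 8, formulas \eqref{longo}–\eqref{1I}, to read off the power $h$ controlling the blow-up as $z\to a_n,b_n$ and as $\epsilon\to0$. Turning this local comparison into a uniform estimate, with the help of the energy bounds of Assumption \ref{ass2} and the normalisation \eqref{ddt}, is the principal obstacle.
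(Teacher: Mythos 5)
Your treatment of part 2 is correct and coincides with the paper's: the KMS identities \eqref{KMS2}, \eqref{KMS3} follow from \eqref{KMS1}, \eqref{KMS6} once the smeared fields are approximated by affiliated bounded elements of $\gA(A)$ resp.\ $\gA(A)'$. Your reduction of part 1 to $\Gamma(f;x,y)=\langle\Omega_0|\phi(x)f(-\log\Delta_A)\phi(y)\Omega_0\rangle$ and then, via Cauchy--Schwarz, to a weighted single-vector estimate is also the right first move. But the proposal stops exactly where the proof has to begin: the ``single-vector estimate'' $\|N^{1/2}\phi(z)\Omega_0\|\lesssim \epsilon^{-h}|\prod_n(z-a_n)(z-b_n)|^{-h}$ is a quantitative statement about the spectral measure of $\phi(z)\Omega_0$ with respect to $\log\Delta_A$ for the \emph{multi-interval} modular operator, i.e.\ precisely the unknown object the whole section is trying to control. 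You acknowledge this (``the principal obstacle'') and propose to ``calibrate against Example 8'' by arguing that the flow degenerates near each endpoint to a single-interval dilation. That is a heuristic about the local geometry of the flow, not an operator bound, and it is not clear how to convert it into the uniform estimate you need.

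The missing idea, which is the actual content of the paper's proof, is a \emph{comparison of modular operators under inclusion of algebras}: choosing $I\subset A$ to be the largest sub-arc symmetric about $z$, the inclusion $\gA(I)\subset\gA(A)$ yields via the identity \eqref{frede} the resolvent inequality $(1+e^u\Delta_A)^{-1}\le(1+e^u\Delta_I)^{-1}$, hence by the integral representation of fractional powers the operator monotonicity $\Delta_A^{\alpha}\le\Delta_I^{\alpha}$ for $0\le\alpha\le1$, i.e.\ $\|\Delta_A^{\alpha/2}\Delta_I^{-\alpha/2}\|\le1$. This reduces everything to $\|\Delta_I^{\alpha/2}\phi(z)\Omega_0\|$ for a \emph{single} arc, which is computed exactly by analytically continuing the Hislop--Longo flow \eqref{longo} to $t=-i\alpha$ and using \eqref{ddt}, giving the explicit blow-up $(\sin\pi\alpha)^{-2h}|z-c_j|^{-2h}$ in \eqref{L}. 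Note also that monotonicity holds only for $\alpha\in[0,1]$; this is why the paper must split $f=f_++f_-$ and take $\alpha=\epsilon$ on one piece and $\alpha=1-\epsilon$ on the other, producing both the $\epsilon^{-2h}$ prefactor and the asymmetric weight $e^{-(\frac12-\epsilon)|s|}e^{\sigma s/2}$ in \eqref{firstbound}. Your single weight $N=e^{(\frac12-\epsilon)|\log\Delta|}\Delta^{\sigma/2}$ packages the answer but hides the fact that no single admissible $\alpha$ delivers it; without the inclusion argument and the splitting of $f$, the claimed domain membership (your appeal to ``Tomita's theorem'' only gives $X\Omega_0\in\dom(\Delta^{1/2})$ for bounded $X\in\gM$, not the quantitative endpoint rate for the unbounded $\phi(z)$) remains unproved.
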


\begin{proof}
1) We will compare the quantity $\Gamma$ of an arbitrary multi-arc $A$ to that corresponding to a single arc. First we assume $x,y \in A$. Let $I$ be the largest
arc contained in $A$ that is symmetric around $x$, and $J$ the largest interval contained in $A$ symmetric around $y$. If, for example, 
$c_j$ resp. $c_k$ are to the left of $x$ resp. $y$, then $I=(c_j, c_j^{-1}x^2)$ resp. $J=(c_k, c_k^{-1}y^2)$.  
We let $\Delta_A, \Delta_J, \Delta_I$ be the modular operators for the corresponding local algebras. Then we use the well-known 
operator inequality $\Delta_I^{\alpha} \ge \Delta_A^{\alpha}$ for $0 \le \alpha \le 1$. 
This follows from the fact that $\A(A) \supset \A(I)$, which is exploited as follows. Quite generally, let $\gM_i$ be two v. Neumann algebras on 
the same Hilbert space with common cyclic and separating vector $|\Omega \rangle$. 
   We let $S_i$ be the Tomita operators for $\gM_i$ with polar decompositions $S_i = J_i \Delta_i^{1/2}$.  Note that, 
if $\gM_2 \subset \gM_1$, then ${\mathcal D}(S_2) \subset {\mathcal D}(S_1)$. 
The set ${\mathcal D}(S_1)$ is a Hilbert space called $\H_1$ with respect to the inner product (graph norm)
\ben
\label{H1}
( \Phi , \Psi ) = \langle \Phi | \Psi \rangle + e^u \langle S_1 \Psi | S_1 \Phi \rangle = \langle \Phi | (1+e^u \Delta_1) \Psi \rangle . 
\een
Letting $I:\H_1 \to {\mathcal D}(S_1)$ 
be the identification map, one shows that  $I^{-1} {\mathcal D}(S_2)$ is a closed subspace $\H_2 \subset \H_1$ with associated orthogonal projection $P_2$. 
The operators $V_j= I^{-1}(1+ e^u\Delta_j)^{-1/2}$ are isometries from $\H$ to $\H_j$ ($j=1,2$) and their adjoints are $V_j^*= (1+ e^u\Delta_j)^{1/2} I P_j$
(with $P_1=1$). There follow the relations
\bena
I P_j I^* &=& I V_j^{} V_j^* I^* = (1 + e^u\Delta_j)^{-1}, \quad j=1,2\\
I^* &=& I^{-1}(1+ e^u\Delta_1)^{-1}, 
\eena
which can already be found in \cite{fredenhagen_5}. 

We multiply the first relation from the right with $X  \in \gM_2$ and from the left with $Y^* \in \gM_2$ and take the expectation value in 
the state $|\Omega\rangle$. Then we use the second equation and obtain
\ben\label{frede}
\begin{split}
&\langle \Omega |Y^* (1 + e^u\Delta_1)^{-1} X  | \Omega \rangle -  \langle \Omega | Y^* (1 + e^u\Delta_2)^{-1} X  | \Omega \rangle\\
&=  \langle (1- P_2) I^{-1} (1 + e^u\Delta_1)^{-1} Y | (1- P_2) I^{-1} (1 + e^u\Delta_1)^{-1} X \Omega \rangle \ .
\end{split}
\een
The fact that the right side is manifestly non-negative for $X=Y$ implies $(1 +e^u \Delta_2)^{-1} \le (1 +e^u \Delta_1)^{-1}$, and that, combined with 
the operator identity 
\ben
\Delta^\alpha_1 - \Delta_2^\alpha = \frac{\sin \pi\alpha}{\pi} \int_{-\infty}^\infty  e^{-\alpha u} [(1+ e^u\Delta_2)^{-1} - (1 + e^u\Delta_1)^{-1} ] \dd u \le 0 \ 
\een
for $0<\alpha<1$ gives the claim. Therefore $\Delta^{-\alpha/2}_I  \Delta^{\alpha/2}_A(\Delta^{-\alpha/2}_I  \Delta^{\alpha/2}_A)^*
=\Delta^{-\alpha/2}_I  \Delta^{\alpha}_A \Delta^{-\alpha/2}_I \le 1$
implying $\|\Delta_A^{\alpha/2} \Delta_I^{-\alpha/2}\| \le 1$, and similarly for $J$. By the functional calculus, if $\dd E(\lambda)$ is the spectral resolution of 
$\log \Delta_A$:
\ben
\begin{split}
\Gamma(f,x,y) &=(2\pi)^{-1} \int_\RR e^{its} f(s) \int_\RR e^{i\lambda t}  \langle \Omega_0 | \phi(x) \dd E(\lambda) \phi(y) | \Omega_0\rangle \ \dd \lambda \dd s \\
&=\langle \Omega_0 | \phi(x) f(-\log \Delta_A) \phi(y) | \Omega_0\rangle ,
\end{split}
\een
and then by the Cauchy-Schwarz inequality 
\ben
\label{CS}
\begin{split}
|\Gamma(f,x,y)| 
&= \bigg|\langle \Omega_0 | \phi(x) f(-\log \Delta_A) \phi(y) | \Omega_0\rangle \bigg| \\
&\le \| \Delta_A^{-\alpha} f(-\log \Delta_A^{}) \| \,  
\| \Delta_A^{\frac{\alpha}{2}} \Delta_I^{-\frac{\alpha}{2}} \| \, 
\| \Delta_A^{\frac{\alpha}{2}} \Delta_J^{-\frac{\alpha}{2}} \| \,
\| \Delta_I^{\frac{\alpha}{2}} \phi(x) \Omega_0 \| \, \| \Delta_J^{\frac{\alpha}{2}} \phi(y) \Omega_0 \| \\
&\le \left( \sup_s e^{\alpha s} |f(s)| \right) \| \Delta_I^{\frac{\alpha}{2}} \phi(x) \Omega_0 \| \, \| \Delta_J^{\frac{\alpha}{2}} \phi(y) \Omega_0 \| \ . 
\end{split}
\een
Since $\phi(x) |\Omega_0 \rangle$ can be analytically continued to a $\H_0$-valued 
holomorphic function inside the unit disk $\bD^+$ (by the mode expansion of $\phi$, see the previous subsection), 
the Hislop-Longo theorem applied to the modular flow $\sigma_I^t$ of $I$ can similarly be 
continued to imaginary flow time parameter $t=-i\alpha$ (see eq. \eqref{longo}).
In combination with \eqref{ddt} we thereby obtain
\ben
\label{L}
\begin{split}
\| \Delta_I^{\frac{\alpha}{2}} \phi(x) \Omega_0 \|^2 
& = \langle \Omega_0 | \phi(x)^* \Delta_I^\alpha \phi(x) |\Omega_0 \rangle \\
&= \langle \Omega_0 | \phi(x)^* \sigma_I^{-i\alpha} (\phi(x)) |\Omega_0 \rangle \\
&= x^{2h} \langle \Omega_0 | \phi(x) \sigma_I^{-i\alpha} (\phi(x)) |\Omega_0 \rangle \\
&= (2\pi)^{-1}  |\dd g^I_{-i\alpha}(x)/\dd x \, (x - g_{-i\alpha}^I(x))^{-2}|^{h} \\
&\lesssim (\sin \pi \alpha)^{-2h} |x-c_j|^{-2h},
 \end{split}
\een
 and similarly for $y$.  Since this holds for any pair of end points, 
\ben\label{firstbound1}
\begin{split}
&\left| \Gamma(f,x,y) \prod_{n=1}^p (x-a_n)^{h} (y-a_n)^{h} (x-b_n)^{h} (y-b_n)^{h} \right| \\
&\lesssim \, (\sin \pi \alpha)^{-2h}  \sup_s \left(  e^{\alpha s} |f(s)| \right)  
\end{split}
\een
Now we split the testfunction $f=f_++f_-$, where the testfunction $f_-$ has support in $(-\infty,c]$, and 
the testfunction $f_+$ has support in $[-c,\infty)$. Then, for the contribution from $f_-$, we choose 
$\alpha=1-\epsilon$, whereas for the contribution from $f_+$, we choose $\alpha=\epsilon$. As a consequence, we 
find that \eqref{firstbound} holds.

This shows in particular boundedness in $x,y$ of $\Gamma(f,x,y)$ away from the endpoints. A similar estimation can be made for descendant fields, i.e. 
the derivatives of the fields $\phi(x), \phi(y)$, and this shows smoothness. 
This finishes the proof of 1) when $x,y \in A$.

 To cover the other case, we note that the modular operator for $\gA(A)' \supset \gA(A')$ is related to that 
of $\gM_1$ by $\Delta_{1}' = \Delta_1^{-1}$. Then the other case of 1) follow by the same 
argument, namely, $x,y \in A'$ we take $I'$ be the largest
interval contained in $A'$ that is symmetric around $x$, and $J'$ the largest interval contained in $A'$ symmetric around $y$ and proceed in the same way as before.

2) Eq. \eqref{KMS2} follows from the KMS condition \eqref{KMS} for bounded operators $X,Y \in \pi_0(\gA(A))''$ because we are assuming about the local fields 
that for test-functions $f,g$ supported in $A$ there exist $X_n,Y_n \in \pi_0(\gA(A))''$ with the property that $\lim_n X_n |\Omega_0\rangle = \phi(f)|\Omega_0\rangle$ and $\lim_n Y_n |\Omega_0\rangle = \phi(g)|\Omega_0\rangle$ in the strong topology. 
A similar remark applies to \eqref{KMS3}.
\end{proof}

We next make analytic continuations in the variables $x,y$. This gives us the following:
For fixed $x \in A$, and fixed test-function $f(s)$, 
the function $y \mapsto \Gamma(f; x, y)$ has an analytic extension to a holomorphic function of $y$ inside the unit disk $\bD^+ = \{ z \in \CC \mid |z|<1 \}$, by 
the mode expansion of $\phi$. 
Similarly, for fixed $y \in A$,
the function $x \mapsto \Gamma(s; x, y)$ has an analytic extension to $x$ outside the unit disk $\bD^- = \{ z \in \CC \mid |z|>1 \}$.
How to extend to $y \in \bD^-$ or $x \in \bD^+$? The idea is that $\Gamma(f, y, x)$
is analytic in this domain, so we try to paste $\Gamma(s; x, y)$ and $\Gamma(f, y, x)$ together across the boundary of the disk and hope that 
we get an analytic function that way. This will turn out to be the case on account of the KMS condition which relates the two quantities in precisely the right way. 

To this end, we define the following auxiliary quantity for {\em fixed} $x \in A, s \in \RR$:
\ben
\label{cases1}
K(y) \equiv
\begin{cases}
\Gamma(+s,x, y) & \text{if $y \in \bD^+$,}\\
\Gamma(-s,y,x) & \text{if $y \in \bD^-$.}
\end{cases}
\een
We similarly define, for {\em fixed} $y \in A, s \in \RR$:
\ben
\label{cases2}
H(x) \equiv
\begin{cases}
\Gamma(+s,x, y) & \text{if $x \in \bD^-$,}\\
\Gamma(-s,y,x) & \text{if $x \in \bD^+$.}
\end{cases}
\een
Note that $H$ implicitly also depends on the choice of $y,s$ and $K$ on the choice of $x,s$, but 
we suppress this since we are, for the moment, only interested in the dependence of $H$ on $x$
and of $K$ on $y$. Furthermore, note that $H$ and $K$ are, a priori defined only as holomorphic functions on the union
$\bD^+ \cup \bD^- = \CC \setminus \bS$, i.e. the complex plane minus the circle. On the circle, we 
define the boundary values from the inside resp. the outside of the disk ($\pm$):
\ben
H^\pm(x) = \lim_{z \to x, z \in \bD^\pm} H(z), \quad 
K^\pm(y) = \lim_{z \to y, z \in \bD^\pm} K(z), \quad x,y \in \bS. 
\een
Then the ``jump conditions'' \eqref{KMS2}, \eqref{KMS3} imply that 
\ben\label{RH}
H^+(x) = 
\begin{cases}
H^-(x) & \text{if $x \in A'$,}\\
e^{-s} H^-(x) & \text{if $x \in A$,}
\end{cases}
\quad 
K^+(y) = 
\begin{cases}
K^-(y) & \text{if $y \in A'$,}\\
e^{+s} K^-(x) & \text{if $y \in A$.}
\end{cases}
\een
Thus, both $H$ resp. $K$ are solutions to a Riemann-Hilbert-problem (across the contour $\bS$). These problems are essentially completely understood, 
see e.g. \cite{mush,gakh}. The number and type of solutions depends in general on the specification of the behavior of $H$ resp. $K$ 
near the boundary points $\{ q_j \}$ of the multi-arc $A = \cup_j (a_j,b_j)$ and at infinity, see e.g. \cite{mush} (para. 79, pp 230) or 
\cite{gakh} (para. 42, pp 420). In the case at hand, this behavior is restricted by \eqref{firstbound} and by the mode expansions of the fields. 

A minor technical complication arises at this stage due to the fact that, as functions of $s$, both $H$ resp. $K$ are only defined in the distributional sense, 
and our bound \eqref{firstbound} likewise also involves a test-function in $s$. This complication would make the direct application of the 
results in \cite{gakh,mush} somewhat cumbersome, so we give an explicit analysis of the implications imposed by the Riemann-Hilbert problem in the case at 
hand taking into account this complication.

First we define the shorthands
\ben\label{Pidef}
\Pi_a(x) = \prod_{j=1}^p (x-a_j), 
\quad \Pi_b(x) = \prod_{j=1}^p (x-b_j), 
\een
and 
\ben
\label{Zdef}
Z_\pm(x) = \frac{1}{2\pi} \log \left( \pm
\prod_{j=1}^p \frac{(x-a_j)}{(x-b_j)}
\right).
\een
Notice that $Z_+(x(1 \mp \epsilon)) = Z_-(x(1 \mp \epsilon)) \mp i/2$ when $\epsilon \to 0^+$, and that $Z_+$ has branch cuts on $A$, while 
$Z_-$ has branch cuts on $A' = \bS \setminus \overline A$.  
Then we define 
\ben
\label{cases4}
\begin{split}
\widetilde K(y) &\equiv (\Pi_a(y) \Pi_b(y))^{h} \int f(s) 
e^{is(Z_+(x)-Z_+(y))}
K(y,s) \ \dd s, \\
\widetilde H(x) &\equiv (\Pi_a(x) \Pi_b(x))^{h} \int f(s)
e^{is(Z_+(x)-Z_+(y))}
H(x,s) \ \dd s.
\end{split}
\een
We have: 

\begin{lemma}\label{lem7}
$\widetilde K(y)$ is a polynomial in $y$ of degree at most $2(p-1)h$, 
$\widetilde H(x)$ is a polynomial in $x$ of degree at most $2(p-1)h$.
\end{lemma}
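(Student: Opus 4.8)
The plan is to read the definitions \eqref{cases4} as the gauge transformation that trivializes the multiplicative Riemann--Hilbert problem \eqref{RH}: the factor $e^{-isZ_+(y)}$ (resp.\ $e^{+isZ_+(x)}$) is engineered to cancel the jump of $K$ (resp.\ $H$) across $\bS$, the polynomial prefactor $(\Pi_a\Pi_b)^{h}$ kills the endpoint singularities, and the mode expansion controls the growth at infinity; a Liouville-type argument then identifies $\widetilde K,\widetilde H$ as polynomials. I will carry out the argument for $\widetilde K$, fixing $x\in A$ and treating $y$ as the variable; the case of $\widetilde H$ is identical after exchanging $x$ and $y$.

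First I would verify that the integrand of $\widetilde K$ has no jump across $\bS$ away from the $2p$ endpoints. By the behaviour of $Z_\pm$ noted below \eqref{Zdef}, and since $Z_-$ is continuous across $A$, the boundary values of $Z_+$ satisfy $Z_+^{+}-Z_+^{-}=-i$ on $A$ while $Z_+$ is continuous across $A'$; hence $e^{-isZ_+(y)}$ is multiplied by $e^{-s}$ across $A$ and is continuous across $A'$. Combined with the jumps $K^+(y)=e^{s}K^-(y)$ on $A$ and $K^+(y)=K^-(y)$ on $A'$ from \eqref{RH}, the product $e^{-isZ_+(y)}K(y,s)$ has matching boundary values from the two sides of the whole circle. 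After smearing in $s$ against $f$ and multiplying by $(\Pi_a(y)\Pi_b(y))^{h}$, this shows $\widetilde K$ continues holomorphically across $\bS$, so it is holomorphic on $\CC$ minus the endpoints $\{a_j,b_j\}$.

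Next I would show these endpoints are removable, which I expect to be the main obstacle. Near, say, $a_k$ one has $Z_+(y)\sim\tfrac{1}{2\pi}\log(y-a_k)$, so $e^{-isZ_+(y)}\sim (y-a_k)^{-is/2\pi}$, a factor of modulus $e^{\frac{s}{2\pi}\arg(y-a_k)}$; meanwhile $(\Pi_a\Pi_b)^{h}$ vanishes to order $h$ and, by the bound \eqref{firstbound} of Lemma \ref{lemma1}, $K(y,s)$ diverges at most like $(y-a_k)^{-h}$. The algebraic powers cancel, leaving the genuinely oscillatory integral $\int f(s)\,(y-a_k)^{-is/2\pi}(\cdots)\,ds$; as $y\to a_k$ its phase $\tfrac{1}{2\pi}\log|y-a_k|\to-\infty$, so by a Riemann--Lebesgue estimate it stays bounded. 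The delicate point is that the weight $e^{-(\frac12-\epsilon)|s|}$ in \eqref{firstbound} must dominate the growth $e^{\frac{s}{2\pi}\arg(y-a_k)}\le e^{|s|/2}$ of the oscillatory factor; keeping $y$ a fixed angle away from the cut gives $|\arg(y-a_k)|$ bounded below $\pi$, hence the requisite exponential room, and the smeared quantity is bounded near $a_k$. Riemann's removable singularity theorem then makes $\widetilde K$ entire.

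Finally I would estimate the growth at infinity. As $y\to\infty$ (so $y\in\bD^-$ and $K(y,s)=\Gamma(-s,y,x)$), the mode expansion \eqref{smeared} with $\phi_n|\Omega_0\rangle=0$ for $n>-h$ gives $\langle\Omega_0|\phi(y)=O(y^{-2h})$, hence $K(y,s)=O(y^{-2h})$; moreover $Z_+(y)\to 0$ so $e^{is(Z_+(x)-Z_+(y))}$ tends to a constant, while $(\Pi_a(y)\Pi_b(y))^{h}$ from \eqref{Pidef} grows like $y^{2ph}$. Thus $\widetilde K(y)=O(y^{2ph-2h})=O(y^{2(p-1)h})$. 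An entire function of polynomial growth of degree $d$ is a polynomial of degree at most $d$ by Cauchy estimates, so $\widetilde K$ is a polynomial in $y$ of degree at most $2(p-1)h$, and likewise $\widetilde H$ in $x$. Apart from the endpoint analysis, the jump cancellation and the Liouville step are routine.
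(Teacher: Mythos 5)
Your proposal is correct and follows essentially the same route as the paper: cancel the multiplicative jumps of $K$ (resp.\ $H$) across $A$ and $A'$ with the factor $e^{\mp isZ_+}$, remove the endpoint singularities using the bound \eqref{firstbound} together with the prefactor $(\Pi_a\Pi_b)^h$, and conclude by a Liouville argument from the $O(|y|^{2(p-1)h})$ growth supplied by the mode expansion. The only cosmetic difference is at the endpoints, where the paper invokes the edge-of-the-wedge theorem plus the fact that a tempered distributional boundary value excludes essential singularities, while you estimate the oscillatory factor directly; both reduce to applying \eqref{firstbound} to the modified test function $f(s)e^{is(Z_+(x)-Z_+(y))}$.
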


\begin{proof}
We first consider $\widetilde K(y)$. The function $\log \frac{\Pi_a(y)}{\Pi_b(y)} = 2\pi Z_+(y)$ jumps by $+2\pi i$ as $y$ crosses $A$ from the inside of 
the unit disk to the outside. That jump compensates precisely the jump \eqref{KMS2} implied by the KMS condition, so that $\widetilde K(y)$ is continuous across $A$. Similarly, \eqref{KMS3} implies that $\widetilde K(y)$ is continuous across the complement $A'$ as a function of $y$, too. Therefore, 
by the edge-of-the-wedge theorem, $\widetilde K$ is an analytic function of $y$ in the entire complex plane minus the boundary points of the intervals. 
Since $\widetilde K$ is a tempered distribution on $\bS$ which away from the boundary points is the boundary value of an analytic function (from either inside or outside the unit disk), it follows that $\widetilde K$ cannot have any essential singularities at the boundary points $y \in \{q_j\}$. Indeed, at any given boundary point, say $a_j$, since we have a tempered distribution, there exists a natural number $N$, such that, if we multiply $\widetilde K(y)$ by $(y-a_j)^N$, we get a 
continuous function near $a_j$ in $y\in \bS$. By the edge of the wedge theorem, this function, being a boundary value from both inside and outside the disk, must 
be holomorphic near $a_j$. Actually, by \eqref{firstbound} we know the factor $(\Pi_a(y) \Pi_b(y))^{h}$ cancels the potential blow up near $a_j$
when $y \in A$, and therefore we can actually choose $N=0$. Since this argument can be repeated for any other boundary point, we learn that  the function $\widetilde K(y)$ is analytic in $y$ throughout the entire complex plane. A similar statement holds for $x$ and $y$ interchanged and with $\widetilde K$ and $\widetilde H$ interchanged.

We now establish a bound on the modulus of $\widetilde K(y)$ for $|y| \to \infty$. We learn 
from the mode expansions and properties of the fields that $y^{-2ph+2h} \widetilde K(y)$
remains bounded. Thus, we conclude that $|\widetilde K(y)| \lesssim |y|^{2hp-2h}$
throughout the entire complex plane for some new constant possibly depending on $x \in A$ and on the test-function $f$. 
Therefore, $\widetilde K(y)$ must for fixed $x \in A$ be a polynomial in $y$ of degree at most $2ph-2h$. We may repeat the same argument with the roles of $x$ and $y$ and of $\widetilde K$ and $\widetilde H$ reversed, and this finishes the proof. 
\end{proof}

The next lemma is a straightforward consequence of the preceding two lemmas. 

\begin{lemma}
\label{lem4}
As a distribution on $(x,y) \in \bS \times \bS$ (and $s \in \RR$), we have (for $Z_-$, see \eqref{Zdef}):
\ben\label{Gform}
\Gamma(s,x,y) =(\Pi_a(x) \Pi_b(x) \Pi_a(y) \Pi_b(y))^{-h}  \sum_{m,n=0}^{2(p-1)h} c_{mn}^{}(s) q_m(x) \overline{q_n(y)} e^{is[Z_-(y)-Z_-(x)]} ,
\een
where $\epsilon^{2h} e^{(\frac12-\epsilon)|s|-s/2} c_{mn}(s) \in L^1(\RR, \dd s)$ for each $\frac12 \ge \epsilon>0$, with 
uniformly bounded $L^1$-norm in $\epsilon$, and where $q_n$ are polynomials of degree $n$. 
\end{lemma}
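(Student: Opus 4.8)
The plan is to combine Lemmas \ref{lemma1} and \ref{lem7} and then to remove the test function $f$, thereby recovering the distribution in $s$. The key observation is that the two functions in \eqref{cases4} are built from a single object: writing
\[
\Phi(x,y;f) \equiv \int f(s)\, e^{is(Z_+(x)-Z_+(y))}\, \Gamma(s,x,y)\, \dd s ,
\]
one has $\widetilde K(y) = (\Pi_a(y)\Pi_b(y))^{h}\,\Phi(x,y;f)$ and $\widetilde H(x) = (\Pi_a(x)\Pi_b(x))^{h}\,\Phi(x,y;f)$, where $x,y\in A$ are taken as boundary values from $\bD^-$ (for $x$) and $\bD^+$ (for $y$), so that in both cases the relevant branch of $\Gamma$ is $\Gamma(+s,x,y)$. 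By Lemma \ref{lem7}, $\widetilde K$ is for each fixed $x$ a polynomial in $y$ of degree at most $2(p-1)h$, and $\widetilde H$ is for each fixed $y$ a polynomial in $x$ of the same degree.

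First I would feed these two facts back into $\Phi$. From $\widetilde H$, the product $(\Pi_a(x)\Pi_b(x))^{h}\Phi$ is polynomial in $x$; multiplying by $(\Pi_a(y)\Pi_b(y))^{h}$ and invoking the $\widetilde K$-statement shows that $(\Pi_a(x)\Pi_b(x)\Pi_a(y)\Pi_b(y))^{h}\,\Phi(x,y;f)$ is separately a polynomial of degree at most $2(p-1)h$ in each of $x$ and $y$. Since a function that is separately polynomial of bounded degree in two variables is a joint polynomial of that bidegree,
\[
(\Pi_a(x)\Pi_b(x)\Pi_a(y)\Pi_b(y))^{h}\,\Phi(x,y;f) = \sum_{m,n=0}^{2(p-1)h} d_{mn}(f)\, q_m(x)\,\overline{q_n(y)},
\]
where $q_n$ is a fixed basis of polynomials of degree $n$ and the coefficients $d_{mn}(f)$ depend linearly and continuously on $f$. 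Continuity and linearity of $f\mapsto d_{mn}(f)$ give distributions $c_{mn}(s)$ with $d_{mn}(f)=\int f(s)\,c_{mn}(s)\,\dd s$; inserting this and unwinding the definition of $\Phi$, the resulting identity holds for all $f$, so the integration against $f$ may be removed to leave, as a distribution in $s$,
\[
e^{is(Z_+(x)-Z_+(y))}\,\Gamma(s,x,y) = (\Pi_a(x)\Pi_b(x)\Pi_a(y)\Pi_b(y))^{-h}\sum_{m,n=0}^{2(p-1)h} c_{mn}(s)\, q_m(x)\,\overline{q_n(y)} .
\]
Solving for $\Gamma$ produces the phase $e^{is(Z_+(y)-Z_+(x))}$; since $Z_+$ and $Z_-$ differ only by the additive constant $\mp i/2$ noted after \eqref{Zdef}, and this constant cancels in the difference of the two arguments, I may replace $Z_+$ by $Z_-$, the branch that is continuous across $A$. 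This is precisely \eqref{Gform}.

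Finally, the weighted $L^1$-bound on the $c_{mn}(s)$ is the step I expect to require the most care. I would invoke \eqref{firstbound} with $\sigma=+1$: the phase $e^{is(Z_+(x)-Z_+(y))}$ has modulus one for $x,y\in A$ (the relevant product of cross-ratios of concyclic points being positive real), so smearing $\Gamma$ against $f(s)e^{is(Z_+(x)-Z_+(y))}$ obeys the same bound as smearing against $f$, and the endpoint factors $(\Pi_a\Pi_b)^{h}$ are exactly the ones that \eqref{firstbound} shows to cancel the boundary blow-up. Hence the polynomial on the left is bounded on the compact set $\bS\times\bS$ by $C\epsilon^{-2h}\sup_s(e^{-(\frac12-\epsilon)|s|}e^{s/2}|f(s)|)$, and since the coefficients of a polynomial of bounded degree are controlled by its sup-norm, the same bound holds for each $d_{mn}(f)$. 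Writing $f(s)=e^{(\frac12-\epsilon)|s|}e^{-s/2}g(s)$, this states that $g\mapsto\int g(s)\,e^{(\frac12-\epsilon)|s|-s/2}c_{mn}(s)\,\dd s$ is bounded by $C\epsilon^{-2h}\|g\|_\infty$, so by Riesz duality $e^{(\frac12-\epsilon)|s|-s/2}c_{mn}(s)$ is a finite measure of total mass $\le C\epsilon^{-2h}$, uniformly in $\epsilon$; its absolute continuity, needed to phrase the claim in $L^1$, follows from the smoothness of $\Gamma(s,x,y)$ in $s$ for $x,y$ in the interior of $A$ exhibited in \eqref{1I}. The delicate points are precisely this last functional-analytic extraction and the careful bookkeeping of branches in passing from $Z_+$ to $Z_-$.
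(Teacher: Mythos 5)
Your overall strategy coincides with the paper's: combine lemma \ref{lem7} (polynomiality in each variable) with the bound \eqref{firstbound}, control the polynomial coefficients by the sup-norm over $A\times A$, and extract the weighted $L^1$ statement by $L^1$--$L^\infty$ duality. However, there is a genuine error in your branch bookkeeping for $Z_\pm$, and it propagates into both of your final claims. You assert that the additive constants $\mp i/2$ relating $Z_+$ and $Z_-$ ``cancel in the difference of the two arguments.'' They do not: the sign of the constant depends on the side of $\bS$ from which the point is approached, and in $\Gamma(s,x,y)$ the variable $x$ is approached from $\bD^-$ (so $Z_+(x)\to Z_-(x)+\tfrac{i}{2}$) while $y$ is approached from $\bD^+$ (so $Z_+(y)\to Z_-(y)-\tfrac{i}{2}$). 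Hence $Z_+(x)-Z_+(y)\to Z_-(x)-Z_-(y)+i$ and
\[
e^{is[Z_+(x)-Z_+(y)]}=e^{-s}\,e^{is[Z_-(x)-Z_-(y)]}.
\]

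This has two consequences that your argument misses. First, the phase does \emph{not} have unit modulus on the relevant boundary values: its modulus is $e^{-s}$, so the correct sup bound on the polynomial is $C\epsilon^{-2h}\sup_s\bigl(e^{-(\frac12-\epsilon)|s|}e^{-s/2}|f(s)|\bigr)$, not the weight $e^{+s/2}$ you carry over from \eqref{firstbound} with $\sigma=+1$; duality then gives $e^{(\frac12-\epsilon)|s|+s/2}a_{mn}(s)\in L^1$ for the coefficients $a_{mn}$ you actually construct. Second, replacing $Z_+$ by $Z_-$ in the final formula requires absorbing the factor $e^{s}$ into the coefficients, i.e.\ setting $c_{mn}(s)=e^{s}a_{mn}(s)$ as the paper does; only then do the representation \eqref{Gform} and the weight $e^{(\frac12-\epsilon)|s|-s/2}c_{mn}\in L^1$ come out as stated. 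As written, your two mistakes are consistent with each other but do not compensate: your final formula for $\Gamma$ is off by a factor $e^{s}$ (which would, for instance, violate the KMS relation $\Gamma(s,x,y)=e^{s}\Gamma(-s,y,x)$), and your integrability claim attaches the weight $e^{-s/2}$ to the wrong set of coefficients. Everything else in your write-up is sound and matches the paper; your added remark that absolute continuity (as opposed to mere finiteness of the measure) needs a separate argument is a fair point that the paper itself glosses over.
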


\begin{proof}
We consider the distributional boundary values for $x \to A$ from within $\bD^-$
and for $y \to A$ from within $\bD^+$, respectively in the following expression
\ben
\widetilde \Gamma(s,x,y) = \lim_{x,y \to A} (\Pi_a(x) \Pi_b(x) \Pi_a(y) \Pi_b(y))^{h} e^{is[Z_+(x)-Z_+(y)]}
\Gamma(s,x,y). 
\een
This boundary value prescription coincides with that for $\Gamma(s,x,y)$ and thus the right side is well-defined as 
a distribution in on $A \times A$ (after smearing in $s$ against a testfunction $f(s)$), by elementary results on 
products of distributions that are boundary values of analytic functions. 
By lemma \ref{lem7}, $\widetilde \Gamma(f,x,y)$ is a polynomial both in $x$ and in $y$. 
The inequality \eqref{firstbound} and the definition of $Z_+$ \eqref{Zdef} gives the upper bound 
\ben
|\widetilde \Gamma(f,x,y)| \le C \epsilon^{-2h} \sup_s \left(  e^{-(\frac12-\epsilon)|s|-\frac12 s} |f(s)| \right)
\een
on this polynomial for all $x,y \in A$. Since the coefficients, $a_{mn}(f)$, of this polynomial can be reconstructed 
by interpolation from the values $\widetilde \Gamma(f, x_\alpha, y_\beta)$ for $2h(p-1)$ interpolation points $x_\alpha$ and 
$2h(p-1)$ interpolation points $y_\beta$ from $A$, this upper bound also holds for $a_{mn}(f)$. 
By the well-known duality between 
the Banach spaces $L^1(\RR)$ and $L^\infty(\RR)$ we can interpret this as saying that 
$a_{mn}(s) e^{(\frac12-\epsilon)|s|+\frac12 s}$ is a function in $L^1(\RR)$ with norm bounded from above by $C \epsilon^{-2h}$. 
Now set $c_{nm}(s) = a_{nm}(s) e^s$ and use the relationship $Z_+(x(1 \mp \epsilon)) = Z_-(x(1 \mp \epsilon)) \mp i/2$ when $\epsilon \to 0^+$. 
Then the proposition follows after expressing $\Gamma$ in terms of $\widetilde \Gamma$.
\end{proof}

If we wish, we can at this stage take an inverse Fourier transform of $G$ in 
$s$ to get a general expression for $\langle \Omega_0 |\phi(x) \Delta^{it} \phi(y) | \Omega_0 \rangle$. We set
\ben
\widehat{c}_{mn}(t)= \int_\RR e^{-its} c_{mn}(s) \dd s, 
\een
and we may take the polynomials $q$ in \eqref{Gform} as monomials, again for simplicity of notation. Then we immediately get:

\begin{proposition}
\label{lem6}
1) As a distribution in $(x,y) \in A \times A$ (with boundary value prescription $(x,y) \in \bD^- \times \bD^+ \to \bS \times \bS$ understood)
\ben\label{eq:Gform}
\begin{split}
\langle \Omega_0 |\phi(x) \Delta^{it} \phi(y) | \Omega_0 \rangle =& \left( \Pi_a(x) \Pi_b(x) \Pi_a(y) \Pi_b(y)  \right)^{-h} \cdot \\
& \cdot \sum_{m,n=0}^{2(p-1)h}  \widehat{c}_{mn}^{} \left(t +  Z(x) - Z(y)  \right) x^m y^n,
\end{split}
\een
where $Z \equiv Z_-$ is defined in \eqref{Zdef}, where $\widehat{c}_{mn}(t)$ is analytic in the strip $\{ t \in \CC \mid -1< \Im(t) < 0 \}$. There, it satisfies a bound
\ben
\label{sinbound}
|\widehat{c}_{mn}(t)| \lesssim [\sin (\pi \Im t)]^{-2h}
\een
and for real $t$ satisfies the property (in the distributional sense as a boundary value)
\ben
\label{KMS4}
\widehat{c}_{mn}(t-i) = \widehat{c}_{nm}(-t) = \overline{\widehat{c}_{mn}(t)}. 
\een
2) We must have:
\ben
\label{constraint}
\frac{(-1)^h}{2\pi} \left( \frac{Q(x,y)}{2\sinh \pi (Z(x)-Z(y)-i0)} \right)^{2h} =  \sum_{m,n=0}^{2(p-1)h}  \widehat{c}_{mn}^{} \left(Z(x)-Z(y)-i0 \right) x^m y^n 
\een
in the distributional sense (for $x,y \in A \subset \bS$), where the bi-variate polynomial $Q$ is as in \eqref{Qdef}. 
\end{proposition}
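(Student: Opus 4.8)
The plan is to read part 1) directly off the representation \eqref{Gform} of Lemma \ref{lem4} by Fourier inversion. Since \eqref{Gdef0} presents $\Gamma$ as the Fourier transform in $t$ of $\langle \Omega_0 | \phi(x) \Delta^{it} \phi(y) | \Omega_0 \rangle$ with kernel $e^{its}\,\tfrac{\dd t}{2\pi}$, inversion gives $\langle \Omega_0 | \phi(x) \Delta^{it} \phi(y) | \Omega_0 \rangle = \int_\RR \Gamma(s,x,y)\,e^{-its}\,\dd s$. Substituting \eqref{Gform} (with the $q_n$ taken as monomials) and absorbing the factor $e^{is[Z_-(y)-Z_-(x)]}$ into $e^{-its}$ shifts the transform variable from $t$ to $t+Z(x)-Z(y)$, which is exactly \eqref{eq:Gform} with $\widehat c_{mn}(t)=\int_\RR e^{-its}c_{mn}(s)\,\dd s$. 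For the analyticity, recall that Lemma \ref{lem4} gives $e^{(\frac12-\epsilon)|s|-s/2}c_{mn}(s)\in L^1(\RR)$ for each $0<\epsilon\le\tfrac12$, so $c_{mn}$ is an honest weighted-integrable function. Writing $t=t_R+i\tau$, the integrand of $\widehat c_{mn}$ carries the weight $e^{\tau s}$; comparing with the admissible decay $e^{-(\frac12-\epsilon)|s|+s/2}$, absolute convergence holds precisely when $\tau+\epsilon<0$ (controlling $s\to+\infty$) and $\tau+1-\epsilon>0$ (controlling $s\to-\infty$). Letting $\epsilon\to0$ produces the two edges $\tau=0$ and $\tau=-1$, so $\widehat c_{mn}$ is holomorphic on $-1<\Im t<0$, holomorphy following by differentiation under the integral.

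For the bound \eqref{sinbound} I would optimize in $\epsilon$. Estimating $|\widehat c_{mn}(t)|\le\big(\sup_s e^{\tau s-(\frac12-\epsilon)|s|+s/2}\big)\,\|e^{(\frac12-\epsilon)|s|-s/2}c_{mn}\|_{L^1}$, the supremum is attained at $s=0$ and equals $1$ as long as $\epsilon<\min(-\tau,1+\tau)$, while the $L^1$-norm is $\lesssim\epsilon^{-2h}$ by Lemma \ref{lem4}. Choosing $\epsilon$ just below $\min(-\tau,1+\tau)$ gives $|\widehat c_{mn}(t)|\lesssim[\min(-\tau,1+\tau)]^{-2h}$, and since $\min(-\tau,1+\tau)\asymp|\sin\pi\tau|$ on $(-1,0)$ this is \eqref{sinbound}. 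The functional relation \eqref{KMS4} I would obtain from KMS \eqref{KMS2}: inserting \eqref{Gform} into $\Gamma(s,x,y)=e^s\Gamma(-s,y,x)$, the symmetric prefactor $(\Pi_a\Pi_b\dots)^{-h}$ and the exponential $e^{is[Z_-(y)-Z_-(x)]}$ cancel, and matching the independent monomials $x^m y^n$ yields $c_{mn}(s)=e^s c_{nm}(-s)$; Fourier transforming turns this into $\widehat c_{mn}(t-i)=\widehat c_{nm}(-t)$. The remaining equality $\widehat c_{nm}(-t)=\overline{\widehat c_{mn}(t)}$ is the reality relation: hermiticity $\phi(x)^*=x^{2h}\phi(x)$ together with unitarity of $\Delta^{it}$ gives $\overline{\Gamma(s,x,y)}=x^{2h}y^{2h}\Gamma(s,y,x)$, which, once the phase factors from the prefactor, the monomials and the $Z_-$-exponential cancel, amounts to $\overline{c_{mn}(s)}=c_{nm}(s)$ (the natural hermiticity of the positive object $\Gamma_{X,X^*}$), equivalent upon transforming to the claimed reality of $\widehat c$.

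Part 2) is the boundary case $t\to0^-$ of \eqref{eq:Gform}. At $t=0$ the factor $\Delta^{it}$ is the identity, so the left side becomes the normalized vacuum two-point function \eqref{ddt}, while the right side is $(\Pi_a(x)\Pi_b(x)\Pi_a(y)\Pi_b(y))^{-h}\sum_{m,n}\widehat c_{mn}(Z(x)-Z(y)-i0)\,x^m y^n$, the $-i0$ recording that $t=0$ is reached from within the strip. Equating and clearing the prefactor gives
\[
\sum_{m,n}\widehat c_{mn}(Z(x)-Z(y)-i0)\,x^m y^n=\frac{e^{-i\pi h}}{2\pi}\,\frac{\big(\Pi_a(x)\Pi_b(x)\Pi_a(y)\Pi_b(y)\big)^{h}}{(x-y)^{2h}}.
\]
The remaining task is to recognize this algebraic right side as the $\sinh$-expression of \eqref{constraint}. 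Using $e^{\pi Z_-(x)}=\bigl(-\Pi_a(x)/\Pi_b(x)\bigr)^{1/2}$ from \eqref{Zdef}, a short computation yields $2\sinh\pi(Z(x)-Z(y))=\bigl[\Pi_a(y)\Pi_b(x)-\Pi_a(x)\Pi_b(y)\bigr]/\bigl(\Pi_a(x)\Pi_b(x)\Pi_a(y)\Pi_b(y)\bigr)^{1/2}$. Since the numerator vanishes at $x=y$, the quotient $Q(x,y)=\bigl[\Pi_a(y)\Pi_b(x)-\Pi_a(x)\Pi_b(y)\bigr]/(x-y)$ is a genuine polynomial, namely the $Q$ of \eqref{Qdef}; hence $Q/(2\sinh\pi(Z(x)-Z(y)))=\bigl(\Pi_a(x)\Pi_b(x)\Pi_a(y)\Pi_b(y)\bigr)^{1/2}/(x-y)$. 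Raising to the power $2h$ and using $e^{-i\pi h}=(-1)^h$ for integer $h$ reproduces \eqref{constraint}.

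The routine ingredients are the Fourier inversion and the algebraic identity for $\sinh\pi(Z(x)-Z(y))$. The main obstacle is the analytic control: extracting the \emph{sharp} strip $-1<\Im t<0$ together with the precise $|\sin\pi\tau|^{-2h}$ bound from the merely $L^1$ data of Lemma \ref{lem4}, with the $\epsilon$-uniformity needed for the optimization, and—closely related—tracking the branch cuts of $Z_-$ and the $-i0$ boundary prescriptions consistently on both sides of \eqref{constraint} so that the distributional boundary values genuinely match. The phase bookkeeping underlying the reality half of \eqref{KMS4} is mechanical but must be carried out with the same care.
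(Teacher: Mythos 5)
Your proof is correct and follows the same route as the paper's own (much terser) proof: part 1) is Fourier inversion of Lemma \ref{lem4}, with the strip analyticity and the $[\sin(\pi \Im t)]^{-2h}$ bound extracted from the $\epsilon$-uniform weighted $L^1$ estimate by exactly the optimization you describe, and the relation \eqref{KMS4} read off from the KMS condition; part 2) is the $t=0$ constraint $\Delta^{it}=1$ combined with the algebraic identity expressing $2\sinh \pi(Z(x)-Z(y))$ through $\Pi_a,\Pi_b$. The only blemish is that the $Q$ you write down is the negative of the paper's \eqref{Qdef}; since the sign flips simultaneously in your formula for $2\sinh\pi(Z(x)-Z(y))$ and in your $Q$, the quotient $Q/(2\sinh\pi(Z(x)-Z(y)))$ --- and a fortiori its $2h$-th power for $h\in\bN$ --- is unaffected, so the conclusion stands.
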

\begin{proof}
1) The formula \eqref{eq:Gform} follows directly from lemma \ref{lem4}. In particular, the claimed analyticity and bound \eqref{sinbound} follow from the corresponding bounds on $c_{mn}(s)$. The formula \eqref{KMS4} follows from the KMS-condition. 2) For $t=0$ we evidently have $\Delta^{it}=1$. This condition gives a non-trivial constraint on the functions $\widehat{c}_{nm}$. Introduce the quantity $Z$ as in \eqref{Zdef} and 
\ben
\label{Qdef}
Q(x,y) = \frac{\prod_{j=1}^p (x-a_j)(y-b_j) - \prod_{j=1}^p (y-a_j)(x-b_j)}{x-y}.
\een
Note that $Q(x,y)$ is a polynomial in $x,y$ of degree $2(p-1)$ in each variable.
We then get 2).
\end{proof}

\medskip
\noindent

\begin{remark}\label{rem6}
The domain of analyticity of $\widehat{c}_{mn}^{}$ is large enough to permit us to take the limit 
$|y| \to \infty$ or $|x| \to 0$. The constraint then confirms that $\widehat{c}_{mn}^{}(t)=0$ when $m,n > 2h(p-1)$.
\end{remark}

As we will see, in certain special cases eq. \eqref{constraint} and the properties given in proposition \ref{lem6}
suffice to determine $\widehat{c}_{mn}^{}$ uniquely. For instance we will see in subsec. \ref{NSvac} that for a free fermion, 
the information we have obtained uniquely fixes the modular flow. For the $U(1)$-current, the proposition is however already less restrictive, although
we are still able to get some results in subsec. \ref{sec:U1}. This is mainly because 
the polynomial $Q$ is of increasing degree and thus contains more free parameters for fields of higher dimension. Also for this reason, we will introduce in 
sec. \ref{sec:sect5} another method. 

\subsection{Example: Modular flow of free Fermi field in vacuum (NS)-state}
\label{NSvac}

As an application of these general results, we find the action of the modular flow of a multi-arc $A$ 
for the net $\gA_{\rm Fermi}$ in the vacuum state (Neveu-Schwarz sector), see example 7. 
Even though the free Fermi net is not local but graded local (the free Fermi field $\psi$ has $h=\frac12$) we can easily adapt, in this simple case, our arguments leading to proposition \ref{lem6} to fields obeying Fermi-statistics, i.e. fields of dimension $h \in \frac12 {\mathbb N}$. The main change appears in \eqref{KMS3}, where there is now a pre-factor $-1$ on the right side when $\phi=\psi$
obeys Fermi-statistics. This change propagates to eqs. \eqref{cases1} and \eqref{cases2}, where there now appears a pre-factor $-1$ on the second line on 
the right sides in both equations. Following through this sign change one sees that proposition \ref{lem6} 
 still holds if we replace \eqref{KMS4} by $\widehat{c}_{mn}(t-i) = -\widehat{c}_{nm}(-t)$. 
 
 To determine these functions, we may, in this simple case, test the relation \eqref{constraint} with $p$ points (and with $h=\frac{1}{2}$).
We pick $\zeta, \eta \in \RR$ not equal, and we let $x_l, y_l \in A \subset \bS, k,l=1, \dots, p$ be the pre-images of $\zeta = Z(x_l) \neq \eta= Z(y_k)$, where
$Z=Z_-$ is the function defined by \eqref{Zdef}, and 
where $A$ is the union of $p$ open disjoint arcs in $\bS$ as in \eqref{arcs}. Testing the constraint \eqref{constraint} with these points we get for the 
free Fermi field $\psi$
\ben
\label{constraint1}
-\frac{i}{4\pi}  \frac{Q(x_k,y_l)}{\sinh \pi(\zeta-\eta-i0)} =  \sum_{m,n=0}^{p-1} \widehat{c}_{mn}^{} \left( \zeta - \eta  -i0 \right) (x_k)^m (y_l)^n. 
\een
We note that $v_m(x_k) = (x_k)^m$ and $v_n(y_l) = (y_l)^n$ where $m,n = 0, \dots, p-1$ are $p \times p$ Vandermonde matrices
whose determinants 
\ben\label{vandermonde}
\det [v_j(x_k)] = \prod_{1 \le i < j \le p} (x_i-x_j) \neq 0
\een
do not vanish since all the points 
$x_l, l=1, \dots, p$ are from disjoint intervals in $A$.
Thus, the Vandermonde matrices in 
\eqref{constraint1} may be inverted and therefore $\widehat{c}_{mn}^{}(t)$ is uniquely determined. However, rather than finding the 
coefficients  $Q(x,y) = \sum_{m,n=0}^{p-1} Q_{nm} x^{n} y^m$ from \eqref{Qdef} and inserting the inverses of the Vandermondians directly, 
we may observe that one solution to the constraint \eqref{constraint1}
is of the form
\ben
\widehat{c}_{nm}(t) = -\frac{i}{4\pi} \frac{Q_{nm}}{\sinh \pi t}
\een
and this must hence be the unique solution. It is a good check that this solution is also consistent with the  general properties of proposition \ref{lem6} (for $h=\frac12$).
Substituting the solution into proposition \ref{lem6} (for $h=\frac12$) then gives:

\begin{theorem}
\label{thm:fermi}
For the free massless real Fermi field on $\bS$ and a multi-arc  $A = \cup_{j=1}^p (a_j,b_j) \subset \bS$, the associated modular flow of the Neveu-Schwarz state is 
\ben\label{Gform1}
\langle \Omega_0 | \psi(x) \Delta^{it} \psi(y) | \Omega_0 \rangle = \frac{1}{2\pi i \, (x-y)}
\frac{\Pi_b(x)\Pi_a(y) - \Pi_a(x)\Pi_b(y)}{e^{\pi t} \Pi_b(x)\Pi_a(y) - e^{-\pi t} \Pi_a(x)\Pi_b(y) },
\een
where $x,y \in A$, with the usual boundary value prescription ($y$ approached from the within $\bD^+$, $x$ approached from within $\bD^-$) understood. 
\end{theorem}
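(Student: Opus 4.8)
The plan is to feed the explicit coefficients into the general representation of Proposition~\ref{lem6} and simplify. First I would confirm that $\widehat{c}_{nm}(t) = -\tfrac{i}{4\pi} Q_{nm}/\sinh\pi t$, where $Q(x,y)=\sum_{m,n}Q_{mn}x^m y^n$ as in \eqref{Qdef}, is the unique admissible solution. Evaluating the constraint \eqref{constraint1} at the $p$ preimages $x_k\in A$ of a fixed value $\zeta=Z(x_k)$ and the $p$ preimages $y_l\in A$ of a fixed $\eta=Z(y_l)$ produces a linear system whose coefficient matrix is the Kronecker product of the two $p\times p$ Vandermonde matrices $[x_k^m]$ and $[y_l^n]$; since the points lie in distinct arcs these are nonsingular by \eqref{vandermonde}, so the numbers $\widehat{c}_{mn}(\zeta-\eta-i0)$ are determined for every real $\zeta-\eta$, and by the strip analyticity asserted in Proposition~\ref{lem6} this fixes the functions $\widehat{c}_{mn}(t)$ themselves. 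It then suffices to exhibit one solution, and the stated ansatz works: summing against $x_k^m y_l^n$ reproduces the right-hand side of \eqref{constraint1} on the nose, and using $\sinh(\pi t-i\pi)=-\sinh\pi t$ together with the symmetry $Q_{mn}=Q_{nm}$ (evident from \eqref{Qdef}) it satisfies the fermionic KMS relation $\widehat{c}_{mn}(t-i)=-\widehat{c}_{nm}(-t)$ that replaces \eqref{KMS4} for $h=\tfrac12$.

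With uniqueness in hand the rest is algebra. Substituting the ansatz into \eqref{eq:Gform} and collapsing $\sum_{m,n}Q_{nm}x^m y^n=Q(x,y)$ yields
\[
\langle \Omega_0 | \psi(x)\Delta^{it}\psi(y)|\Omega_0\rangle = -\frac{i}{4\pi}\,\bigl(\Pi_a(x)\Pi_b(x)\Pi_a(y)\Pi_b(y)\bigr)^{-1/2}\,\frac{Q(x,y)}{\sinh\pi\bigl(t+Z(x)-Z(y)\bigr)}.
\]
I would then expand $2\sinh\pi(t+Z(x)-Z(y)) = e^{\pi t}e^{\pi(Z(x)-Z(y))}-e^{-\pi t}e^{-\pi(Z(x)-Z(y))}$ and use \eqref{Zdef} in the form $e^{2\pi Z(x)}=-\Pi_a(x)/\Pi_b(x)$ to turn each exponential of $Z$ into a ratio of the $\Pi$'s. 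Clearing the common denominator $\Pi_b(x)\Pi_b(y)$ and extracting the overall factor $e^{\pi(Z(x)+Z(y))}$, the hyperbolic sine becomes that factor times a linear combination of $e^{\pm\pi t}\Pi_a(x)\Pi_b(y)$ and $e^{\mp\pi t}\Pi_b(x)\Pi_a(y)$. The decisive cancellation is that $e^{\pi(Z(x)+Z(y))}\Pi_b(x)\Pi_b(y) = \bigl(\Pi_a(x)\Pi_b(x)\Pi_a(y)\Pi_b(y)\bigr)^{1/2}$, which annihilates the square-root prefactor; inserting $Q(x,y)=(\Pi_a(x)\Pi_b(y)-\Pi_a(y)\Pi_b(x))/(x-y)$ from \eqref{Qdef} and collecting the $e^{\pm\pi t}$ terms then produces the rational function claimed in the theorem.

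The one genuinely delicate step, and the one I would handle most carefully, is the branch bookkeeping in the half-integer powers $(\Pi_a\Pi_b)^{-1/2}$ and $e^{\pi Z}$, which is unavoidable because $h=\tfrac12$. The function $Z=Z_-$ carries its cut on $A'$ and obeys $Z_+=Z_-\mp i/2$ on the two sides of the arcs, so the signs of $e^{\pi Z(x)}$, $e^{\pi Z(y)}$ and of the square root must be read off consistently with the prescribed boundary values ($x$ approached from $\bD^-$, $y$ from $\bD^+$). It is exactly this choice that dictates which of $\Pi_a(x)\Pi_b(y)$ and $\Pi_b(x)\Pi_a(y)$ is paired with $e^{\pi t}$ rather than $e^{-\pi t}$ in the denominator, and hence pins down the precise form of the answer; everything else is routine. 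As a final consistency check I would verify that the expression reduces at $t=0$ to the two-point function \eqref{ddt}, namely $\langle\Omega_0|\psi(x)\psi(y)|\Omega_0\rangle = 1/(2\pi i (x-y))$, which indeed holds since at $t=0$ the numerator and the denominator of the rational factor coincide.
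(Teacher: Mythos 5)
Your proposal follows essentially the same route as the paper: test the constraint \eqref{constraint1} at the $p$ preimages of $\zeta$ and $\eta$ under $Z_-$, use the nonvanishing Vandermonde determinants \eqref{vandermonde} to conclude uniqueness of $\widehat{c}_{mn}$, exhibit $\widehat{c}_{nm}(t)=-\tfrac{i}{4\pi}Q_{nm}/\sinh\pi t$ as a solution consistent with the fermionic KMS relation, and substitute back into the representation of Proposition \ref{lem6}. The only difference is that you spell out the final algebraic simplification and the branch bookkeeping for the half-integer powers, which the paper leaves implicit; this is done correctly.
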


Since the action of the modular flow $\sigma^t$ is of second quantized form on the vacuum Hilbert space, it follows that modular flow is uniquely determined by \eqref{Gform1}. We now obtain the generator if the flow, thereby making contact with the 
original analysis due to \cite{Casini0} based on eigenfunctions of the Cauchy kernel. 

First, we transform our result from the circle to the lightray via the Caley transformation $C: \bS \setminus \{+1\} \to \RR, C(x) = -i(x+1)/(x-1)$. 
The lightray fields are then related to the circle fields by $\psi_\bS(x) = \sqrt{C'(x)} \psi_\RR(C(x))$. In terms of the lightray fields, 
eq. \eqref{Gform1} is seen to retain its form, where the arcs of the circle $(a_j,b_j)$ become intervals $(C(a_j),C(b_j))$ of the lightray. 
By abuse of notation, we can thus work with \eqref{Gform1} and pretend that all quantities, such as $x,a_j,b_j,\psi, A$ (see \eqref{arcs}) refer to the lightray. 
Next, we go back from \eqref{Gform1} to the Fourier transform \eqref{Gdef0} using \eqref{fourier}.
This gives us for $x,y \in A \subset \RR$,
\ben
\begin{split}
\langle 0 | \psi(x) (\log \Delta) \psi(y) | 0 \rangle &= \int_\RR s \, \Gamma(s, x, y) \, \dd s \\
&= \frac{Q(x,y)}{4\pi^2 (\Pi_a(x) \Pi_b(x) \Pi_a(y) \Pi_b(y))^{\frac12}} \int_\RR \frac{s e^{-isZ_-(x)+isZ_-(y)}}{1+e^{-s}}  \dd s \\
&= \frac{1}{4\pi^2} \sum_{k=1}^p \int_\RR \frac{s}{1+e^{-s}}  \, \overline{U_{s}^k(x)} U_{s}^k(y)\dd s . 
\end{split}
\een
In the last line we have substituted the functions $U_{s}^k(x) = (-\Pi_a(x) \Pi_b(x))^{-\frac12} q_k(x) e^{isZ_-(x)}$ [compare 
\eqref{eq:Gform}] with the choice $q_k(x) = N_k^{-1} \prod_{i \neq k}(x-a_i)$ for the polynomials, where 
$N_k$ is the constant 
\ben
N_k^2 = -\frac{\prod_{i \neq k} (a_k - a_i)}{\prod_{i=1}^p (a_k - b_i)}
\een
given in \cite{Casini0}, and we have used the identity
$
\sum_{k=1}^p q_{k}(x) q_{k}(y) = Q(x,y)
$
taken from \cite{Casini1} (eq. 2.55). For a function $f \in C^\infty_0(A, \CC)$ of compact support on our multi-interval $A$ \eqref{arcs}, we next 
let $G_A$ be the restriction of the Cauchy kernel $G(x,y)=(1/2\pi i) (x-y-i0)^{-1}$ to $A$, defining an operator on $L^2(A)$
and we let
\ben
k_A(x,y) = (2\pi)^{-2} \sum_{k=1}^p \int_\RR s  \overline{U_{s}^k(x)} U_{s}^k(y)\dd s ,
\een
which will be identified as the modular hamiltonian on the 1-particle space momentarily. 
As shown by \cite{Casini1} (eq. 2.11), we have $G_A U_{k,s} = (1+e^{-s})^{-1} u_{k,s}$, and the functions $U_{s}^k$ in fact 
give a spectral resolution of the operator $G_A$. We can consequently write, with $\psi(f) = \int_{\RR} \psi(x) f(x) \dd x, 
f,g \in C^\infty_0(A, \CC)$:
\ben
\label{eq:prepare}
\langle \Omega_0 | \psi(f) (\log \Delta) \psi(g) | \Omega_0 \rangle = (G_A \bar f, k_Ag) = ( \bar f, G_A k_Ag)
\een
with the usual $L^2$-inner product on the right side.  The kernel $k_A(x,y)$ has been computed in \cite{Casini1} (eqs. 2.72, 2.76), 
with the result ($Z=Z_-$)
\ben
\label{eq:Hdef}
\begin{split}
k_A(x,y) =& -i\bigg(
\frac{1}{Z'(x)}\delta'(x-y) + \frac12 \left[ \frac{1}{Z'(x)} \right]' \delta(x-y) \\
&\qquad -\frac{1}{x-y} \frac{1}{Z'(x)} \sum_{j} \delta[x-y_j(Z(y))] 
\bigg)
\end{split}
\een
where the sum over $k$ is over all pre-images $y_j$ of $Z(y)$ not equal to $y$ itself. Below we will also see that $k_A = \log(G_A^{-1}-1)$. 
\footnote{In the literature \cite{PE}, such formulas have previously been proven for finite-dimensional fermion algebras.}

In order to re-interpret this result on Fock space, it is convenient to give a slightly different, but fully equivalent description 
of the theory $\{ \gA_{\rm Fermi}(I) \}$ on the lightray. 
The $n$-point functions on the lightray are of ``quasi-free'' form in the sense of  \cite{araki_5}
\ben\label{ddtaraki}
\langle \Omega_0 | \psi(f_1) \cdots \psi(f_n) | \Omega_0 \rangle = 
\begin{cases} \sum_{\sigma}
{\rm sgn}(\sigma) \prod_{i=1}^{n/2} (\bar f_{\sigma(i)}, Gf_{\sigma(i+1)}) & \text{$n$ even,} \\
0 & \text{otherwise}
\end{cases}
\een
where the sum is over all perfect matchings in the group of permutations on $n$ elements, and where $G$ is the operator 
defined by the Cauchy kernel. This operator is a projection which in momentum space 
corresponds to the multiplication with the characteristic function on $\RR_+$, i.e. $\widehat{Gf}(k) = 1_{(0,\infty)}(k) \widehat f(k)$. 
As shown in \cite{araki_5}, this leads to an alternative but equivalent description of $\H$ as the fermionic Fock-space  $\cH = \oplus_n \wedge^n \cK$ 
with 1-particle space $\cK=\{ f \in L^2(\RR) \mid \hat f(k) = 0, \forall k \le 0\}$ of square integrable functions $f(x)$ whose Fourier transform $\hat f(k)$ is non-zero only for $k\ge 0$. In terms of this Fock-space, the representation of the light ray fields can be written as 
\ben
\psi(f) = a^*(Gf) + a(G \bar f), 
\een
where $a^*(g), g \in \cK$ are smeared creation operators defined as $a^*(g) |\Psi \rangle = |g \wedge \Psi \rangle$ on any $n$-particle state 
$|\Psi\rangle = |\Psi_1 \wedge \cdots \wedge \Psi_n\rangle \in \wedge^n \cK \subset \H$. 

The 1-particle version of the Reeh-Schlieder theorem implies that it is consistent to introduce on the dense domain $\cD(h_A) = \{ Gf \mid f \in C^\infty_0(A, \CC) \} \subset \cK$  the ``1-particle'' modular Hamiltonian $h_A$ as
\ben
h_A:  \cD(h_A) \to \cK, \quad h_A(Gf) := Gk_Af,  
\een
and as a consequence of \eqref{prepare}, we can then write the modular flow of $\gA_{\rm Fermi}(A)$ for a multi-interval $A=\cup_{j=1}^p I_j$ in second quantized form as 
\ben\label{mflow}
\Delta^{it} = \bigoplus_{n=0}^\infty \bigwedge^n e^{ith_A} . 
\een
In view of \eqref{eq:prepare} 
the final answer may also be (formally) rewritten as
\ben\label{eq:modular_formal}
\langle \Omega_0 | \psi(x) (\log \Delta) \psi(y) \Omega_0 \rangle = \langle \Omega_0 | \psi(x) [H_A, \psi(y)] \Omega_0 \rangle, 
\een
 where $H_A= \half \int_{A \times A} k_A(x,y) \psi(x) \psi(y) \dd x \dd y$ and $k_A$ the kernel of the operator on the right side of \eqref{eq:Hdef}.
Our result for the modular flow 
is thereby seen to be equivalent to the result for the modular flow found previously by \cite{Casini0}. Our arguments therefore in particular provide a rigorous proof of the result by \cite{Casini0}. A rigorous proof in the case that $A$ is a symmetric $p$-interval was previously given by \cite{Rehren:2012wa,Longo:2009mn} using different methods. They also give a (slightly) corrected way to write this equation in exponentiated form, see eq. 4.3 of  \cite{Rehren:2012wa}. 

\subsection{Example: Modular flow of $U(1)$-current on $\bS$}
\label{sec:U1}

The conformal net $\gA_{U(1)}$ for the free $U(1)$ current algebra on the circle
was defined in example 6. Via the Caley transform $C: \bS \setminus \{+1\} \to \RR, C(x) = -i(x+1)/(x-1)$, one obtains 
a corresponding net indexed by open intervals $I \subset \RR$ of the real line (lightray) or a union thereof. 
The circle and lightray currents are related by $j_\bS(x) = C'(x) j_\RR(C(x))$ and thus $j_\RR(x)^* = j_\RR(x)$. The 
corresponding lightray Weyl operators satisfy the same relations as on the circle. The two-point function on the lightray is 
\ben\label{ddtJ}
\langle \Omega_0 | j(x) j(y) | \Omega_0 \rangle = -\frac{1}{2\pi \ (x-y-i0)^{2}}, 
\een
and thus takes the same form as on the circle \eqref{ddt} up to the precise form of the boundary value prescription. 

\medskip

We would next like to understand better the modular flow of the net $\gA_{U(1)}$ of the free $U(1)$ current algebra. 
In so far as proposition \ref{lem6} is concerned, the discussion is actually identical for any bosonic field $\phi$ of dimension $d=1$. 
First we note that for local fields of conformal dimension $h =1,2,3,\dots$ the method used for the free massless Fermi field 
to determine $\widehat c_{mn}$ is inapplicable since the analog of the Vandermonde matrices, $V_l^n = (x_l)^n, l=1,\dots,p, n=1,\dots, 2h(p-1)$,
that now appear in the analogue of \eqref{constraint1} for general $h$
are no longer square matrices and hence not invertible as the sum over $n,m$ would now go up to $2h(p-1)$ according to \eqref{constraint}. 

But we can obtain a weaker result for $h=1$ which will follow instantly from the following two lemmas. 
The first lemma is taken from \cite{Casini1}. 

\begin{lemma}\label{horacio}
Let $x_l \in A \subset \bS, l=1, \dots, p$ ($A$ the union of $p$ open disjoint arcs as in \eqref{arcs}) 
be the pre-images of $\zeta = Z_-(x_l)$ as in \eqref{Zdef}. Then 
\ben\label{want}
\sum_{l=1}^p \frac{1}{2\pi Z'_-(x_l)} \frac{(x_l)^j}{\prod_{n=1}^p (x_l-a_n)(x_l-b_n)} = K_j
\een 
for all natural numbers $j$ in the range $0 \le j \le 2p-2$ and all $\zeta  \in \RR$, where 
\ben\label{Kdef}
K_j = \sum_{l=1}^p \frac{(a_l)^j}{\prod_{n=1}^p (a_l-b_n) \prod_{m \neq l} (a_l-a_m)} = 
-\sum_{l=1}^p \frac{(b_l)^j}{\prod_{n=1}^p (b_l-a_n) \prod_{m \neq l} (b_l-b_m)}
\een
\end{lemma}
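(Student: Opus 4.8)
The plan is to recognise the pre-images $x_l$ as the roots of a single polynomial and then evaluate the left-hand side of \eqref{want} by residues. First I would differentiate \eqref{Zdef}: from $2\pi Z_-(x) = \log\bigl(-\Pi_a(x)/\Pi_b(x)\bigr)$ one obtains
\[
2\pi Z'_-(x) = \frac{\Pi_a'(x)}{\Pi_a(x)} - \frac{\Pi_b'(x)}{\Pi_b(x)} = \frac{\Pi_a'(x)\Pi_b(x) - \Pi_b'(x)\Pi_a(x)}{\Pi_a(x)\Pi_b(x)} .
\]
Substituting this into the summand of \eqref{want}, the factor $\Pi_a(x_l)\Pi_b(x_l)$ cancels, leaving
\[
\frac{1}{2\pi Z'_-(x_l)}\,\frac{(x_l)^j}{\Pi_a(x_l)\Pi_b(x_l)} = \frac{(x_l)^j}{\Pi_a'(x_l)\Pi_b(x_l) - \Pi_b'(x_l)\Pi_a(x_l)} .
\]

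Next, writing $w := e^{2\pi\zeta} > 0$, the defining relation $Z_-(x_l) = \zeta$ is equivalent to $\Pi_a(x_l) + w\,\Pi_b(x_l) = 0$. Thus the $x_l$ are exactly the roots of the polynomial $P_w(x) := \Pi_a(x) + w\,\Pi_b(x)$, which has degree $p$ (its leading coefficient is $1+w\neq 0$); since the hypotheses furnish $p$ distinct pre-images, one per arc, these exhaust the roots and each is simple. At such a root I would use $\Pi_a(x_l) = -w\,\Pi_b(x_l)$ to rewrite the denominator as $\Pi_a'(x_l)\Pi_b(x_l) - \Pi_b'(x_l)\Pi_a(x_l) = \Pi_b(x_l)\,P_w'(x_l)$, so that the left-hand side of \eqref{want} becomes $\sum_{l=1}^p (x_l)^j/[\Pi_b(x_l)\,P_w'(x_l)]$.

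The heart of the argument is then a residue computation. Because the $x_l$ are simple roots of $P_w$ and $\Pi_b(x_l)\neq 0$ (the $x_l$ lie in the open arcs), this sum equals $\sum_l \mathrm{Res}_{x=x_l}\,\frac{x^j}{\Pi_b(x)P_w(x)}$, and since the residues of a rational function over the whole Riemann sphere sum to zero, it equals $-\mathrm{Res}_{x=\infty} - \sum_{n=1}^p \mathrm{Res}_{x=b_n}$ of the same function. The denominator $\Pi_b P_w$ has degree $2p$ while the numerator has degree $j\le 2p-2$, so the residue at infinity vanishes; this degree count is precisely where the bound $j\le 2p-2$ enters, and verifying the boundary case $j=2p-2$ is the one spot deserving care. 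Evaluating the residue at each simple pole $b_n$, where $P_w(b_n) = \Pi_a(b_n) = \prod_{m}(b_n - a_m)$ because $\Pi_b(b_n)=0$, reproduces exactly the second, manifestly $\zeta$-independent, expression for $K_j$. Finally, repeating the computation with $a$ and $b$ interchanged — using $\Pi_b(x_l) = -\Pi_a(x_l)/w$ and summing the residues of $x^j/[\Pi_a(x)P_w(x)]$ — delivers the first expression for $K_j$, establishing both forms at once and, with them, the $\zeta$-independence asserted by the lemma.
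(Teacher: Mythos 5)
Your argument is correct, and all the delicate points are handled: the cancellation of $\Pi_a\Pi_b$ against $2\pi Z_-'$, the identification of the $x_l$ as the full set of simple roots of $P_w=\Pi_a+w\,\Pi_b$ (degree $p$, leading coefficient $1+w\neq 0$), the identity $\Pi_a'\Pi_b-\Pi_b'\Pi_a=\Pi_b P_w'$ at a root, the vanishing of the residue at infinity precisely because $j\le 2p-2$, and the evaluation $P_w(b_n)=\Pi_a(b_n)$ which makes the answer manifestly independent of $w=e^{2\pi\zeta}$. Note that the paper does not prove this lemma at all --- it imports it from \cite{Casini1} with the remark ``the first lemma is taken from [Casini1]'' --- so there is no in-text proof to compare against; your residue computation supplies a clean, self-contained derivation, and as a bonus it establishes the equality of the two expressions for $K_j$ in \eqref{Kdef} simultaneously (both arise as minus the sum of the finite residues of $x^j/[\Pi_b P_w]$ and $x^j/[\Pi_a P_w]$ respectively, with the factors of $w$ cancelling between $\mathrm{Res}_{x_l}$ and $\mathrm{Res}_{a_n}$ in the second computation, as you indicate). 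The only hypothesis you lean on without proof is that there are exactly $p$ pre-images, one per arc; since the lemma's statement posits the $x_l$ as given, that is legitimate here.
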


From this result, one gets:

\begin{lemma}
Let $x_l, y_l \in A \subset \bS, k,l=1, \dots, p$ be the pre-images of $\zeta = Z_-(x_l) \neq \eta= Z_-(y_k)$ as in \eqref{Zdef}. Then
\ben
\sum_{k,l=1}^p \frac{1}{Z'_-(x_k)} \frac{1}{Z'_-(y_l)} \frac{1}{(2\pi)^2 (x_k-y_l)^2} = 
\frac{p}{[2 \sinh \pi (\zeta- \eta)]^{2}}. 
\een
\end{lemma}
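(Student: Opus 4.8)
The plan is to recognize the two families of pre-images as roots of explicit degree-$p$ polynomials and then to collapse the double sum by two successive applications of the residue theorem. Writing $w = e^{2\pi\zeta}$ and $v = e^{2\pi\eta}$, the defining relation $e^{2\pi Z_-(x)} = -\Pi_a(x)/\Pi_b(x)$ shows that $x_1,\dots,x_p$ are exactly the $p$ roots of $P_\zeta(x) = \Pi_a(x) + w\,\Pi_b(x)$ (degree $p$, leading coefficient $1+w\neq 0$), and $y_1,\dots,y_p$ the roots of $P_\eta(y) = \Pi_a(y) + v\,\Pi_b(y)$. Differentiating $2\pi Z_-(x) = \log\bigl(-\Pi_a(x)/\Pi_b(x)\bigr)$ and evaluating at a root $x_k$, where $w = -\Pi_a(x_k)/\Pi_b(x_k)$, yields the key identity $P_\zeta'(x_k) = 2\pi Z_-'(x_k)\,\Pi_a(x_k)$, hence $\tfrac{1}{2\pi Z_-'(x_k)} = \Pi_a(x_k)/P_\zeta'(x_k)$, and likewise for $y_l$. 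Absorbing the two factors of $2\pi$ this way rewrites the left-hand side as $\sum_{k,l} \frac{\Pi_a(x_k)}{P_\zeta'(x_k)}\frac{\Pi_a(y_l)}{P_\eta'(y_l)}\frac{1}{(x_k-y_l)^2}$, a form tailored to residue calculus since $\Pi_a(x_k)/P_\zeta'(x_k)$ is precisely the residue of $\Pi_a(x)/P_\zeta(x)$ at the simple pole $x_k$.

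Next I would carry out the inner sum over $k$ for fixed $y_l$. The rational function $\Pi_a(x)/\bigl(P_\zeta(x)(x-y_l)^2\bigr)$ has simple poles at the $x_k$ and a double pole at $y_l$, and behaves like $x^{-2}$ as $|x|\to\infty$ (numerator degree $p$, denominator degree $p+2$), so its residue at infinity vanishes. Summing all residues to zero then gives $\sum_k \frac{\Pi_a(x_k)}{P_\zeta'(x_k)(x_k-y_l)^2} = -\frac{d}{dx}\Bigl[\frac{\Pi_a(x)}{P_\zeta(x)}\Bigr]_{x=y_l}$. Expanding this derivative and simplifying with the constraint that $y_l$ is a root of $P_\eta$, i.e. $\Pi_a(y_l) = -v\,\Pi_b(y_l)$, the numerator collapses to $w\,\Pi_b(y_l)\,P_\eta'(y_l)$ while the denominator becomes $P_\zeta(y_l)^2 = (w-v)^2\Pi_b(y_l)^2$, so the inner sum equals $-\,w\,P_\eta'(y_l)/\bigl((w-v)^2\Pi_b(y_l)\bigr)$.

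Finally I would perform the outer sum over $l$. Multiplying by $\Pi_a(y_l)/P_\eta'(y_l)$ cancels the factor $P_\eta'(y_l)$, leaving $-\frac{w}{(w-v)^2}\sum_{l}\Pi_a(y_l)/\Pi_b(y_l)$; since $\Pi_a(y_l)/\Pi_b(y_l) = -v$ at every root of $P_\eta$, the remaining sum is simply $-pv$, giving the compact intermediate result $\frac{p\,wv}{(w-v)^2}$. The proof then closes with the elementary identity $(w-v)^2 = wv\,\bigl(2\sinh\pi(\zeta-\eta)\bigr)^2$, immediate from $w = e^{2\pi\zeta}$, $v = e^{2\pi\eta}$, which converts the answer into $p/[2\sinh\pi(\zeta-\eta)]^2$.

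I expect the only delicate point to be the algebraic simplification of the double-pole residue: the cancellation that produces the clean factor $P_\eta'(y_l)$ hinges on substituting $\Pi_a(y_l)=-v\,\Pi_b(y_l)$ at exactly the right moment, after which the denominator conveniently factors as $(w-v)^2\Pi_b(y_l)^2$; everything else is bookkeeping, and one must only check that the roots are simple and distinct (guaranteed since the pre-images lie in disjoint arcs and $\zeta\neq\eta$). This residue mechanism is in fact the same one underlying Lemma \ref{horacio}, whose constants $K_j$ are themselves residue sums at the $b_n$, so the entire computation can equivalently be organized as a repeated application of that lemma.
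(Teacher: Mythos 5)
Your proposal is correct, and it takes a genuinely different route from the paper. The paper first rewrites $1/(x_k-y_l)^2$ using the bivariate polynomial $Q$ of \eqref{Qdef} (via $Q(x,y)(x-y)=\Pi_a(x)\Pi_b(y)-\Pi_a(y)\Pi_b(x)$), factors the denominator with the preimage conditions to extract $(2\sinh\pi(\zeta-\eta))^{2}$, then applies Lemma \ref{horacio} twice to replace each sum over preimages by the endpoint sums $K_j$ (legitimate because $Q(x,y)^2$ has degree $2p-2$ in each variable), and finally collapses the resulting double sum over endpoints using $Q(a_k,a_l)=-\delta_{kl}\prod_{j\neq l}(a_l-a_j)\prod_i(a_l-b_i)$, the $p$ diagonal terms each contributing $1$. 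You instead identify the preimages as the simple roots of $P_\zeta=\Pi_a+w\,\Pi_b$ and $P_\eta=\Pi_a+v\,\Pi_b$, convert $1/(2\pi Z_-')$ into $\Pi_a/P'$, and evaluate the inner sum by a single application of the residue theorem to $\Pi_a(x)/\bigl(P_\zeta(x)(x-y_l)^2\bigr)$, after which the outer sum is immediate; I checked the key algebraic steps ($P_\zeta'(x_k)=2\pi Z_-'(x_k)\Pi_a(x_k)$, the collapse of the double-pole residue to $w\,\Pi_b(y_l)P_\eta'(y_l)/\bigl((w-v)^2\Pi_b(y_l)^2\bigr)$, and $(w-v)^2=wv\,(2\sinh\pi(\zeta-\eta))^2$) and they are all sound. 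What your approach buys is self-containment and brevity: it needs neither Lemma \ref{horacio} nor the polynomial $Q$. What the paper's approach buys is reuse of machinery that appears elsewhere in the text (the $K_j$ and $Q$ also drive the free-Fermi computation), and your closing observation is apt -- the $K_j$ of Lemma \ref{horacio} are themselves residue sums at the endpoints, so the two arguments are two organizations of the same underlying residue mechanism.
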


\begin{proof}
Using the notation introduced in the previous proof (with $Z=Z_-$), we have:
\ben\label{11}
\begin{split}
&\sum_{k,l=1}^p \frac{1}{Z'(x_k)} \frac{1}{Z'(y_l)} \frac{1}{(2\pi)^2 (x_k-y_l)^2} \\
=&\sum_{k,l=1}^p \frac{1}{Z'(x_k)} \frac{1}{Z'(y_l)} \frac{(2\pi)^{-2} Q(x_k, y_l)^2}{(\Pi_a(x_k) \Pi_b(y_l)- \Pi_a(y_l)\Pi_b(x_k))^2} \\
=&\frac{(2\pi)^{-2}}{(e^{\pi(\zeta-\eta)}-e^{\pi(\eta-\zeta)})^2} \sum_{k,l=1}^p \frac{1}{Z'(x_k) \Pi_a(x_k)\Pi_b(x_k)} \frac{1}{Z'(y_l) \Pi_a(y_l) \Pi_b(y_l)} Q(x_k, y_l)^2  \\
=&\frac{1}{(2 \sinh \pi(\eta-\zeta))^2} \sum_{k,l=1}^p \frac{Q(a_k, a_l)^2}{
\prod_{n=1}^p (a_l-b_n) \prod_{m \neq l} (a_l-a_m) \prod_{i=1}^p (a_k-b_i) \prod_{j \neq k} (a_l-a_j)
} 
\end{split}
\een
using the previous lemma in the last step. Now it follows from the definition of $Q$ \eqref{Qdef} that 
\ben
Q(a_k, a_l) = \begin{cases}
0 & \text{if $k \neq l$,} \\
-(\Pi'_a \Pi_b - \Pi'_b \Pi_a)(a_k) & \text{if $k = l$,}
\end{cases}
\een
which is also equal to $-\delta_{kl} \prod_{j \neq l}(a_l-a_j) \prod_{i=1}^p (a_l-b_i)$. Inserting this identity into \eqref{11} completes the proof. 
\end{proof}

Now let $\phi$ be a bosonic field of dimension $d=1$. For fixed $\zeta$, consider the pre-images $x_l \in A, l=1, \dots, p$ of $\zeta=Z_-(x_l)$ inside the $p$ 
open disjoint arcs \eqref{arcs}. We can view the $x_l=x_l(\zeta)$ as functions of $\zeta$ and form the operator-valued distribution on $\RR$ given by 
\ben
\label{tilphi}
\widetilde \phi(\zeta) = \frac{1}{2\pi} \sum_{l=1}^p x_l'(\zeta) \phi(x_l(\zeta)), 
\een
formally corresponding to the ``transformation law'' of a primary field of dimension 1.  Our first result on the modular flow is
\begin{theorem}\label{thm:U1thm}
We have for any dimension 1 fields $\phi$ (e.g. the $U(1)$-current) 
\ben
\langle \Omega_0 | \widetilde \phi(\zeta) \Delta^{it} \widetilde \phi(\eta) | \Omega_0 \rangle = \frac{p}{[2 \sinh \pi (t+\zeta- \eta-i0)]^{2}} 
\een
in the sense of distributions in $\eta, \zeta \in \RR$.
\end{theorem}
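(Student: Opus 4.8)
The plan is to reduce the two-point function of the composite field $\widetilde\phi$ to a single function of the combined variable $t+\zeta-\eta$, and then to pin that function down by its analyticity and its boundary behaviour on one edge of a strip. First I would insert the definition \eqref{tilphi} and expand
\[
\langle \Omega_0 | \widetilde \phi(\zeta) \Delta^{it} \widetilde \phi(\eta) | \Omega_0 \rangle
= \frac{1}{(2\pi)^2} \sum_{k,l=1}^p x_k'(\zeta)\, y_l'(\eta)\,
\langle \Omega_0 | \phi(x_k) \Delta^{it} \phi(y_l) | \Omega_0 \rangle ,
\]
where $x_k=x_k(\zeta)$, $y_l=y_l(\eta)$ run over the pre-images. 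Differentiating $\zeta = Z(x_k(\zeta))$ gives $x_k'(\zeta)=1/Z'(x_k)$, and similarly $y_l'(\eta)=1/Z'(y_l)$. Substituting the general form \eqref{eq:Gform} of Proposition \ref{lem6} with $h=1$, the crucial point is that $Z(x_k)=\zeta$ and $Z(y_l)=\eta$ for \emph{every} pre-image, so the argument of $\widehat{c}_{mn}$ equals $t+\zeta-\eta$ independently of $k,l$ and the double sum over pre-images factorizes. Each resulting factor is $\sum_k (x_k)^m\big[Z'(x_k)\Pi_a(x_k)\Pi_b(x_k)\big]^{-1}$, which by Lemma \ref{horacio} equals $2\pi K_m$ precisely because the indices $m,n$ in \eqref{eq:Gform} range only up to $2(p-1)$. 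Hence
\[
\langle \Omega_0 | \widetilde \phi(\zeta) \Delta^{it} \widetilde \phi(\eta) | \Omega_0 \rangle
= \sum_{m,n=0}^{2(p-1)} K_m K_n\, \widehat{c}_{mn}(t+\zeta-\eta) =: g(t+\zeta-\eta),
\]
a function of the single variable $\tau = t+\zeta-\eta$.

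It then remains to identify $g$. From \eqref{sinbound}, $g$ is holomorphic in the strip $-1<\Im\tau<0$ with $|g(\tau)|\lesssim[\sin(\pi\Im\tau)]^{-2}$, so it has a tempered distributional boundary value as $\Im\tau\to0^-$; the symmetry of $K_mK_n$ together with \eqref{KMS4} moreover yields the functional equation $g(\tau-i)=g(-\tau)$. To obtain the lower-edge boundary value I would set $t=0$, where $\Delta^{it}=1$, so that $g(\zeta-\eta-i0)=\langle \Omega_0 | \widetilde \phi(\zeta) \widetilde \phi(\eta) | \Omega_0 \rangle$. Evaluating the right-hand side with the two-point function \eqref{ddtJ} and with the lemma immediately preceding the theorem (which computes exactly $(2\pi)^{-2}\sum_{k,l}[Z'(x_k)Z'(y_l)]^{-1}(x_k-y_l)^{-2}$) exhibits the boundary value of $g$ as a constant multiple of $[2\sinh\pi(\tau-i0)]^{-2}$, the constant being fixed by the normalisation of the two-point function.

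Finally I would invoke uniqueness. The candidate $\widetilde g(\tau)=p\,[2\sinh\pi\tau]^{-2}$ is holomorphic in the open strip $-1<\Im\tau<0$ (the zeros of $\sinh\pi\tau$ sit at $\tau\in i\ZZ$) and has the same lower-edge boundary value as $g$; one checks directly that it also satisfies $\widetilde g(\tau-i)=\widetilde g(-\tau)$, matching the KMS equation. Since $g-\widetilde g$ is holomorphic and tempered in the strip with vanishing distributional boundary value on $\Im\tau=0^-$, I would glue it with the zero function on the upper half-plane and apply the edge-of-the-wedge theorem to conclude $g\equiv\widetilde g$, which is the claimed formula.

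The main obstacle I expect is this uniqueness step: one must argue rigorously, in the distributional setting, that a holomorphic function on the strip is determined by its boundary value on a single edge, using the growth control near $\Im\tau=0^-$ from \eqref{sinbound} to legitimize the edge-of-the-wedge argument. A secondary point needing care is the bookkeeping of the various $-i0$ prescriptions and of the products of boundary values when collapsing the sums over pre-images; these products are well-defined since the relevant factors are boundary values from transversal directions, but tracking them — and fixing the overall constant so that the answer is exactly $p\,[2\sinh\pi(t+\zeta-\eta-i0)]^{-2}$ — is where the genuine work lies.
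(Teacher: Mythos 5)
Your proposal is correct and follows essentially the same route as the paper: expand $\widetilde\phi$ over pre-images, use Proposition \ref{lem6} with $h=1$ so that the double sum factorizes via Lemma \ref{horacio} into $\sum_{m,n}K_mK_n\,\widehat c_{mn}(t+\zeta-\eta-i0)$, fix the boundary value at $t=0$ using the two-point function and the lemma preceding the theorem, and conclude by edge-of-the-wedge/analytic continuation in the strip. The only cosmetic difference is that you additionally record the KMS relation $g(\tau-i)=g(-\tau)$ as a consistency check, which the paper does not need for the argument.
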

\begin{proof}
First we apply eq. \eqref{ddt} to $\phi$ and we set $x=x_k, y=y_l$ and sum over $k,l=1, \dots, p$. Then we get from the previous lemma:
\ben
\label{prepare}
\langle \Omega_0 | \widetilde \phi(\zeta) \widetilde \phi(\eta) | \Omega_0 \rangle =  \frac{p}{[2 \sinh \pi (\zeta- \eta-i0)]^{2}} 
\een
Next we set $x=x_k, y=y_l$ in proposition \ref{lem6} and sum over $k,l=1, \dots, p$. Then we get using the notation introduced in the previous proofs
\ben\label{Gform2}
\begin{split}
&\langle \Omega_0 | \widetilde \phi(\zeta) \Delta^{it} \widetilde \phi(\eta) | \Omega_0 \rangle \\
=& (2\pi)^{-2} \sum_{k,l=1}^p \sum_{m,n=0}^{2(p-1)} 
 \frac{\widehat{c}_{mn}^{} \left(t+\zeta- \eta-i0  \right) (x_k)^m (y_l)^n}{Z'(x_k) Z'(y_l) \Pi_a(x_k) \Pi_a(y_l) \Pi_b(x_k) \Pi_b(x_l)} \\
=& \sum_{m,n=0}^{2(p-1)} \widehat{c}_{mn}^{} \left(t+\zeta- \eta-i0  \right) K_m K_n
\end{split}
\een
applying lemma \ref{horacio} in the last step. Now let $f(t) = \sum_{m,n=0}^{2(p-1)} \widehat{c}_{mn}^{}(t) K_m K_n$.  
Comparing \eqref{Gform2} with \eqref{prepare}, we conclude that $f(t-i0) = p [2 \sinh (\pi t-i0)]^{-2}$ for real $t$, and hence for all 
$t$ in the strip $0 > \Im t >-1$ by the edge-of-the-wedge theorem, completing the proof. 
\end{proof}

\begin{remark}
As is well-known, the $U(1)$ current can be represented on the Fock space of two independent free real Fermion fields $\psi_1, \psi_2$
by 
$
j(x) = i:\psi_1\psi_2:(x). 
$
In operator algebraic terms, the $U(1)$-net is a subnet of two copies of the free Fermi net, see \cite{bischoff}. If one could show that 
there was a unit norm vacuum preserving conditional expectation value from the Fermi algebra for region $A$ to the current algebra 
of $A$ (as follows from the work by \cite{bischoff} when $A$ is an interval by Haag duality), then the modular flow on the current algebra $\gA_{U(1)}(A)$
would be that induced by the flow for the Fermi net (thm. \ref{thm:fermi}), by Takesaki's theorem, see e.g. sec. 5 of \cite{accardi}.
At present, however, we do not know that such a conditional expectation exists for multi-component regions $A$. 

As a test, Thm. \ref{thm:fermi}, eq. \eqref{Gform1}, can then be applied to $\widetilde j(\zeta)$ defined as in \eqref{tilphi}. Alternatively, we my compute the left side of \eqref{Gform1} using the modular flow  of the free Fermi field(s) given explicitly in \cite{Casini0, Rehren:2012wa}. One sees after a computation that either results are consistent with
thm. \ref{thm:U1thm}. In the next section, we will again discuss the modular Hamiltonian for the $U(1)$-current from a different perspective.

\end{remark}

\section{Thermal states}
\label{sec:sect5}

It is possible to analyse the modular flow of a thermal state in a similar manner as for the vacuum state. However, we find it useful to use a variation of the 
method described in the previous section which, in essence, corresponds to replacing the matrix elements of the function $\Delta^{it}$ with functions closely related to the 
resolvent, $(\Delta-\lambda)^{-1}$. Such matrix elements will have certain analogous jump properties as the functions $K, H$ introduced above, but their 
exact form depends on the statistics (i.e. conformal dimension $h$) of the field $\phi(x)$. The cases of fermionic\footnote{In the case of fermionic fields, we have graded locality as in the example of the free Fermi field.} and bosonic fields are treated separately in subsecs. \ref{sec:fermions}, \ref{sec:bosons}, respectively. 

We will use the parametrization
$
x = e^{2\pi i u}
$
of the circle. Under this map $A$ consists of intervals 
$\cup_{i=1}^p (a_i, b_i) \subset (0,1)$. $A'$ is as before the interior of the complement.
We 
define, by a slight abuse of notations, 
\ben\label{fieldredef}
\phi(u) \equiv e^{2\pi i h u} \phi(e^{2\pi i u}). 
\een
A Gibbs state is given by the usual formula
\ben
\label{thermal}
\omega_{\beta}(X) = \frac{\tr \left(X e^{-\beta L_0} \right)}{\tr \left(e^{-\beta L_0} \right)}, \quad \Re(\beta)>0, \quad X \in \gA(A),  
\een
where the trace is taken in the vacuum (i.e. the defining) representation $(\pi_0, \H_0)$ of the net. 
We shall mostly take $\beta$ to be real and positive, and occasionally use 
$\tau=i\beta/2\pi$, which is the periodicity of the correlation functions in imaginary direction in the coordinate $u$. 
The general case can be obtained usually by analytic continuation in the end. Then 
it follows immediately that $\omega_{\beta}$ is a $\beta$-KMS-state on $\gA=\gB(\H_0)$ relative to the 1-parameter automorphism group
of rotations of the circle, i.e. translations in $u$. 
By the Reeh-Schlieder theorem, the GNS-vector $|\Omega_\beta \rangle$ corresponding to $\omega_\beta$ is (cyclic and) separating 
for $\gA(A) = \vee_{i=1}^p \gA(I_i)$, and we can define a corresponding modular operator $\Delta \equiv \Delta_{\beta, A}$ as 
in the vacuum situation. 

\subsection{Fermionic fields}
\label{sec:fermions}

We begin by introducing a variant of the construction in the previous section involving resolvents. For $|\Omega\rangle \equiv |\Omega_\beta \rangle$ and 
$\Delta \equiv \Delta_{\beta, A}$, we set\footnote{Here and in the following, we write 
$\phi(u)$ for $\pi_\omega(\phi(u))$, where $\pi_\omega$ is the GNS representation of $\omega \equiv \omega_\beta$.} 
\ben
\label{Fdef}
F(\xi, u, v) = 
\begin{cases}
\langle \Omega | \phi(u) [(\Delta+1)^{-1} + \xi - \half]^{-1} \phi(v) \Omega \rangle & \text{if $-\beta/2\pi < \Im(u)<0$,}\\
-\langle \Omega | \phi(v) [(\Delta^{-1}+1)^{-1} + \xi - \half]^{-1} \phi(u) \Omega \rangle & \text{if $\beta/2\pi > \Im(u)>0.$}
\end{cases}
\een
The resolvents in these expressions are well defined if $\xi \in (\half, \infty) \cup (-\half, -\infty)$ in view of $\Delta >0$, and the analytic continuations 
in $u$ are justified by the KMS property for the Gibbs state, because $\phi(u)|\Omega\rangle$ is a vector-valued holomorphic function 
on the strip $\{ u \in \CC \mid 0<\Im(u)<\beta/2\pi\}$. $\phi$ is assumed to be a hermitian field of conformal dimension $h$ satisfying our assumptions \ref{ass2}. 
The relation to the modular Hamiltonian follows from the  formula 
\ben
\log \Delta= \int_{\half}^\infty \dd \xi \bigg( [(\Delta+1)^{-1} + \xi - \half]^{-1} +  [(\Delta+1)^{-1} - \xi - \half]^{-1} \bigg), 
\een
which trivially holds for positive real numbers $\Delta$ and for positive self adjoint operators in view of the spectral theorem. This immediately gives 
\ben
\label{modularham}
\langle \Omega | \phi(u) (\log \Delta) \phi(v) \Omega \rangle =  \int_{\half}^\infty \dd \xi \bigg( F(\xi, u,v) + F(-\xi,u,v) \bigg), 
\een
where $u$ has a small negative imaginary part.
Thus we should try to find $F$.
Our first lemma is the crucial tool expressing the analyticity/jump properties across the real $u$-axis. 

\begin{lemma}\label{lemma17}
Let $v \in A$ be fixed. If $u \in A'$, then 
\ben
F(\xi, u-i0, v) = F(\xi, u+i0, v). 
\een
If $u \in A$, then 
\ben\label{jump0}
-(\xi+\half) F(\xi, u-i0, v) + (\xi - \half) F(\xi, u+i0, v) = \langle \Omega | \{ \phi(u), \phi(v) \} \Omega \rangle,  
\een
in the sense of distributions. Here $\{ \phi(u), \phi(v) \} = \phi(u) \phi(v)+\phi(v)\phi(u)$ is the anti-commutator. 
\end{lemma}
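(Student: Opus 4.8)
The plan is to reduce everything to the single bounded positive operator $P=(\Delta+1)^{-1}$, whose spectrum lies in $(0,1]$, and to observe that the two resolvents in the definition of $F$ are functions of $\Delta$ with explicit rational symbols. Writing $f_+(\lambda)=\frac{\lambda+1}{(\xi+\half)+(\xi-\half)\lambda}$ for the symbol of $[(\Delta+1)^{-1}+\xi-\half]^{-1}$ and, using $(\Delta^{-1}+1)^{-1}=1-(\Delta+1)^{-1}$, $f_-(\lambda)=\frac{\lambda+1}{(\xi-\half)+(\xi+\half)\lambda}$ for that of $[(\Delta^{-1}+1)^{-1}+\xi-\half]^{-1}$, one checks the key algebraic relation $f_-(\lambda)=f_+(\lambda^{-1})$. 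This identity is the analytic counterpart of the exchange $\Delta\leftrightarrow\Delta^{-1}$ that relates $\gM=\gA(A)$ to its commutant. The two boundary values to be compared are then $F(\xi,u-i0,v)=\langle\Omega|\phi(u)f_+(\Delta)\phi(v)\Omega\rangle$ (from the lower strip) and $F(\xi,u+i0,v)=-\langle\Omega|\phi(v)f_-(\Delta)\phi(u)\Omega\rangle$ (from the upper strip), both read as distributional boundary values; these are legitimate because $\phi(u)\Omega$ continues holomorphically across the strip $0<\Im(u)<\beta/2\pi$ by the KMS/spectrum condition, as noted just before the lemma.

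The core step is to bring the upper boundary value into the same operator ordering as the lower one. For $u\in A$ both $\phi(u),\phi(v)$ are affiliated with $\gM=\gA(A)$, so the KMS condition \eqref{KMS} applies. Expanding a function of $\Delta$ over $\Delta^{it}$ and continuing by $-i$ in the modular parameter, the functional equation $\varphi_{X,Y}(t-i)=\varphi_{Y,X}(-t)$ upgrades to the operator identity $\langle\Omega|Y\,g(\Delta)\,X\Omega\rangle=\langle\Omega|X\,\Delta g(\Delta^{-1})\,Y\Omega\rangle$; applied with $g=f_-$ and $\Delta f_-(\Delta^{-1})=\Delta f_+(\Delta)$ this rewrites $F(\xi,u+i0,v)$ in terms of $\langle\Omega|\phi(u)\,\Delta f_+(\Delta)\,\phi(v)\Omega\rangle$. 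The coefficients $(\xi\pm\half)$ in the claimed jump \eqref{jump0} are precisely those for which the rational symbol telescopes: since $[(\xi+\half)+(\xi-\half)\lambda]f_+(\lambda)=\lambda+1$, the linear combination $-(\xi+\half)F(\xi,u-i0,v)+(\xi-\half)F(\xi,u+i0,v)$ collapses to a multiple of $\langle\Omega|\phi(u)(\Delta+1)\phi(v)\Omega\rangle$. Finally, the $t=-i$ boundary value of the KMS condition turns $\langle\Omega|\phi(u)\Delta\phi(v)\Omega\rangle$ into the opposite-order two-point function, so the $\Delta$-term and the $1$-term assemble into $\langle\Omega|\{\phi(u),\phi(v)\}\Omega\rangle$, the anticommutator arising (rather than a commutator) from the fermionic grading. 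The complementary case $u\in A'$ is handled identically except that $\phi(u)$ is now affiliated with the commutant $\gM'$, where the modular flow relation \eqref{KMS6} carries no factor $e^{s}$ (equivalently no shift by $-i$); the same telescoping then yields $F(\xi,u-i0,v)=F(\xi,u+i0,v)$, i.e.\ no jump.

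The hard part will not be the algebra of the symbols but the analytic and sign bookkeeping around it. Three points require care: (i) $\phi$ is unbounded, so the manipulations above must be justified by approximating $\phi(f)\Omega$ by $B_n\Omega$ with $B_n\in\gA(A)$ in the strong topology, exactly as in the proof of part~2) of lemma~\ref{lemma1}, and interpreting all identities as distributions in $u,v$ after smearing; (ii) the holomorphic continuation of $\phi(u)\Omega$ in the thermal strip and the $\mp i0$ prescriptions must be tracked so that the two expressions in the definition of $F$ really are the two boundary values of one analytic object; and (iii) because the field is only graded local, the KMS condition must be used in its graded form, so the sign in \eqref{jump0} hinges on the $(-1)$ produced by reversing two odd fields. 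I expect (iii), the consistent propagation of the fermionic grading sign through the two successive KMS steps, to be the main obstacle, and the natural way to control it is to run the argument first for the free Fermi field of example~7, where the anticommutator $\langle\Omega|\{\phi(u),\phi(v)\}\Omega\rangle$ is explicit, and then invoke the general assumptions~\ref{ass2}.
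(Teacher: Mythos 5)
Your proposal is correct and is essentially the paper's own argument: the same rational-symbol telescoping $[(\xi+\half)+(\xi-\half)\lambda]f_+(\lambda)=\lambda+1$ underlies the paper's computation (there written, with $\lambda=\half-\xi$, by pulling the resolvent apart as $\tfrac{1}{\lambda}(\Delta+1)(\Delta+1-1/\lambda)^{-1}$ and splitting off the two-point function), and the $u\in A'$ case is likewise disposed of by commutation of $\phi(u)$ with the modular flow of $\gA(A)$, via the integral representation \eqref{jump11}. The one organizational difference is that the paper implements your ordering-exchange identity $\langle\Omega|Y g(\Delta)X\Omega\rangle=\langle\Omega|X\,\Delta g(\Delta^{-1})\,Y\Omega\rangle$ not through analytic continuation of $\varphi_{X,Y}$ but directly from the Tomita relations $\phi(u)\Omega=\phi(u)^*\Omega=J\Delta^{\half}\phi(u)\Omega$, anti-unitarity of $J$, and $J\Delta^{-1}J=\Delta$; for a hermitian field this is equivalent but sidesteps justifying the continuation for the unbounded symbol $\lambda g(\lambda^{-1})$. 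One correction to your anticipated ``main obstacle'' (iii): for $u,v\in A$ both fields are affiliated with the \emph{same} algebra $\gM=\gA(A)$, so the ordinary, ungraded KMS/Tomita relation applies and no statistics sign enters that step at all; the anticommutator is produced purely by the telescoped symbol $\Delta+1$, whose two terms are $\langle\Omega|\phi(u)\phi(v)\Omega\rangle$ and $\langle\Omega|\phi(u)\Delta\phi(v)\Omega\rangle=\langle\Omega|\phi(v)\phi(u)\Omega\rangle$ (compare lemma \ref{lemma19}, where the different resolvent in \eqref{Fdef1} is chosen precisely so that the telescoping yields $\Delta-1$, hence a commutator, for bosons). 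The grading genuinely intervenes only in the $u\in A'$ case, where $\phi(u)$ \emph{anti}commutes with $\sigma^t(\phi(v))$ and the explicit minus sign in the second line of the definition of $F$ is what compensates this to give continuity across $A'$. Your caution about sign bookkeeping is nonetheless warranted: carried out naively, both routes land on an overall sign for the right-hand side of \eqref{jump0} that must be checked carefully against the boundary-value labeling $u\pm i0$ fixed by the definition of $F$.
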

\begin{proof}
We may formally write $F(\xi, u,v) = \int_\RR [(e^{-s}+1)^{-1} + \xi - \half]^{-1} H(s, u, v) \dd s$, with $H(s,u,v)$ defined by analogy with 
\eqref{cases2}, with ``$-$'' in the second line for fermionic fields. 
Then the statement of the lemma the formally follows from the properties analogous to \eqref{RH}.
This argument is formal because $H(s)$ is defined a priori only after smearing with test functions falling 
off exponentially, see lemma \ref{lemma1}. However, we may also argue directly in the following manner. 
First we let $u,v \in A$ and compute, with $\lambda=\half-\xi$
\ben
\begin{split}
F(\xi,u+i0,v) =& \frac{1}{\lambda} \langle \Omega | \phi(u) (\Delta+1)(\Delta+1-1/\lambda)^{-1} \phi(v) \Omega\rangle\\
=& \frac{1}{\lambda} \langle \Omega | \phi(u) \phi(v) \Omega\rangle + \frac{1}{\lambda^2} \langle \Omega | \phi(u) (\Delta+1-1/\lambda)^{-1} \phi(v) \Omega\rangle.
\end{split}
\een
On the other hand, we also have
\ben
\begin{split}
-F(\xi,u-i0,v) 
=& \frac{1}{\lambda} \langle \Omega | \phi(v) \phi(u) \Omega\rangle + \frac{1}{\lambda^2} \langle \Omega | \phi(v) (\Delta^{-1}+1-1/\lambda)^{-1} \phi(u) \Omega\rangle\\
=& \frac{1}{\lambda} \langle \Omega | \phi(v) \phi(u) \Omega\rangle + \frac{1}{\lambda^2} \langle J \Delta^{\half} \phi(v) \Omega | (\Delta^{-1}+1-1/\lambda)^{-1} J \Delta^{\half} \phi(u) \Omega\rangle\\
=& \frac{1}{\lambda} \langle \Omega | \phi(v) \phi(u) \Omega\rangle + \frac{1}{\lambda^2} \langle \Omega | \phi(u)  \Delta (\Delta+1-1/\lambda)^{-1}  \phi(v) \Omega\rangle.
\end{split}
\een
Here, we used the relations $\phi(u) |\Omega \rangle  = \phi(u)^* |\Omega \rangle = S\phi(u)|\Omega\rangle =  J \Delta^{\half} \phi(u) \Omega\rangle$, the anti-unitary property of $J$, and the property 
$J\Delta^{-1}J=\Delta$, which follow from Tomita-Takesaki theory. Adding up these relations imply 
\ben
(\lambda-1)F(\xi,u+i0,v) - \lambda F(\xi,u-i0,v) = - \langle \Omega | \{ \phi(u), \phi(v) \} \Omega \rangle,
\een
which is equivalent to the statement of the lemma when $u,v \in A$.

In the other case, when $u \in A', v \in A$, we use the formula \eqref{jump11}. On this formula, we act from the left with $\phi(v)$ and from the left with $\phi(u)$ and 
take the expectation value in $|\Omega$. Then we obtain a formula for $F$. Since $u \in A', v \in A$, we have $[\Delta^{-it} \phi(v) \Delta^{it}, \phi(u)]=0$ by Tomita-Takesaki theory, and using this formula to commute the operators inside the expectation values, we get the claim of the lemma when $u \in A', v \in A$.

\end{proof}
 
 As in the previous section, the lemma shows that $F(\xi,u,v)$ defines a function of $u$ for fixed $v \in A$ 
 that is analytic in the cut strip $\{ u \in \CC \mid |\Im(u)|<\beta/2\pi, u \notin A\}$. 
Furthermore, by the KMS condition, it can be checked that $H(s,u-i(\beta-0),v)=-H(s,u+i0,v)$, which is also 
$=-H(s,u+1+i0,v)$ by construction. Thus, $F(\xi,u,v)$ has the same periodicity 
as the 2-point function $\langle \Omega | \phi(u) \phi(v) \Omega \rangle$, that is 
\ben
\label{period}
F(\xi, u+1, v) = F(\xi,u+\tau, v) = -F(\xi,u,v).
\een
This allows us to define $F$ as a function of $u$ on the entire complex plane cut by $A + \ZZ + \tau \ZZ$.
The limits from below the real axis define hermitian distributional kernels $F(\xi, u- i0, v)$ that are of positive/negative 
type for $\xi>\half$ resp. $\xi<-\half$, which follows from $\Delta>0$; similarly for the limit from above the real axis. 
Similar statements hold when fixing $u \in A$ and viewing $F(\xi,u,v)$ as a function of $v$.

As in the vacuum case studied in sec. \ref{localfields}, we would next like to have a result like \eqref{firstbound}
of lemma \ref{lemma1} about the potential singularities of $F$ at the end points of the intervals. 
Unfortunately, the proof strategy of lemma \ref{lemma1} does not hold in 
the present case since we have no analogue of the Hislop-Longo theorem for thermal states. 

Instead, we will prove first a result comparing the modular operator of a thermal state $\omega_\beta$ 
for the full algebra $\gA = \gB(\H_0)$ to the modular operator 
for the partial algebra $\gA(A)$. The point is that the former corresponds to rotations of the circle (i.e. translations of the coordinate $u$)
and is thus known. The idea is more precisely to apply eq. \eqref{frede} 
to the case $\gM_1= \gA$, $\gM_2 = \gA(A)$ (viewed as operator algebras on the GNS-Hilbert space $\H_\beta \equiv \H$ of the thermal 
state $\omega_\beta \equiv \omega$), so 
we can define the modular operators $\Delta_i$ of $\gM_i$ on $\H$ with respect to the cyclic and separating vector $|\Omega\rangle \equiv |\Omega_\omega\rangle$.

\begin{lemma}
\label{lemma14}
Let $s \in \RR$, let $u,v \in A = \cup_{i=1}^p (a_i,b_i)$. Then
\ben
\begin{split}
&\Big|
\langle \Omega |  \phi(u) (1+e^s \Delta_{1})^{-1} \phi(v) \Omega \rangle -
\langle \Omega |  \phi(u) (1+e^s \Delta_{2})^{-1} \phi(v) \Omega \rangle
\Big| \\
& \qquad \lesssim  \sum_{q_j}  |u-q_j|^{-h} |v-q_j|^{-h} , 
\end{split}
\een
with implicit constant depending on $s$ and the endpoints $\{q_j\}$ of the intervals.
\end{lemma}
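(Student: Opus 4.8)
The plan is to apply the operator identity \eqref{frede} to the inclusion $\gM_2 = \gA(A) \subset \gM_1 = \gA$, reading the auxiliary parameter there (written $u$ in \eqref{frede}) as our $s$, and taking the fields $Y = \phi(u)$ and $X = \phi(v)$ in the roles of the elements of $\gM_2$. Since $u,v \in A$, the smeared fields $\phi(f),\phi(g)$ with $\supp f, \supp g \subset A$ are affiliated with $\gA(A)$, so by Assumption \ref{ass2}(5) there are $Y_n, X_n \in \gA(A)$ with $Y_n\Omega \to \phi(f)\Omega$ and $X_n\Omega \to \phi(g)\Omega$ strongly; as every operator in \eqref{frede} is bounded in $\Omega$-matrix elements, the identity passes to the limit exactly as in the proof of lemma \ref{lemma1}(2). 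Abbreviating $\Xi_w := I^{-1}(1+e^s\Delta_1)^{-1}\phi(w)\Omega = I^*\phi(w)\Omega$ (using $I^* = I^{-1}(1+e^s\Delta_1)^{-1}$), this rewrites the quantity inside the modulus on the left-hand side of the lemma as the $\H_1$-inner product $\langle (1-P_2)\Xi_u \mid (1-P_2)\Xi_v\rangle$.

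Next I would apply Cauchy--Schwarz in $\H_1$, giving $\lesssim \|(1-P_2)\Xi_u\|_{\H_1}\,\|(1-P_2)\Xi_v\|_{\H_1}$ and reducing everything to the diagonal estimate $\|(1-P_2)\Xi_w\|_{\H_1}^2 \lesssim \sum_j |w-q_j|^{-2h}$. Setting $u=v=w$ in the identity above and using $IP_jI^* = (1+e^s\Delta_j)^{-1}$ exhibits the diagonal as $\|(1-P_2)\Xi_w\|^2 = \langle\Omega|\phi(w)[(1+e^s\Delta_1)^{-1}-(1+e^s\Delta_2)^{-1}]\phi(w)\Omega\rangle$, i.e. the diagonal is itself the difference of the two resolvent matrix elements. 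The essential point is that this difference is far less singular than either term, because the short-distance (coincidence) singularity, being independent of the state and of the algebra, cancels.

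To control the diagonal I would use the variational characterization $\|(1-P_2)\Xi_w\|_{\H_1} = \dist_{\H_1}(\Xi_w,\H_2) = \inf_{X \in \gA(A)} \|\Xi_w - I^{-1}X\Omega\|_{\H_1}$, where $\H_2 = \overline{I^{-1}\gA(A)\Omega}$. Since $I(\Xi_w - I^{-1}X\Omega) = \eta_w - X\Omega$ with $\eta_w := (1+e^s\Delta_1)^{-1}\phi(w)\Omega$, the graph norm \eqref{H1} gives the explicit residual $\|\Xi_w - I^{-1}X\Omega\|_{\H_1}^2 = \langle\Omega|\phi(w)(1+e^s\Delta_1)^{-1}\phi(w)\Omega\rangle - 2\Re\,\omega(\phi(w)X) + \omega(X^*X) + e^s\,\omega(XX^*)$, using $\Delta_1^{1/2}X\Omega = J X^*\Omega$. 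Crucially, for $\gM_1 = \gA$ the modular operator $\Delta_1$ is geometric: its flow is the rigid rotation of the circle (the thermal KMS flow of $L_0$), so $(1+e^s\Delta_1)^{\pm 1/2}$ is an explicit analytic continuation of rotations and every term above is computed from the explicit thermal two-point function \eqref{ddt}. I would then choose as trial operator $X = \phi(g_w)$ a field smeared with a bump $g_w$ supported in $A$ and concentrated at $w$; as $g_w$ contracts, the coincidence term is cancelled against the cross and quadratic terms, and the only surviving remainder comes from the region where the $\Delta_1$-rotated localization of $\phi(w)$ leaves $A$, i.e. a neighbourhood of the endpoints $\{q_j\}$, where the factor $(\,\cdot\,)^{-2h}$ in \eqref{ddt} produces exactly $\lesssim \sum_j |w-q_j|^{-2h}$.

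The main obstacle is precisely this last step. Unlike lemma \ref{lemma1}, there is no Hislop--Longo theorem for the thermal state, so one cannot read off a closed-form single-interval modular flow; the endpoint singularity must instead be extracted from the explicit rotation flow of $\Delta_1$ together with the short-distance expansion of the thermal two-point function, with the trial vector engineered so that all contributions away from the boundary cancel. A secondary, purely technical, point is that the Cauchy--Schwarz step only produces $(\sum_j|u-q_j|^{-2h})^{1/2}(\sum_j|v-q_j|^{-2h})^{1/2}$, which matches the stated $\sum_j|u-q_j|^{-h}|v-q_j|^{-h}$ near any single endpoint and in the bulk but is slightly weaker in mixed regimes; to recover the stated form verbatim one would instead localize the difference near each $q_j$ separately before estimating, which the explicit geometric form of $\Delta_1$ again makes possible.
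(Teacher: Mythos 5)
Your setup coincides with the paper's: you invoke \eqref{frede} for the inclusion $\gA(A)\subset\gA$, pass to the pointlike fields by the affiliation/approximation argument, and reduce via Cauchy--Schwarz in $\H_1$ to the diagonal quantity $\|(1-P_2)I^{-1}(1+e^s\Delta_1)^{-1}\phi(f)\Omega\|$. The gap is in how you propose to bound this. The residual of your variational scheme is $\|(1+e^s\Delta_1)^{1/2}\bigl((1+e^s\Delta_1)^{-1}\phi(w)\Omega-X\Omega\bigr)\|^2$, and
$(1+e^s\Delta_1)^{-1}\phi(w)\Omega=\tfrac{i}{2}\int \dd t\, e^{ist}\,\Delta_1^{it}\phi(w)\Omega/\sinh[\pi(t+i0)]$
is a superposition of fields rotated over the \emph{entire} circle, because $\Delta_1^{it}$ is the rotation flow of the Gibbs state. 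A trial vector $X\Omega=\phi(g_w)\Omega$ with $g_w$ a bump concentrated at $w$ can at best match the $t\approx 0$ portion of this superposition; the mismatch is then supported on all $|t|$ larger than (width of $g_w$)$/\beta$, not merely on the set of $t$ for which the rotated localization region actually exits $A$. Tuning the amplitude of $g_w$ can cancel the leading $(u-v)^{-2h}$ coincidence singularity, but the surviving remainder is a bulk contribution that does not contract to $\sum_j|w-q_j|^{-2h}$ as $g_w$ shrinks. You correctly name the mechanism that must be exploited (``the remainder comes from where the rotated localization leaves $A$'') and correctly flag that you have not implemented it --- but implementing it is the entire content of the lemma, and the bump trial function does not do so.

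The paper's proof realizes this mechanism directly, with no variational step: since $\sigma_1^t(\phi(f))$ remains affiliated with $\gA(A)$ for $|t|<t_0=\delta/\beta$, $\delta=\dist(\supp f,\partial A)$, one has $(1-P_2)I^{-1}\Delta_1^{it}\phi(f)\Omega=0$ exactly on that range, so a smooth cutoff $h(t/t_0)$ vanishing near $t=0$ may be inserted into the Fourier representation for free (and the $i0$ dropped). The remaining integral is then estimated mode by mode using $\|\phi_n\Omega\|^2\lesssim(1+|n|)^{2h-1}$ (energy bounds plus the finite-trace condition) together with the rapid decay of the Fourier transform of $h(t/t_0)/\sinh(\pi t)$, yielding $\lesssim\|f\|_{L^1}\,\delta^{-h}$. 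Any workable trial vector in your scheme would essentially have to be $I^{-1}$ applied to the $|t|<t_0$ truncation of the integral above, at which point you are reproducing that computation. (Your secondary observation is fair but harmless: Cauchy--Schwarz naturally yields $\dist(u,\partial A)^{-h}\dist(v,\partial A)^{-h}$ rather than the diagonal sum as literally stated; the paper's own proof has the same feature, and either form suffices for the application in lemma \ref{lemma15}.)
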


\begin{proof} 
Let $X,Y \in \gM_2$. From \eqref{frede}, we get with the notations introduced around \eqref{H1}
\ben\label{frede1}
\begin{split}
&\langle \Omega |Y^* (1 + e^s\Delta_1)^{-1} X  | \Omega \rangle -  \langle \Omega | Y^* (1 + e^s\Delta_2)^{-1} X  | \Omega \rangle\\
=&  \langle (1- P_2) I^{-1}  (1 + e^s\Delta_1)^{-1} Y \Omega | (1- P_2) I^{-1} (1 + e^s\Delta_1)^{-1} X \Omega \rangle \ .
\end{split}
\een
For $y>0$, we can write
\ben\label{fourier}
\frac{1}{1+y} 
=\frac{i}{2} \int_\RR \frac{y^{it}}{\sinh[\pi (t+i0)]} \dd t  .
\een
Therefore, by the spectral calculus applied to $y=e^s \Delta_1$,
\ben
(1- P_2) I^{-1} (1 + e^s\Delta_1)^{-1} X |\Omega \rangle = \frac{i}{2}  \, (1- P_2) I^{-1} 
\int_\RR \dd t \ \frac{e^{ist} \Delta_1^{it}}{\sinh[\pi (t+i0)]}  X |\Omega \rangle . 
\een
The key idea is now the following.
Suppose that, for $|t| < t_0$ and some $t_0>0$, we knew that $\sigma_1^t(X)$ is in $\gM_2$, so
$\Delta_1^{it} X |\Omega \rangle=\sigma_1^t(X) |\Omega \rangle$ is in the domain 
${\mathcal D}(S_2)$, so $I^{-1} \Delta_1^{it} X |\Omega \rangle$ is in $\H_2$, so $(1- P_2) I^{-1}  \Delta_1^{it} X |\Omega \rangle=0$. 
Then we can effectively restrict the range in the integral to $|t| \ge t_0$ and drop the $i0$-prescription, and this is the moral reason for the existence of our bound. 

An even better estimate is obtained if instead we choose a, say even, 
real-valued smooth function $h(t) \ge 0$ such that $h(t) = 0$ for $|t|<\frac12$ and $h(t) = 1$ for $|t| \ge 1$, say, and write 
\ben
\label{ekdel}
(1- P_2) I^{-1} (1 + e^s\Delta_1)^{-1} X |\Omega \rangle = \frac{i}{2} 
\, (1- P_2) I^{-1} \int_\RR  \dd t \ \frac{e^{ist} \Delta_1^{it}}{\sinh(\pi t)} h \left( \frac{t}{t_0} \right)  X |\Omega \rangle . 
\een
Now we take a test function $f$ compactly supported inside $A$ and we denote the distance of the support of $f$ to the boundary of $A$ by
$\delta={\rm dist}(\partial A, {\rm supp}(f))$; of course $\delta>0$. 
Furthermore, we let $\{X_n\}$ be a sequence in $\gM_2$
converging strongly to $\phi(f)$. Such a sequence exists since we assume that the local fields are affiliated. Then \eqref{ekdel} also holds for $X=\phi(f)$.
We next wish to use the mode expansion \eqref{smeared} inside \eqref{ekdel}. Because $\omega \equiv \omega_\beta$ is a $\beta$-KMS state on $\gM_1$ with respect to the automorphic actions of rotations $\alpha_t=\alpha_{g_t}, g_t(z) = e^{-i\beta t} z$, the modular flow $\Delta_1^{it}$ corresponds to rotations in the sense that $\Delta^{it}_1 \phi(z) \Delta_1^{-it} = e^{-iht\beta} \phi(e^{-i\beta t}z)$ for $z \in \bS$. Then it is clear that $\Delta^{it}_1 \phi(f) \Delta_1^{-it}$ will remain affiliated with $\gM_2$ as long as $|t|<\delta/\beta$, 
so \eqref{ekdel} holds with $X=\phi(f)$ and $t_0=\delta/\beta$.

In terms of modes
\ben
\Delta_1^{it} \phi_n \Delta_1^{-it} = e^{i\beta (n+h) t} \phi_n .
\een
In particular, since the modular flow preserves the state, we must have $\langle \phi_m \Omega | \phi_n \Omega \rangle
= \delta_{n,m} \| \phi_n \Omega \|^2 = \delta_{n,m} \omega( \phi_n^* \phi_n^{})$. From (iv) of assumption \ref{ass2} it follows that $\phi_n^* \phi_n^{} \lesssim (1+|n|)^{2h-1} (1+L_0)^{2k}$ for some 
$k \ge 0$, and therefore we must have 
\ben 
\| \phi_n \Omega \|^2   \lesssim (1+|n|)^{2h-1} \omega[(1+L_0)^{2k}]
\een
for all $n$. Since $\omega\equiv \omega_\beta$ is a Gibbs state, we conclude
\ben
\| \phi_n \Omega \|^2 
\lesssim (1+|n|)^{2h-1} \frac{\tr [(1+L_0)^{2k} e^{-\beta L_0} ]}{\tr (e^{-\beta L_0} )}
\lesssim (1+|n|)^{2h-1}.
\een
Using the definition of $I$ as well as $1-P_2 \le 1$, we get for any complex constants $c_n$, 
\ben
\begin{split}
\left\| \sum_{n \in \bZ} c_n (1- P_2) I^{-1} \phi_n \Omega \right\|^2 
&\le  \sum_{n,m \in \bZ}  \bar c_n c_m  \, \langle I^{-1} \phi_m \Omega | I^{-1} \phi_n \Omega \rangle  \\
&=  \sum_{n,m \in \bZ}  \bar c_n c_m  \, \langle \phi_m \Omega | (1+e^s \Delta_1) \phi_n \Omega \rangle  \\
&=  \sum_{n \in \bZ}  |c_n|^2  \, ( \| \phi_n \Omega \|^2 +  e^s \| \phi_n^* \Omega \|^2 ) \\
&=  \sum_{n \in \bZ}  |c_n|^2  \, ( \| \phi_n \Omega \|^2 +  e^s \| \phi_{-n} \Omega \|^2 ) \\
&\le (1+e^s) \sum_{n}  |c_n|^2 \, O(|n|^{2h-1}) . 
\end{split}
\een
Taking the norm squared of \eqref{ekdel} then gives:
\ben
\label{ekdel1}
\begin{split}
& \| (1- P_2) I^{-1} (1 + e^s\Delta_1)^{-1} \phi(f) \Omega \|^2 \\
=& \frac{1}{4} \left\| \sum_{n \in \bZ}
\left( \int_\RR  \frac{e^{ist+i \beta (n+h) t}}{\sinh(\pi t)} h \left( \frac{\beta t}{\delta} \right)  \dd t \right)
 f_n  (1- P_2) I^{-1} \phi_n \Omega \right\|^2 \\
\le & (1+e^s)  \sum_{n \in \bZ}
\left| \int_\RR  \frac{e^{ist+i\beta (n+h) t}}{\sinh(\pi t)} h \left( \frac{\beta t}{\delta} \right)  \dd t \right|^2 \, 
 |f_n|^2 \, O(|n|^{2h-1})  \\
\equiv & (1+e^s)
\sum_{n \in \bZ} \varphi_{\delta/\beta} \bigg( s +\beta (n+h)  \bigg) \, |f_n|^2 \, O(|n|^{2h-1}) \,   ,
\end{split}
\een
where $f_n=\int_0^1 e^{2\pi i n u} f(u) \dd u$ the Fourier components of $f(u)$, and where $\varphi_{t_0}(s)$ is some smooth function 
which can be chosen to satisfy for large $|s|$ a bound of the form $|\varphi_{t_0}(s)| \le O[ \, (1+t_0 |s| )^{-N} ]$ for as large an $N$ as we wish.
The Fourier coefficients are trivially bounded by the $L^1$-norm of $f$. 
The bound \eqref{ekdel1} then gives, altogether
\ben
\label{ekdel2}
\begin{split}
& \| (1- P_2) I^{-1} (1 + e^s\Delta_1)^{-1} \phi(f) \Omega \| \\
\lesssim &  \| f \|_{L^1} \left(
\sum_{n \in \bZ} O( |\delta n|^{-N}) \, O(|n|^{2h-1})  \right)^{\frac12} \\
\lesssim & \| f \|_{L^1} \left( \frac{ 1}{\delta} \right)^{h} ,
\end{split}
\een
with implicit constants depending on $s$.
We can likewise finde a sequence $\{Y_n\}$ be a sequence in $\gM_2$
converging strongly to $\phi(g)$, and thereby obtain a similar result as \eqref{ekdel2} replacing $\phi(f)$ by $\phi(g)$.
Combining these two results now with \eqref{frede} and using the Cauchy-Schwarz inequality on the right side of that equation gives
\ben
\begin{split}
&\Big|
\langle \Omega | \phi(g)^* (1 + e^s\Delta_1)^{-1} \phi(f)  | \Omega \rangle -  \langle \Omega | \phi(g)^* (1 + e^s\Delta_2)^{-1} \phi(f)  | \Omega \rangle
\Big|
\\
\lesssim &  \| f \|_{L^1} \| g \|_{L^1} \left( \delta_f \delta_g \right)^{-h}. 
\end{split}
\een
Letting $f,g$ tend to delta-distributions centered at $u,v$ then by defintion, $\delta_f \to {\rm dist}(u, \partial A) \lesssim \sum |u-a_j| |u-b_j|$ 
and likewise for $g$, and the $L^1$ norms remain bounded. This gives the claim of the lemma.
\end{proof}

From here, we can get:

\begin{lemma}
\label{lemma15}
Let $u,v \in A = \cup_{i=1}^p (a_i,b_i)$. Then
\ben
\Big|
F(\xi, u \mp i0, v)
\Big|  \lesssim    \sum_{q_j}  |u-q_j|^{-h} |v-q_j|^{-h}  + |u-v|^{-2h}, 
\een
with implicit constant depending on $\beta,s$ and the endpoints $\{q_j\}$ of the intervals.
\end{lemma}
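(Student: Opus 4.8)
The plan is to reduce the resolvent in the definition of $F$ to the one-parameter family $(1+e^s\Delta)^{-1}$ controlled by Lemma \ref{lemma14}, and then to trade the modular operator $\Delta_2\equiv\Delta_{\beta,A}$ of the multi-arc algebra for the modular operator $\Delta_1$ of the full algebra $\gA=\gB(\H_0)$, whose action on $\phi$ is the explicit geometric rotation. Set $\lambda=\half-\xi$ and $c=1-1/\lambda$; for $\xi\in(\half,\infty)\cup(-\infty,-\half)$ one checks $c>0$, so that $e^s:=1/c$ defines a real $s$ and $(\Delta_2+1-1/\lambda)^{-1}=(\Delta_2+c)^{-1}=e^s(1+e^s\Delta_2)^{-1}$. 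Repeating the partial-fraction manipulation from the proof of Lemma \ref{lemma17} (the identity $\tfrac1\lambda(\Delta_2+1)(\Delta_2+c)^{-1}=\tfrac1\lambda+\tfrac1{\lambda^2}(\Delta_2+c)^{-1}$) then expresses each boundary value of $F$ as a finite sum of two-point functions plus a bounded multiple (in $\xi$) of the resolvent matrix element
\ben
\label{s15:1}
\langle\Omega|\phi(u)(1+e^s\Delta_2)^{-1}\phi(v)\Omega\rangle ;
\een
for the boundary value in which the definition of $F$ involves $\Delta^{-1}$ one first uses the Tomita--Takesaki relations $\phi(u)\Omega=J\Delta^{\half}\phi(u)\Omega$, $J\Delta^{-1}J=\Delta$ (as in Lemma \ref{lemma17}) together with $(1+e^s\Delta_2^{-1})^{-1}=1-(1+e^{-s}\Delta_2)^{-1}$ to bring it to this same form (with $s\mapsto-s$).

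The two-point function terms $\langle\Omega|\phi(u)\phi(v)\Omega\rangle$ and $\langle\Omega|\phi(v)\phi(u)\Omega\rangle$ are, by the KMS-analyticity on the strip $0<\Im(u-v)<\beta/2\pi$, smooth for $u\neq v\ (\mathrm{mod}\ \ZZ+\tau\ZZ)$ and carry the short-distance singularity $(u-v)^{-2h}$ of the vacuum two-point function \eqref{ddt}; hence they are bounded by $|u-v|^{-2h}$. For the resolvent term \eqref{s15:1} I would invoke Lemma \ref{lemma14} to replace $\Delta_2$ by $\Delta_1$,
\ben
\label{s15:2}
\langle\Omega|\phi(u)(1+e^s\Delta_2)^{-1}\phi(v)\Omega\rangle=\langle\Omega|\phi(u)(1+e^s\Delta_1)^{-1}\phi(v)\Omega\rangle+R(u,v),
\een
with $|R(u,v)|\lesssim\sum_{q_j}|u-q_j|^{-h}|v-q_j|^{-h}$; this is exactly the endpoint contribution in the claim, and it carries no diagonal singularity.

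It remains to bound the geometric term. Since $\omega_\beta$ is KMS for the rotations, $\Delta_1^{it}$ acts on $\phi$ in the coordinate \eqref{fieldredef} as the real translation $\phi(u)\mapsto\phi(u-\beta t/2\pi)$ and fixes $|\Omega\rangle$, so by \eqref{fourier}
\ben
\label{s15:3}
\langle\Omega|\phi(u)(1+e^s\Delta_1)^{-1}\phi(v)\Omega\rangle=\frac{i}{2}\int_\RR\frac{e^{ist}}{\sinh[\pi(t+i0)]}\,\langle\Omega|\phi(u)\,\phi\big(v-\tfrac{\beta t}{2\pi}\big)\Omega\rangle\,\dd t .
\een
This is a function of $u-v$ alone, independent of $A$, and periodic with periods $1,\tau$; I would show by a direct distributional analysis of the integral near the pole of the two-point function (at $u-v+\beta t/2\pi\in\ZZ+\tau\ZZ$) that its only singularity is the diagonal one of order $|u-v|^{-2h}$, and that it is smooth and bounded away from the diagonal, in particular across the endpoints of $A$. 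Collecting the three contributions from \eqref{s15:1}--\eqref{s15:3} then yields $|F(\xi,u\mp i0,v)|\lesssim|u-v|^{-2h}+\sum_{q_j}|u-q_j|^{-h}|v-q_j|^{-h}$, with constant depending on $\xi$ (equivalently $s$), $\beta$, and the endpoints.

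The main obstacle is this last step. Since $(1+e^s\Delta_1)^{-1}$ is bounded but the vectors $\phi(u)\Omega$ are only distributional, a mere operator-norm estimate does not suffice; one must instead control the boundary value $\Im u\to0^-$ of \eqref{s15:3} carefully, confirming that the geometric resolvent kernel blows up on the diagonal exactly like $|u-v|^{-2h}$ and no faster, while staying regular near the $q_j$. All remaining steps are routine given Lemmas \ref{lemma14} and \ref{lemma17}.
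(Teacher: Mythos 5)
Your proposal is correct and follows essentially the same route as the paper: the paper likewise writes $F=F_1+(F_2-F_1)$, computes the $\Delta_1$-term explicitly via the Fourier identity \eqref{fourier}/\eqref{jump11} and the geometric (rotation) action of $\Delta_1$ to isolate the two-point-function singularity $(u-v)^{-2h}$ plus a uniformly bounded integral, and controls $F_2-F_1$ by lemma \ref{lemma14}, which supplies the endpoint terms. The boundary-value issue you flag as the "main obstacle" is resolved in the paper exactly as you suggest, by observing that the integrand in the $\Delta_1$-term is the boundary value of a function holomorphic for $-\epsilon<\Im(t)<0$ with only algebraic singularities.
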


\begin{proof}
For definiteness, consider $F(\xi, u - i0, v)$.
From \eqref{fourier}, we get for $\Delta>0$ the identity
\ben
\label{jump11}
\frac{1}{(\Delta+1)^{-1} + \xi-\half} = \frac{1}{\xi+\half} - \frac{i}{2} \frac{1}{\xi^2-\tfrac{1}{4}} \int_{-\infty}^\infty \dd t \left(
\frac{\xi-\half}{\xi+\half}
\right)^{it} \frac{\Delta^{it}}{\sinh[\pi(t-i0)]}.
\een
We use this for $\Delta=\Delta_1$, the modular operator for the full algebra $\gA$. Then $\Delta_1^{it}$ generates translations by $-\beta t/2\pi$ of the coordinate $u$. 
Now we sandwich the above identity between $\langle \Omega | \phi(u-i0)$ and $\phi(v+i0)|\Omega\rangle$. Then if $F_1$ is defined as $F$ but with $\Delta_1$,
we get 
 \ben
 \begin{split}
F_1(\xi,u-i0,v) =& \frac{1}{\xi+\half} \langle \Omega | \phi(u)\phi(v) \Omega\rangle \\
&
- \frac{i}{2} \frac{1}{\xi^2-\tfrac{1}{4}} \int_{-\infty}^\infty \dd t \left(
\frac{\xi-\half}{\xi+\half}
\right)^{it} \frac{\langle \Omega | \phi(u+\beta(t-i0)/2\pi) \phi(v) \Omega\rangle}{\sinh[\pi(t-i0)]}.
\end{split}
\een
The second term is uniformly bounded in $u,v$ since the integrand is the boundary value of an analytic function 
that his holomorphic for $-\epsilon<\Im(t)<0$ with algebraic singularities. The first term behaves as $\sim (u-v)^{-2h}$, since this is the UV-behavior 
of a thermal 2-point function of a field of conformal dimension $h$, as one may also prove rigorously by decomposing the field 
into modes and using (iv) of assumption \ref{ass2}. The proof now follows by writing $|F| \equiv |F_2| \le |F_2-F_1|+ |F_1|$ and 
using the preceeding lemma \ref{lemma14} on $|F_2-F_1|$. We remark that the proof actually shows $F(\xi,u-i0,v) \sim \frac{e^{-i\pi h}}{2\pi } \frac{1}{\xi+\half} (u-v-i0)^{-2h}$
for $u$ near $v$ if the field $\phi$ is in standard normalization \eqref{ddt}.

\end{proof}

Our assumptions \ref{ass2} and the fermionic nature of $\phi$ (i.e. $h \in \half {\mathbb N}$) imply the operator expansion in anti-commutator form,
\ben
\{ \phi(u), \phi(v) \} = \sum_{n=0}^{2h-1} \delta^{(n)}(u-v) O_n(\half(u+v)), 
\een
as one can see for instance using the relation between our operator algebraic formalism and Vertex Operator Algebras, see \cite{longo2}.
Here the $O_n$ are hermitian bosonic primary fields of conformal dimension $2h-1-n$, and $\delta^{(n)}(u) 
= \frac{\dd^n}{\dd u^n} \delta(u)$. We let $\langle O_n \rangle = \langle \Omega | O_n(u) \Omega\rangle$, which is independent of 
$u$ for our rotation invariant thermal state $|\Omega \rangle$, and use this relation in lemma \ref{lemma17}, to obtain a new version of 
the jump condition involving the thermal expectation values $\langle O_n \rangle$. 

We summarize the properties of $F$ \eqref{Fdef}:
\begin{theorem}\label{thm3}
Let $\phi$ be a fermionic field of conformal dimension $h$ in the normalization \eqref{ddt}.
\begin{itemize}

\item
For fixed $v \in A$ and $\xi \in (\half, \infty) \cup (-\half, -\infty)$, 
$F(\xi, u,v)$ is a holomorphic function of $u$ on the periodically cut complex plane $\CC \setminus (A + \ZZ + \tau \ZZ)$
with the periodicity $F(\xi, u+1, v) = F(\xi,u+\tau, v) = -F(\xi,u,v)$.

\item
Across the cuts, $u \in A$, $F$ satisfies the jump condition 
\ben
\label{eq:jump}
-(\xi+\half) F(\xi, u-i0, v) + (\xi - \half) F(\xi, u+i0, v)=\sum_{n=0}^{2h-1} \delta^{(n)}(u-v) \langle O_n \rangle.
\een

\item
Near the end-points $q_i$ of the intervals, $|F| \lesssim |q_i-u|^{-h}$, as a function of $u$.

\item
Near $v$, 
\ben
\label{pole}
F(\xi, u- i0, v) \sim \frac{e^{-i\pi h}}{2\pi } \frac{1}{\xi+ \half} (u-v-i0)^{-2h} 
\een
as a function of $u$.

\item
Analogous properties hold true for $u \leftrightarrow v$. 
\end{itemize}
\end{theorem}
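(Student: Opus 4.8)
The plan is to assemble the five listed properties from Lemmas \ref{lemma17} and \ref{lemma15}, the periodicity relation \eqref{period}, and the fermionic anti-commutator expansion stated just above the theorem. Since the substantive analytic estimates have already been established in those lemmas (the endpoint bound in particular rests on the comparison argument of Lemma \ref{lemma14}), the task is primarily to collect and combine these ingredients, taking care that the boundary values are handled in the distributional sense in which $F$ is defined by \eqref{Fdef}.

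For the first bullet I would begin by noting that, for $\xi$ in the stated range, the resolvent $[(\Delta+1)^{-1}+\xi-\half]^{-1}$ is a bounded operator, so that within each half-strip $\{-\beta/2\pi<\Im u<0\}$ and $\{0<\Im u<\beta/2\pi\}$ the matrix element $F(\xi,u,v)$ inherits holomorphy in $u$ from the vector-valued holomorphy of $\phi(u)|\Omega\rangle$ guaranteed by the KMS property. By Lemma \ref{lemma17}, for $u\in A'$ the boundary values from above and below the real axis agree, $F(\xi,u-i0,v)=F(\xi,u+i0,v)$, so the edge-of-the-wedge theorem glues the two pieces into a single function holomorphic across $A'$; since the jump across $A$ is nonzero these intervals are genuine cuts. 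The periodicity \eqref{period}, which follows from the KMS antiperiodicity of the fermionic two-point function together with the field redefinition \eqref{fieldredef}, then propagates this picture to the whole plane, giving holomorphy on $\CC\setminus(A+\ZZ+\tau\ZZ)$ with $F(\xi,u+1,v)=F(\xi,u+\tau,v)=-F(\xi,u,v)$.

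The second bullet is immediate from the jump condition \eqref{jump0} of Lemma \ref{lemma17}: inserting the expansion $\{\phi(u),\phi(v)\}=\sum_{n=0}^{2h-1}\delta^{(n)}(u-v)\,O_n(\half(u+v))$ and taking the expectation in $|\Omega\rangle$, rotation invariance of the thermal state makes $\langle\Omega|O_n(\half(u+v))\Omega\rangle=\langle O_n\rangle$ a constant, so that the right side becomes $\sum_{n=0}^{2h-1}\delta^{(n)}(u-v)\langle O_n\rangle$, which is \eqref{eq:jump}. The third and fourth bullets are read off from Lemma \ref{lemma15}: the bound $|F(\xi,u\mp i0,v)|\lesssim\sum_{q_j}|u-q_j|^{-h}|v-q_j|^{-h}+|u-v|^{-2h}$, specialized to fixed $v\in A$ bounded away from the endpoints, gives $|F|\lesssim|q_i-u|^{-h}$ near each $q_i$, while the concluding remark in that lemma's proof supplies the short-distance asymptotics $F(\xi,u-i0,v)\sim\frac{e^{-i\pi h}}{2\pi}\frac{1}{\xi+\half}(u-v-i0)^{-2h}$ near $v$. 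The last bullet follows because the construction \eqref{Fdef} is symmetric under interchanging the roles of $u$ and $v$, as already observed after \eqref{period}, so every property has its mirror image.

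The step requiring the most care is the holomorphy-and-periodicity claim, precisely because $F$ exists a priori only as a distributional boundary value in $u$. I would therefore make the edge-of-the-wedge gluing rigorous by testing against a fixed test function supported in $A'$ and verifying distributionally the matching of boundary values furnished by Lemma \ref{lemma17}, and I would likewise check that \eqref{period} holds as an identity of distributions before asserting holomorphy on the periodically cut plane. Once this is in place, every remaining assertion is an algebraic consequence of the cited lemmas and the anti-commutator expansion.
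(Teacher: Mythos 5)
Your proposal is correct and follows essentially the same route as the paper, which presents thm.~\ref{thm3} precisely as a summary assembled from lemma~\ref{lemma17} (jump condition, combined with the anti-commutator expansion and rotation invariance of the thermal state), the periodicity/analyticity discussion following that lemma, and lemmas~\ref{lemma14}--\ref{lemma15} (endpoint bounds and the short-distance asymptotics noted at the end of the proof of lemma~\ref{lemma15}). Your added care about the distributional sense of the boundary values and the edge-of-the-wedge gluing is consistent with, and slightly more explicit than, the paper's presentation.
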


Now we let 
\ben
\label{fermitwopoint1}
G(u-v)=\frac{\vartheta_1'(\tau)}{2\pi i \vartheta_{3}(\tau)} \frac{\vartheta_{3}(u-v;\tau)}{\vartheta_1(u-v;\tau)}
\een
also equal to the thermal 2-point function of the free Fermi field (see below). Our conventions for the Jacobi $\vartheta$ functions are 
summarized in the appendix.  
Then the combination $G(u-u') F(\xi,u',v)$ is doubly periodic in $u'$ for any fixed $v\in A$. Thus, integrating around a contour $\gamma_\square$ surrounding the fundamental 
parallelogram as in fig. \ref{fig:1}, we get zero, 
\ben
\label{eq:netzero}
0=\oint_{\gamma_\square} \dd u' \, G(u-u') F(\xi,u',v) .  
\een

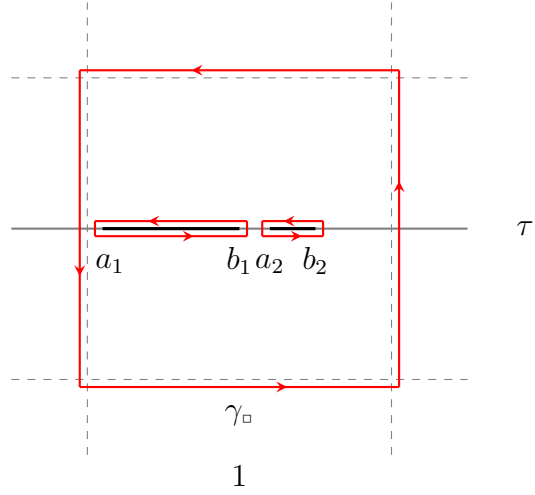
\begin{figure}
\centering
\begin{tikzpicture}[scale=1.0]
\draw[gray, dashed] (-1,0) -- (5,0);
\draw[gray, dashed] (-1,4) -- (5,4);
\draw[gray, dashed] (0,-1) -- (0,5);
\draw[gray, dashed] (4,-1) -- (4,5);
\draw (2,-1) node[below]{$1$}; 
\draw (5.5,2) node[right]{$\tau$}; 
\draw[gray, thick] (-1,2) -- (5,2);
\draw[black, very thick] (.2,2) -- (2,2);
\draw[black, very thick] (2.4,2) -- (3,2);
\draw (.3,1.8) node[below]{$a_1$}; 
\draw (2,1.9) node[below]{$b_1$}; 
\draw (2.4,1.8) node[below]{$a_2$}; 
\draw (3,1.9) node[below]{$b_2$}; 
\draw[red, directed, thick] (-.1,-.1) -- (4.1,-.1);
\draw[red, directed, thick] (4.1,-.1) -- (4.1,4.1);
\draw[red, directed, thick] (4.1,4.1) -- (-.1,4.1);
\draw[red, directed, thick] (-.1,4.1) -- (-.1,-.1);
\draw (2,-.2) node[below]{$\gamma_\square$};
\draw[red, directed, thick] (.1,1.9) -- (2.1,1.9);
\draw[red, thick] (2.1,1.9) -- (2.1,2.1);
\draw[red, directed, thick] (2.1,2.1) -- (.1,2.1);
\draw[red, thick] (.1,2.1) -- (.1,1.9);

\draw[red, directed, thick] (2.3,1.9) -- (3.1,1.9);
\draw[red, thick] (3.1,1.9) -- (3.1,2.1);
\draw[red, directed, thick] (3.1,2.1) -- (2.3,2.1);
\draw[red, thick] (2.3,2.1) -- (2.3,1.9);
\end{tikzpicture}
\caption{Illustration of the contour deformation in complex $u$-plane.}
\label{fig:1}
\end{figure}

Since $G(u-v) \sim \frac{1}{2\pi i} (u-v)^{-1}$ for $u \sim v$,
and $u=v$ is the only singularity of $G$ inside the fundamental parallelogram, we can deform the contour $\gamma_\square$ to a contour tightly surrounding the intervals 
as shown in fig. \ref{fig:1}, for $\Im(u) \to 0^-$ and $\Re(u) \in A$. 
Then we collect the residue, use the jump condition in the theorem, as well as $\sum_{n=0}^{2h-1}  \langle O_n \rangle  \, G^{(n)}(u-v-i0) = \langle \Omega
 | \phi(u) \phi(v) \Omega\rangle$.This gives:

\begin{corollary}\label{cor:2}
Let $\phi$ be a fermionic primary field of dimension $h$.
$F$ defined as in \eqref{Fdef} satisfies (in the distributional sense for $u,v \in A$):
\ben
\begin{split}
\langle \Omega_\beta | \phi(u) \phi(v) \Omega_\beta \rangle =& (\xi-\half)  F(\xi,u-i0,v) \\
& + \frac{1}{2\pi i} \bint_A \dd u' \, \frac{\vartheta_1'(\tau)}{\vartheta_{3}(\tau)} \frac{\vartheta_{3}(u-u'-i0;\tau)}{\vartheta_1(u-u'-i0;\tau)} F(\xi,u'-i0,v) .
\end{split}
\een
Here, 
$\bint$ denotes a regulated version of the integral described in the following remark.
\end{corollary}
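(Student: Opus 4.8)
The plan is to prove the corollary by the contour--deformation (Riemann--Hilbert) argument sketched just before the statement, turning the vanishing integral \eqref{eq:netzero} into a Cauchy-type relation by collecting the pole of $G$ and the jumps of $F$. First I would justify \eqref{eq:netzero} itself by checking that the integrand $\Phi(u'):=G(u-u')\,F(\xi,u',v)$ is genuinely elliptic in $u'$ for fixed $v\in A$. The free-fermion kernel \eqref{fermitwopoint1} is anti-periodic, $G(w+1)=G(w+\tau)=-G(w)$, which follows at once from the quasi-periodicity of $\vartheta_1$ and $\vartheta_3$; hence $u'\mapsto G(u-u')$ acquires a factor $-1$ under $u'\to u'+1$ and under $u'\to u'+\tau$. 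By the periodicity \eqref{period} of $F$ established in Theorem \ref{thm3}, the map $u'\mapsto F(\xi,u',v)$ does the same, so the two signs cancel and $\Phi$ is doubly periodic. The contributions of opposite sides of $\gamma_\square$ then cancel pairwise, giving $\oint_{\gamma_\square}\Phi\,du'=0$.

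Next I would deform $\gamma_\square$ inward. Inside the fundamental parallelogram the only singularities of $\Phi$ in $u'$ are the simple pole of $G(u-u')$ at $u'=u$ (with $\Im(u)\to 0^-$, $\Re(u)\in A$) and the branch cut of $F(\xi,\cdot,v)$ along $A$. Collapsing the contour therefore yields, on one hand, the residue at $u'=u$ and, on the other, an integral hugging the cut $A$ from below and above, which is the difference of the two boundary values of $\Phi$ across $A$. Using $G(w)\sim\tfrac{1}{2\pi i}\,w^{-1}$ (so that the residue of $u'\mapsto G(u-u')$ at $u'=u$ is $-\tfrac{1}{2\pi i}$), the pole contributes a local term proportional to $F(\xi,u-i0,v)$. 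Into the cut-hugging piece I would insert the jump condition \eqref{eq:jump} to eliminate the upper boundary value $F(\xi,u'+i0,v)$ in favour of the lower one $F(\xi,u'-i0,v)$; this collapses the two-sided contour integral into the single-sided principal-value integral $\bint_A du'\,G(u-u'-i0)\,F(\xi,u'-i0,v)$ appearing in the statement, at the cost of a residual term carrying the distributions $\sum_{n}\delta^{(n)}(u'-v)\langle O_n\rangle$.

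It then remains to recognise this residual term. Pairing each $\delta^{(n)}(u'-v)$ with $G(u-u'-i0)$ under $\int_A du'$ produces $\sum_{n=0}^{2h-1}\langle O_n\rangle\,G^{(n)}(u-v-i0)$, which by the identity $\sum_{n=0}^{2h-1}\langle O_n\rangle\,G^{(n)}(u-v-i0)=\langle\Omega_\beta|\phi(u)\phi(v)\Omega_\beta\rangle$ noted above equals precisely the thermal two-point function on the left of the corollary. Collecting the three contributions and clearing the common denominator $\xi-\half$ coming from \eqref{eq:jump} then produces the factor $(\xi-\half)$ in front of $F(\xi,u-i0,v)$ and fixes the normalisation of the Cauchy-type kernel, which is exactly $G$; the precise weights $(\xi\pm\half)$ and the overall sign are dictated by this bookkeeping together with the orientation of $\gamma_\square$.

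The step I expect to be the main obstacle is the regularisation near $u'=u$. Since $u$ lies infinitesimally below the cut $A$, at the very locus where $F$ carries both its jump \eqref{eq:jump} and its local $(u'-v)^{-2h}$ singularity \eqref{pole}, the separation of the residue from the cut integral, the use of the jump condition, and the definition of the principal value $\bint_A$ all occur at coincident points and must be carried out in the distributional sense justified in lemma \ref{lemma15}. The $-i0$ prescriptions attached to $u$ and to the argument of $G$ in the statement serve precisely to fix, consistently, on which side of the pole the contour passes; making rigorous the claim that residue extraction commutes with taking the distributional boundary values is the delicate point, and is exactly what the regulated symbol $\bint$ and the accompanying remark are meant to encode.
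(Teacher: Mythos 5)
Your proposal is correct and follows essentially the same route as the paper: the anti-periodicity of the free-Fermi kernel $G$ and of $F$ (thm.~\ref{thm3}) makes $G(u-u')F(\xi,u',v)$ doubly periodic, the contour integral \eqref{eq:netzero} vanishes, and deforming $\gamma_\square$ onto the cuts while collecting the pole of $G$ and inserting the jump condition \eqref{eq:jump} together with $\sum_n \langle O_n\rangle G^{(n)}(u-v-i0)=\langle\Omega_\beta|\phi(u)\phi(v)\Omega_\beta\rangle$ yields the stated identity. Your closing remarks on the endpoint/coincidence-point regularisation correctly identify the role of the $\bint_A$ prescription (though it is the distributional extension of the remark following the corollary rather than a principal value in the strict sense), so the argument matches the paper's.
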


\noindent
{\bf Remark/Definition:}
The meaning of the regularized integral operation $\bint_A$ is somewhat subtle because because  
$F^\pm(\xi, u, v) = F(\xi, u \pm i0, v)$ are distributions of $u \in \RR$ for fixed $v \in A$. 
As such, there is no obvious meaning to the integral in cor. \ref{cor:2}. To define $\bint_A$, we 
must remember the origin of the formula from an integration along a contour fitting tightly around $A$, see 
fig. \ref{fig:1}. We consider tempered distributions $\widetilde F^\pm(\xi, u, v)$ of $u \in \RR$ with the property that 
\begin{enumerate}
\item[(a)] $\widetilde F^+(\xi, u, v) - \widetilde F^-(\xi, u, v) = F^+(\xi, u, v) - F^-(\xi, u, v)$, 

\item[(b)] $\widetilde F^\pm(\xi, u, v) |_{A'}=0$, as a distribution tested with functions of $u$ having compact support in 
$A'$ (thus away from any boundary point), and 

\item[(c)] $\widetilde F^\pm(\xi, u, v) |_{A}=0$ satisfies the jump condition \eqref{eq:jump} of thm. \ref{thm3}
as a distribution tested with functions of $u$ having compact support in 
$A$ (thus away from any boundary point). 
\end{enumerate}
We then define the regulated integral by replacing $F^-$ with $\widetilde F^-$, now integrated over all $u$, which is 
now well-defined. Cor. \ref{cor:2} holds with this prescription, since (a), (b), (c) are all that is used about $F$ in the vicinity of 
the real axis in the above proof. 

The existence of $\widetilde F^\pm$ can be seen as follows. In $A$, it is defined to be equal
to $F^\pm$. In $A'$, it is defined to zero. This defines distributions for all $u \in \RR$, except for the boundary points $q_i$. 
We wish to define $\widetilde F^\pm$ by a suitable distributional extension. At the boundary points the unextended 
distributions have ``scaling degree'' at most $-h$ (by thm. \ref{thm3}), i.e. finite. Therefore, by standard results (see e.g. \cite{hollands0}), 
an extension exists and is unique up to addition of $\sum_{q_j} \sum_{k \le h-1} {}^\pm A_j^k \delta^{(k)}(u-q_j)$. 
The free parameters must be adjusted in such a way that (c) holds for all $u \in \RR$, not just inside $A$ away from 
the boundary points, and such that $(a)$ still holds. This uniquely determines $\widetilde F^\pm$, and hence our prescription 
$\bint_A$.\footnote{ 
We remark that in the expressions considered below in sec. \ref{sec:U1current}, 
our regulated integral has the same meaning as that explained in sec. 4 of \cite{Casini1}.}

\medskip
\noindent
 The corollary imposes on $F$ an integral equation at 
which we will look at in more detail in the following example. The example also suggests that corollary \ref{cor:2} 
can be used to find the modular operator in the case of a general fermionic primary field of dimension $h$. 

\subsection{Example: Modular Hamiltonian for thermal state of free Fermi field}
\label{sec:ModFermbeta}

Now we analyze what we can learn from thm. \ref{thm3} and cor. \ref{cor:2} for the free Fermi field, where $h=\half$ and $\{\psi(u), \psi(v) \} = \delta(u-v)1$. 
For this theory, we can actually prescribe thermal states \eqref{thermal} using either the Neveu-Schwarz or the Ramond 
representations. For the Neveu-Schwarz (vacuum) sector, the representation is the half-integer moded 
expansion \eqref{NS}, now denoted as $\psi_{\rm NS}(z)$. 
For the Ramond-sector, the representation corresponds to an integer moded expansion  given in \eqref{R} now denoted 
as $\psi_{\rm R}(z)$. The Gibbs 
states \eqref{thermal} in the Neveu-Schwarz/Ramond representation correspond, accordingly, to the thermal state vectors $|\Omega_{{\rm NS},\beta}\rangle$ and
$|\Omega_{{\rm R},\beta}\rangle$, respectively. The thermal 2-point functions are \cite{zuber}:
\ben
\label{fermitwopoint}
\langle \Omega_{{\rm X},\beta} | \psi_{\rm X}(u) \psi_{\rm X}(v) \Omega_{{\rm X},\beta} \rangle 
= G_{\rm X} (u-v-i0)
\een
 where the 
subscript ${\rm X} \in \{ {\rm R, NS} \}$ indicates the choice of boundary conditions (sector).
For NS, the definition was already given above in \eqref{fermitwopoint1}, whereas for R, we 
set 
\ben
\label{fermitwopoint1}
G_{\rm R}(u-v)=\frac{\vartheta_1'(\tau)}{2\pi i \vartheta_{2}(\tau)} \frac{\vartheta_{2}(u-v;\tau)}{\vartheta_1(u-v;\tau)}
\een
Next we define $F_{\rm X}$ as in \eqref{Fdef}. By thm. \ref{thm3} we know its properties, with the only trivial difference that $F_{\rm R}(s, u+1, v)=F_{\rm R}(\xi, u, v)$ in the Ramond sector, where the 
subscript ${\rm X} \in \{ {\rm R, NS} \}$ indicates the choice of boundary conditions.
These properties uniquely determine $F_{\rm X}$ in our case (cf. table \ref{tab:1}):
\ben
\begin{split}
F_{\rm R,NS}(\xi, u, v) =& \frac{1}{\xi+\half} \frac{\vartheta_1'(\tau)}{2\pi i \vartheta_1(u-v; \tau)} \frac{\vartheta_{2,3}(u-v- is|A|/2\pi;\tau)}{\vartheta_{2,3}(- is|A|/2\pi;\tau)} \\
& \times
\left( \frac{\Pi_a(u;\tau) \Pi_b(v;\tau)}{\Pi_b(u;\tau) \Pi_a(v;\tau)} \right)^{is/2\pi},
\end{split}
\een
assuming for instance $-\epsilon<\Im(u)<0$. Here $|A|=\sum_j (b_j-a_j)$ is the total length of the intervals,
\ben
\label{sdef}
e^s = \frac{\xi+\half}{\xi-\half}, 
\een
and
\ben
\Pi_a(u;\tau) = \prod_{i=1}^p\vartheta_1(u-a_i;\tau)
\een 
etc. 

By the well-known periodicity properties and pole/zero structure of the $\vartheta$-functions, it is easy to check that the above formula indeed satisfies the 
the properties known to hold by thm. \ref{thm3}. For instance, the role of the second line is to give the required jump across $A$, and the role of the pole at $u=v$ in the first line 
gives the delta-function in \eqref{jump11}, also consistent with \eqref{pole}. To show that $F$ as given is the only solution to the properties in 
thm.  \ref{thm3}, we multiply any solution $F$ with the inverse of the second line. This will cancel the jump of $F$ except for the delta function in \eqref{jump11}. 
Thus, the multiplied $F$ can have a first order pole at $u=v$ mod $\ZZ$ with residue $1/(\xi- \half)$, by \eqref{jump11}.  It cannot have poles, however, 
at the end-points $q_i$ of the intervals, since thm.  \ref{thm3} tells us that $|F| \lesssim |q_i-u|^{-1/2}$. Thus, the multiplied function $F$ is meromorphic with a 
single pole in the fundamental parallelogram, with the same residue and periodicity properties as the first line. There can only be one such function, by 
well-known results on elliptic functions. 

Knowing $F$, we get the kernel of the modular hamiltonian $\langle \Omega_{{\rm X},\beta} | \psi_{\rm X}(u) (\log \Delta) \psi_{\rm X}(v) \Omega_{{\rm X},\beta} \rangle$ by integrating up 
\eqref{modularham}. Thus, we have, in principle, found this kernel. The (non-trivial) integration over $\xi$ in \eqref{modularham} has recently performed 
by \cite{blanco,fries}, who have independently arrived at an analogous formula as for $F$ for a closely related quantity, 
by a different method based on resolvents. It is instructive to see more precisely 
how this method is related to ours, as it may also shed light on how to proceed in the case of fields with general conformal dimension $h$. 

For this purpose, let us set $G_{{\rm X}}(u-v)$ be the thermal 2-point function given by \eqref{fermitwopoint}, with ${{\rm X}} \in \{ {\rm NS,R}\}$. 
By thm. \ref{thm3}, $F_{{\rm X}}$ has the 
same periodicity as $G_{{\rm X}}$, so the combination $G_{{\rm X}}(u-v') F_{{\rm X}}(\xi,u,v)$ is doubly periodic in $u$ for any fixed $v\in A$. Thus, integrating again around a contour $\gamma_\square$ surrounding the fundamental 
parallelogram as in fig. \ref{fig:1}, we get zero as in \eqref{eq:netzero}, with $G=G_X$ now in that equation.  
Such an identity also derived by \cite{fries} via a different argument using the special properties of the free Fermi field. 
By contrast, we have so far only used general properties and thm. \ref{thm3}, so our argument works for any CFT. 

Now, we deform again the contour $\gamma_\square$ to a contour tightly surrounding the cuts $A$ inside the fundamental 
parallelogram, see fig. \ref{fig:1}, use the jump properties and bounds on $F$ as given in thm. \ref{thm3} and collect the residue. 
While these steps can still be performed in any CFT according to thm. \ref{thm3}, the free Fermi case is especially simple because the 
poles of $F$ near the end-points $q_i$ of the intervals are of the order $|u-q_i|^{\half}$, and hence integrable.
Thereby, we immediately get the integral equation
\ben
G_{{\rm X}}(v'-v-i0) = (\xi-\half)F_{{\rm X}}(\xi,u-i0,v) + \int_A \dd u' \, G_{{\rm X}}(u-u'-i0)  F_{{\rm X}}(\xi,u'-i0,v) 
\een
without any need for a regulator in the integral as in cor. \ref{cor:2}.
Dropping the subscript ``$_{{\rm X}}$'' for ease of notation and indicating by $G_A \equiv G|_A$ the restriction of the kernels \eqref{fermitwopoint} to $A \times A$, 
we may view $G_A$ as an operator $L^2(A) \to L^2(A)$. In fact, the anti-commutation relations and the positive nature of the 
Hilbert space inner product in \eqref{fermitwopoint} imply the operator inequalities $0 \le G_A \le 1$. Now we define $F^\pm_A(\xi)$ the operator
defined by the restriction $F(\xi,u\pm i0,v)$ to $A$ in kernel notation, i.e. when 
$u,v \in A$. 
We can then also write the above integral equation in operator notation as $G_A =(G_A + \xi - \half)F^-_A(\xi)$ 
and solve it as 
\ben
F^-_A(\xi)=(G_A + \xi - \half)^{-1}G_A
\een
when $\xi \in (\half, \infty) \cup (-\half, -\infty)$. 
In view of \eqref{modularham}, we should then further calculate, in operator notation  
\ben
\int_{\half}^\infty \dd \xi \bigg( F^-_A(\xi) + F^-_A(-\xi) \bigg) 
=  \log(G_A^{-1}-1)G_A. 
\een
Thus, as a kernel, the modular hamiltonian is given by 
\ben
\label{peschl1}
\langle \Omega | \psi(u) (\log \Delta) \psi(v) \Omega \rangle = \bigg(  \log(G_A^{-1}-1)G_A \bigg)(u,v). 
\een
Formally, this may also be expressed as saying $H_A= \half \int_{A \times A} k_A(x,y) \psi(x) \psi(y) \dd x \dd y$ and $k_A=\log(G_A^{-1}-1)$ is 
the hamiltonian appearing in eq. \eqref{eq:modular_formal}.
This is in accord with a well known formula \cite{PE1} for the modular Hamiltonian of the free Fermi field in terms the restricted 2-point function $G_A$, which can be derived 
using the Fock-space structure of theory. In the present case, however, our derivation was rather different and can be paralleled for fermionic fields in general CFTs which are not necessarily equivalent to free field theories. As derived here, our formula holds for thermal states in either 
the NS or the R sector, choosing for $G$ either one of the 2-point functions \eqref{fermitwopoint}. Thus, apart from confirming our method, this relation to integral equations might also be useful in the case of more general CFTs where one does not have a priori relations like \eqref{peschl1}.

\subsection{Bosonic fields}
\label{sec:bosons}

We now repeat a similar analysis for a bosonic field $\phi$ of dimension $h \in {\mathbb N}$, satisfying our standing assumptions \ref{ass2}. 
In fact, most results in sec. \ref{sec:fermions} for fermions hold just as well for bosons, with 
nearly identical proof. A difference, however, arises in the jump condition \eqref{jump0}, which involves the anti-commutator of $\phi$, whereas for bosons, we would like to have the commutator. This means that we should work with a different definition of $F$, \eqref{Fdef}. In the bosonic case, we instead define 
using the shorthands $|\Omega\rangle \equiv |\Omega_\beta \rangle$ and 
$\Delta \equiv \Delta_{\beta, A}$
\ben
\label{Fdef1}
F(s, u, v) = 
\begin{cases}
\langle \Omega | \phi(u) [1-e^s (1-\Delta)^{-1}]^{-1} \phi(v) \Omega \rangle & \text{if $-\beta < \Im(u)<0$,}\\
\langle \Omega | \phi(v) [1-e^s(1-\Delta^{-1})^{-1}]^{-1} \phi(u) \Omega \rangle & \text{if $\beta > \Im(u)>0.$}
\end{cases}
\een
Since $\Delta>0$, $F$ is well defined, in the distributional sense if $s>0$, and we refrain from using a different symbol for $F$ even though it is a different quantity compared 
to  sec. \ref{sec:fermions}. The connection to the modular Hamiltonian now follows from the elementary formula 
\ben
\log \Delta = \int_0^\infty \dd s \, \frac{1}{1-e^s(1-\Delta^{-1})^{-1}} , 
\een
giving 
\ben
\label{bosemod}
\langle \Omega | \phi(u) (\log \Delta) \phi(v) \Omega \rangle = \int_0^\infty \dd s \, F(s,u,v). 
\een
Thus, we should determine $F$. 
Again, the key result is a lemma expressing a jump condition. 

\begin{lemma}\label{lemma19}
Let $v \in A$ be fixed. If $u \in A'$, then 
\ben
F(s, u-i0, v) = F(s, u+i0, v). 
\een
If $u \in A$, then 
\ben\label{jump00}
(1-e^s) F(s, u-i0, v) - F(s, u+i0, v) = \langle \Omega | [ \phi(u), \phi(v) ] \Omega \rangle,  
\een
in the sense of distributions. Here $[ \phi(u), \phi(v) ] = \phi(u) \phi(v)-\phi(v)\phi(u)$ is the commutator. 
\end{lemma}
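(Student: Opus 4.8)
The plan is to follow the fermionic argument of Lemma~\ref{lemma17} almost verbatim, replacing its anticommutator manipulations by the resolvent algebra adapted to the bosonic definition~\eqref{Fdef1}. I would treat the two assertions ($u\in A'$ and $u\in A$) separately, and throughout work with fields smeared against test functions supported in $A$ (resp.\ $A'$), so that the triple products $\phi(u)[\cdots]\phi(v)$ are well defined on the relevant dense domain and the symbols $F(s,u\mp i0,v)$ are genuine distributional boundary values. Since $\phi(u)|\Omega\rangle$ continues to a holomorphic vector-valued function on the strip $0<\Im(u)<\beta/2\pi$ by the KMS property, the first line of~\eqref{Fdef1} provides $F(s,u-i0,v)=\langle\Omega|\phi(u)B_-\phi(v)\Omega\rangle$ and the second line provides $F(s,u+i0,v)=\langle\Omega|\phi(v)B_+\phi(u)\Omega\rangle$, where I abbreviate $B_-=[1-e^s(1-\Delta)^{-1}]^{-1}=(1-\Delta)\,[(1-e^s)-\Delta]^{-1}$ and $B_+=[1-e^s(1-\Delta^{-1})^{-1}]^{-1}$, the latter satisfying $B_+=B_-(\Delta^{-1})$ by a one-line computation.

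For $u,v\in A$ I would argue directly, as in Lemma~\ref{lemma17}. Using $S\phi(u)|\Omega\rangle=\phi(u)^*|\Omega\rangle=\phi(u)|\Omega\rangle$ for the hermitian field, together with $J^2=1$, the anti-unitarity of $J$ and $J\Delta^{-1}J=\Delta$, one obtains (as the verbatim bosonic analogue of the step in Lemma~\ref{lemma17}) the reflection identity
\[
\langle\Omega|\phi(v)\,g(\Delta^{-1})\,\phi(u)\Omega\rangle=\langle\Omega|\phi(u)\,\Delta\,g(\Delta)\,\phi(v)\Omega\rangle
\]
valid for any real-valued $g$, the spectrum of $\Delta$ being positive so that $J$-conjugation produces no complex conjugate. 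Applied with $g(\Delta)=B_-$ this rewrites $F(s,u+i0,v)=\langle\Omega|\phi(u)\,\Delta B_-\,\phi(v)\Omega\rangle$, whence
\[
(1-e^s)F(s,u-i0,v)-F(s,u+i0,v)=\langle\Omega|\phi(u)\,[(1-e^s)-\Delta]\,B_-\,\phi(v)\Omega\rangle .
\]
Because $[(1-e^s)-\Delta]\,B_-=1-\Delta$ identically, the right-hand side collapses to $\langle\Omega|\phi(u)(1-\Delta)\phi(v)\Omega\rangle$. Finally the $\Delta^{1/2}$-KMS relation $\Delta^{1/2}\phi(v)|\Omega\rangle=J\phi(v)|\Omega\rangle$ gives $\langle\Omega|\phi(u)\,\Delta\,\phi(v)\Omega\rangle=\langle\Omega|\phi(v)\phi(u)\Omega\rangle$, so the expression reduces to $\langle\Omega|\phi(u)\phi(v)\Omega\rangle-\langle\Omega|\phi(v)\phi(u)\Omega\rangle=\langle\Omega|[\phi(u),\phi(v)]\Omega\rangle$, which is exactly~\eqref{jump00}.

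For $u\in A'$, $v\in A$ there is no jump, and I would invoke modular invariance precisely as in the derivation of~\eqref{KMS3}: since $\sigma^t(\phi(v))=\Delta^{it}\phi(v)\Delta^{-it}$ stays in $\gA(A)$ and commutes with $\phi(u)\in\gA(A')\subseteq\gA(A)'$, while $\Delta^{it}|\Omega\rangle=|\Omega\rangle$, one has
\[
\langle\Omega|\phi(u)\,\Delta^{it}\,\phi(v)\Omega\rangle=\langle\Omega|\phi(v)\,\Delta^{-it}\,\phi(u)\Omega\rangle .
\]
Writing $B_-(\Delta)=\int_{\RR}c(t)\,\Delta^{it}\,\dd t$ in spectral form and using $B_+=B_-(\Delta^{-1})=\int_{\RR}c(t)\,\Delta^{-it}\,\dd t$, the two boundary values $F(s,u\mp i0,v)$ are represented by the same density $c(t)$ paired with equal integrands, hence coincide.

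The step I expect to be the main obstacle is not any new algebra but the distributional and domain bookkeeping: $B_\pm$ and the triple products are defined only after smearing, the $u\mp i0$ denote boundary values of functions holomorphic in the cut strip $\{|\Im u|<\beta/2\pi,\ u\notin A\}$, and in the $u\in A'$ case one must justify the spectral representation of $B_-(\Delta)$ and the interchange of that integral with the distributional pairing in $u,v$. As in Lemma~\ref{lemma17}, the cleanest route is to keep the $u,v\in A$ case entirely algebraic, so that no spectral integral is needed there, and to confine the more delicate spectral argument to the $u\in A'$ case, where it is controlled by the modular-invariance identity above.
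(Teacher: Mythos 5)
Your proposal is correct and takes the route the paper intends: the published proof of this lemma is the single line ``Analogous to the proof of lemma \ref{lemma17}'', and your argument is precisely that analogy carried out in detail, with the bosonic resolvent identity $[(1-e^s)-\Delta]\,B_- = 1-\Delta$ playing the role of the fermionic one and the $J$-reflection step $\langle\Omega|\phi(v)g(\Delta^{-1})\phi(u)\Omega\rangle=\langle\Omega|\phi(u)\Delta g(\Delta)\phi(v)\Omega\rangle$ reproducing the corresponding manipulation there. Your handling of the $u\in A'$ case via the spectral representation of $B_-$ and $[\sigma^t(\phi(v)),\phi(u)]=0$ likewise matches the second half of the proof of lemma \ref{lemma17} verbatim.
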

\begin{proof}
Analogous to the proof of lemma \ref{lemma17}.
\end{proof}

Our assumptions \ref{ass2} and the bosonic nature of $\phi$  imply the operator expansion
\ben
[ \phi(u), \phi(v) ] = \sum_{n=0}^{2h-1} \delta^{(n)}(u-v) O_n(\half(u+v)), 
\een
which may be used to eliminate the commutator in the jump condition \eqref{jump00}. Certain obvious changes also apply 
to the periodicity of $F$, which is now \eqref{period} without the ``$-$'' sign. In the proof of lemma \ref{lemma15} we now use 
the formula 
\ben
\label{jump12}
\frac{1}{1-e^s(1-\Delta^{-1})^{-1}} = 1 - \frac{i}{2} \frac{1}{e^s-1} \int_{-\infty}^\infty \dd t  \frac{[(e^s-1)\Delta]^{it}}{\sinh[\pi(t-i0)]}.
\een
instead of \eqref{jump11}, leading now to $F(s,u-i0,v) \sim \frac{e^{-i\pi h}}{2\pi } (u-v-i0)^{-2h}$ for $u$ near $v$. Altogether, this gives the 
following variant of thm. \ref{thm3} for bosonic fields:

\begin{theorem}\label{thm4}
\begin{itemize}

\item
For fixed $v \in A$ and $s > 0$, 
$F(s, u,v)$ is a holomorphic function of $u$ on the periodically cut complex plane $\CC \setminus (A + \ZZ + \tau \ZZ)$
with the periodicity $F(s, u+1, v) = F(s,u+\tau, v) = F(s,u,v)$.
\item
Across the cuts, $u \in A$, $F$ satisfies the jump condition 
\ben
\label{eq:jumpboson}
(1-e^s) F(s, u-i0, v) - F(s, u+i0, v)=\sum_{n=0}^{2h-1} \delta^{(n)}(u-v) \langle O_n \rangle.
\een

\item
Near the end-points $q_i$ of the intervals, $|F| \lesssim |q_i-u|^{-h}$, as a function of $u$.

\item
Near $v$, 
\ben
\label{pole}
F(s, u- i0, v) \sim \frac{e^{-i\pi h}}{2\pi }(u-v-i0)^{-2h} 
\een
as a function of $u$.

\item
Analogous properties hold true for $u \leftrightarrow v$. 
\end{itemize}

\end{theorem}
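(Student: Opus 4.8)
The plan is to obtain the four bullet points by transcribing the arguments that established Theorem \ref{thm3} for fermions, keeping track of the statistics-dependent sign changes that distinguish the bosonic definition \eqref{Fdef1} from the fermionic one. Each ingredient used in the fermionic case has a bosonic counterpart, and I would assemble them as follows. For holomorphicity and periodicity, note first that the resolvent $[1-e^s(1-\Delta)^{-1}]^{-1}$ is bounded for $s>0$ because $\Delta>0$, so the matrix elements in \eqref{Fdef1} are well defined distributionally. The analytic continuation of $\phi(u)|\Omega\rangle$ to the strip $0<\Im(u)<\beta/2\pi$, together with the KMS property of the Gibbs state, shows that the two cases in \eqref{Fdef1} glue into a single function holomorphic off the real axis, exactly as in the fermionic discussion. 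Lemma \ref{lemma19} then gives no jump across $A'$, so $F$ continues holomorphically there and is holomorphic on $\CC\setminus(A+\ZZ+\tau\ZZ)$. The periodicity $F(s,u+1,v)=F(s,u+\tau,v)=F(s,u,v)$, now with the $+$ sign in contrast to \eqref{period}, follows from the periodicity of the thermal two-point function and the KMS relation, since $\phi$ is bosonic.

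The jump condition \eqref{eq:jumpboson} is precisely Lemma \ref{lemma19} combined with the bosonic operator product expansion $[\phi(u),\phi(v)]=\sum_{n=0}^{2h-1}\delta^{(n)}(u-v)O_n(\tfrac12(u+v))$; taking the rotation-invariant thermal expectation value replaces each $O_n$ by the constant $\langle O_n\rangle$. For the endpoint bound I would invoke the bosonic versions of Lemmas \ref{lemma14} and \ref{lemma15}. These comparison estimates between the modular operator $\Delta_2$ of $\gA(A)$ and the modular operator $\Delta_1$ of the full algebra $\gA=\gB(\H_0)$ rest only on the operator identity \eqref{frede}, on the affiliation and energy-bound clauses of Assumption \ref{ass2}, and on the Gibbs bound $\|\phi_n\Omega\|^2\lesssim(1+|n|)^{2h-1}$, none of which is sensitive to statistics. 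The only change is to replace the fermionic resolvent identity \eqref{jump11} by its bosonic analog \eqref{jump12} in the term built from $\Delta_1$. This yields $|F(s,u\mp i0,v)|\lesssim\sum_{q_j}|u-q_j|^{-h}|v-q_j|^{-h}+|u-v|^{-2h}$, giving in particular the stated $|q_i-u|^{-h}$ behavior near each endpoint.

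Finally, the short-distance behavior \eqref{pole} near $u=v$ emerges from the same decomposition $F=F_1+(F-F_1)$, where $F_1$ is built from $\Delta_1$. Using \eqref{jump12} for $\Delta_1$, the leading $1$ term produces $\langle\Omega|\phi(u)\phi(v)\Omega\rangle$, whose UV behavior matches the vacuum normalization \eqref{ddt}, namely $\frac{e^{-i\pi h}}{2\pi}(u-v-i0)^{-2h}$; the remaining integral term is the boundary value of a function holomorphic near $\Im(t)=0$ and hence bounded, while $F-F_1$ is controlled by Lemma \ref{lemma14}, both strictly less singular. The assertion with $u\leftrightarrow v$ follows by running the identical argument on the second line of \eqref{Fdef1}.

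The main obstacle I anticipate is this last point: pinning down the precise normalization $e^{-i\pi h}/(2\pi)$ and the boundary-value prescription $(u-v-i0)^{-2h}$ requires carefully matching the $i0$-conventions coming from \eqref{jump12} with those of the KMS continuation in $u$, and verifying that the thermal two-point function of a dimension-$h$ bosonic field carries the same leading UV singularity as the vacuum. Everything else should be a faithful, sign-adjusted transcription of the fermionic proof, so the novelty of the argument lies entirely in confirming that the resolvent identity \eqref{jump12} reproduces the commutator jump rather than the anticommutator one.
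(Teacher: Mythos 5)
Your proposal matches the paper's own argument: the paper likewise proves Theorem \ref{thm4} by transcribing the fermionic analysis (Lemmas \ref{lemma17}, \ref{lemma14}, \ref{lemma15}) with the bosonic definition \eqref{Fdef1}, the commutator jump of Lemma \ref{lemma19} combined with the bosonic operator product expansion, the resolvent identity \eqref{jump12} in place of \eqref{jump11}, and the sign-free periodicity. The approach and all key ingredients are the same, so no further comparison is needed.
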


\begin{remark} We will see below  that the theorem, or alternatively the following cor. \ref{cor:3}, may be used to find $F$, and thereby 
the modular Hamiltonian in view of \eqref{bosemod}. See sec. \ref{sec:U1current} for examples. 
\end{remark}

In the case of bosonic fields, it is apparently not straightforward to get an analog of the integral equation in cor. \ref{cor:2}. 
The problem is that $G$, the analogue of \eqref{fermitwopoint1} should have a simple pole at $u=v$ and be such that 
$G(u-u')F(s,u',v)$ is doubly periodic in $u'$ for fixed $v \in A$. Since, by contrast to fermionic fields, $F(s,u,v)$ is itself 
doubly periodic in $u$, the simplest choice -- in the absence of any other structural properties of $F$, would be to choose
$G$ as doubly periodic, too. However, as is well known there is no such meromorphic function with only one simple pole in the fundamental 
parallelogram. We will, however, be able to make a similar construction below in the case of the $U(1)$ current, where 
$F$ has an additional property. 

In the general case, we can nevertheless still find an integral equation in the vacuum case, where we can simply set 
$G(x-y) = (2\pi i)^{-1}(x-y)^{-1}$. Here, we go back to the real line picture in which the theory is living on $\RR$ parameterized by $x,y$, and 
we correspondingly write $F(s,x,y)$, etc.
Thm. \ref{thm4} still applies to the vacuum case: By inspection of the proof, we can take in the end the limit $\beta \to \infty$. 
A similar type of argument as for cor. \ref{cor:2} now leads to (using $\langle O_n \rangle = 0$ except when $O_n=1$ in the vacuum state):

\begin{corollary}\label{cor:3}
$F$ defined as in \eqref{Fdef1} (for a bosonic primary field $\phi$ of dimension $h$ in 
standard nomalization \eqref{ddt}) satisfies for the vacuum state $|\Omega\rangle \equiv|\Omega_0\rangle$:
\ben
\begin{split}
\langle \Omega_0|\phi(x) \phi(y) \Omega_0\rangle
=& F(s,x,y+i0) \\
& + e^s \, \frac{1}{2\pi i} \bint_A \dd y' \, F(s,x,y'+i0)  (y'-y-i0)^{-1}  .
\end{split}
\een
for $x,y \in A \subset \RR$ (lightray picture).
\end{corollary}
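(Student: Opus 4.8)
The plan is to reproduce, in the vacuum/lightray picture, the contour argument that gave cor.~\ref{cor:2}, with the fundamental parallelogram replaced by a large circle and double periodicity replaced by decay at infinity. First I would fix $x\in A$ and study $v\mapsto F(s,x,v)$, with $F$ as in \eqref{Fdef1}. By thm.~\ref{thm4} and its $u\leftrightarrow v$ analogue (valid in the vacuum case via the $\beta\to\infty$ limit discussed above), this is holomorphic on $\CC\setminus A$, has the endpoint bound $|F|\lesssim|q_i-v|^{-h}$, the pole \eqref{pole} at $v=x$, and across the cut $v'\in A$ obeys the jump relation $(1-e^s)F(s,x,v'-i0)-F(s,x,v'+i0)=\langle\Omega_0|[\phi(x),\phi(v')]\Omega_0\rangle$. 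The one genuinely new ingredient is the behaviour as $|v|\to\infty$: from the mode expansions and the structure seen in lemma~\ref{lem4} one reads off $F(s,x,v)=O(|v|^{-2h})$, which for $h\ge1$ decays faster than $|v|^{-1}$ and so suppresses the circle at infinity in the Cauchy integral below.

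Next I would consider $\oint G(v'-y)F(s,x,v')\,dv'$ with the vacuum kernel $G(w)=(2\pi i)^{-1}w^{-1}$, evaluated first on a circle $|v'|=R$. By the decay just recorded this tends to $0$ as $R\to\infty$, and I would then deform the contour onto the cut $A$. Two contributions survive: the residue at the Cauchy pole $v'=y$, giving a local term $F(s,x,y+i0)$; and the collapse onto $A$, giving $\int_A G(v'-y)\,[F(s,x,v'-i0)-F(s,x,v'+i0)]\,dv'$ up to orientation. Inserting the jump relation in the rearranged form $F(s,x,v'-i0)-F(s,x,v'+i0)=e^sF(s,x,v'-i0)+\langle\Omega_0|[\phi(x),\phi(v')]\Omega_0\rangle$ splits this into the advertised $e^s$-weighted Cauchy integral of $F$ plus a commutator piece.

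It then remains to identify the commutator piece with the left-hand side. Using the bosonic operator expansion $[\phi(x),\phi(v')]=\sum_{n=0}^{2h-1}\delta^{(n)}(x-v')O_n(\half(x+v'))$ entering \eqref{eq:jumpboson}, together with the relation $\sum_n\langle O_n\rangle\,G^{(n)}(x-y)=\langle\Omega_0|\phi(x)\phi(y)\Omega_0\rangle$ (the Cauchy-kernel analogue of the identity used for cor.~\ref{cor:2}, consistent with the normalisation \eqref{ddt}), the $\delta$-functions localise at $v'=x$ and the commutator term collapses to the vacuum two-point function with the stated boundary value. Rearranging the resulting identity reproduces the corollary; the bookkeeping of the $\pm i0$-prescriptions and overall orientation signs that turns this into precisely the displayed form is routine and parallels the fermionic case.

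The real obstacle is the one already flagged before cor.~\ref{cor:2}: since $\phi$ is bosonic, $h\ge1$, so the endpoint singularities $|q_i-v'|^{-h}$ are non-integrable and the contour cannot literally be collapsed onto $A$. Hence $\bint_A$ must be read in the regulated, distributional sense of the Remark/Definition following cor.~\ref{cor:2}: one replaces the boundary value of $F$ by a tempered extension $\widetilde F$ that agrees with it on $A$, vanishes on $A'$, carries the same jump, and whose endpoint ambiguity (additions $\sum_{k\le h-1}A^k_j\delta^{(k)}(v'-q_j)$, permitted because the scaling degree at each $q_j$ is at most $-h$ and hence finite) is fixed by demanding the jump relation hold globally. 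The main point requiring care is to verify that every step of the contour manipulation uses only the properties (a)--(c) of $\widetilde F$ near the real axis, so that the regulated identity is the correct one; this is exactly the verification already carried out for cor.~\ref{cor:2}.
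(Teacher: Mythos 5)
Your overall strategy coincides with the paper's: a Cauchy--kernel contour argument in the second variable, using the $O(|y'|^{-2h})$ decay at infinity read off from the mode expansions, the jump condition of thm.~\ref{thm4} across $A$, the residue at the Cauchy pole, the identification $\sum_n \langle O_n\rangle G^{(n)}(x-y-i0)=\langle \Omega_0|\phi(x)\phi(y)\Omega_0\rangle$ of the commutator piece with the two-point function, and the regulated integral $\bint_A$ to absorb the non-integrable endpoint singularities when $h\ge 1$. (The paper performs the deformation in the circle picture, expanding a small contour around the origin out to a large circle and Cayley-transforming back; your lightray version with a large circle shrunk onto the cuts is the same argument.)

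The gap sits precisely in the step you declare routine. The jump condition \eqref{eq:jumpboson} is stated in the \emph{first} variable, where the operator-ordered representation $\langle \Omega_0|\phi(u)[1-e^s(1-\Delta)^{-1}]^{-1}\phi(v)\Omega_0\rangle$ is the $u-i0$ boundary value; in the \emph{second} variable the operator-ordered representation is the $v'+i0$ boundary value (since $\phi(v')|\Omega_0\rangle$ continues to $\Im v'>0$), so the roles of the two boundary values are exchanged. Repeating the $S=J\Delta^{1/2}$ manipulation of lemmas \ref{lemma17}, \ref{lemma19} in the second slot gives
\ben
(1-e^s)\,F(s,x,v'+i0)-F(s,x,v'-i0)=\langle \Omega_0|[\phi(x),\phi(v')]\Omega_0\rangle ,
\een
hence $F(s,x,v'-i0)-F(s,x,v'+i0)=-e^s F(s,x,v'+i0)-\langle \Omega_0|[\phi(x),\phi(v')]\Omega_0\rangle$. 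Your rearrangement $F^--F^+=e^sF^-+\langle[\,\cdot\,,\cdot\,]\rangle$ is the literal transcription of the first-variable relation; fed into the contour identity it produces $e^s\bint_A F(s,x,v'-i0)(\dots)$ --- the boundary value from the half-plane that does \emph{not} contain the Cauchy pole --- and reverses the sign of the commutator term, so the computation does not close to the displayed identity. You need to derive the second-variable jump relation explicitly rather than transplant the first-variable one. A useful check that pins down the relative signs in the final formula is the limit $s\to\infty$: there $F\to 0$ while $e^sF(s,x,v'+i0)\to -\langle \Omega_0|\phi(x)(1-\Delta)\phi(v')\Omega_0\rangle=-\langle \Omega_0|[\phi(x),\phi(v')]\Omega_0\rangle$, and the identity must degenerate to $\langle \Omega_0|\phi(x)\phi(y)\Omega_0\rangle=\sum_n\langle O_n\rangle G^{(n)}(x-y-i0)$; I recommend running this check against the sign conventions you end up with.
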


\begin{figure}
\centering
\begin{tikzpicture}[scale=1.0]

\draw[gray] (2,0)  arc[radius = 2, start angle= 0, end angle= 360];
\draw (2.1,0) node[right]{$a_1$}; 
\draw (-2.1,0) node[left]{$a_2$}; 
\draw ({2.1*cos(321.5)},{2.1*sin(321.5)}) node[above,right]{$b_2$};
\draw ({2.3*cos(62.5)},{2.3*sin(62.5)}) node[below,right]{$b_1$};

\draw[black,very thick] (2,0)  arc[radius = 2, start angle= 0, end angle= 60];
\draw[black,very thick] (-2,0)  arc[radius = 2, start angle= 180, end angle= 320];

\draw[directed,red,thick] (4,0)  arc (360:0:4);
\draw[red,thick] (.1,0)  arc (360:0:.1);
\draw (0,-.2) node[below]{$\gamma_0$}; 
\draw (-.2,-4.3) node[right]{$\gamma_\infty$}; 

\draw [red,thick,directed, domain=-2:61.5] plot ({2.1*cos(\x)}, {2.1*sin(\x)});
\draw [red,thick,directed, domain=61.5:-1.5] plot ({1.9*cos(\x)}, {1.9*sin(\x)});
\draw [red,thick] (1.9,-.07) -- (2.1,-.07);
\draw [red,thick] ({1.9*cos(61.5)},{1.9*sin(61.5)}) -- ({2.1*cos(61.5)},{2.1*sin(61.5)});

\draw [red,thick,directed, domain=178:322] plot ({2.1*cos(\x)}, {2.1*sin(\x)});
\draw [red,thick,directed, domain=321.5:178.5] plot ({1.9*cos(\x)}, {1.9*sin(\x)});
\draw [red,thick] (-1.9,.07) -- (-2.1,.07);
\draw [red,thick] ({1.9*cos(321.5)},{1.9*sin(321.5)}) -- ({2.1*cos(321.5)},{2.1*sin(321.5)});

\draw (1,1) node[below]{${\mathbb D}^+$}; 
\draw (3.3,3.3) node[right]{${\mathbb D}^-$}; 

\end{tikzpicture}
\caption{Illustration of the contour deformation in the complex $y'$-plane.}
\label{fig:2}
\end{figure}

\begin{proof}
For the proof, we first work in the circle picture where $x,y \in \SS$. For our assumptions about the fields, we have
for $z \in \bD^+ = \{ z: |z|<1 \}$ that $\phi(z)|\Omega_0\rangle = \sum_{n \ge h} z^{n-h} \phi_n |\Omega_0\rangle$ 
which is holomorphic in $\bD^+$. Likewise, for $z  \in \bD^- = \{ z: |z|>1 \}$ that $\langle\Omega_0| \phi(z) = \sum_{n \ge h} z^{-n-h} \langle \phi_n \Omega_0|$
which is holomorphic in $\bD^-$ and goes as $|z|^{-2h}$ for $|z| \to \infty$. This implies that, for fixed $x \in A$, $F(s,x,y)$ is a holomorphic 
function in $\CC \setminus A$ decaying like $|y|^{-2h}$ for $|y| \to \infty$. Now we consider $y \in \bD^+$, a small contour $\gamma_0$ around $y'=0$, and the identity
\ben
\frac{1}{2\pi i} \oint_{\gamma_0} \dd y' \, F(s,x,y')(y'-y)^{-1} =0. 
\een
We then move the contour $\gamma_0$ across the cuts $A$ as in fig. \ref{fig:2}, deform it to a very large circle $\gamma_\infty$, 
use the decay of $F(s,x,y')$ as well as the jump conditions
\eqref{eq:jumpboson} across the cuts $A$ and collect the residue. The statement then follows after transforming back to the lightray picture. 
\end{proof}

\subsection{Example: Modular Hamiltonian of free $U(1)$-current}
\label{sec:U1current}

The $U(1)$ current, $j$, was introduced in example 6 above. It has dimension $h=1$ and commutator $[j(u),j(v)]=i\delta'(u-v)$. The thermal 2-point 
function is 
\ben
\label{current2point}
\langle \Omega_\beta | j(u) j(v) \Omega_\beta \rangle = -\frac{1}{2\pi} \left( \wp(u-v-i0; \tau) - \eta_1(\tau) \right) \equiv G(u-v-i0). 
\een 
Here, $\eta_1, \eta_2$ are the constants appearing in connection with the Weierstrass $\wp$-function, see appendix. 

Guessing a -- hopefully unique -- answer for $F$ just from the properties given 
in thm. \ref{thm4} as in the case of the free Fermi field does not seem as straightforward for the current, so we proceed by the 
more deductive method of integral equations. We have already mentioned that for thermal states, 
we have not found a general way to obtain an analogue of cor. \ref{cor:2} for bosons. However, for the case of the U(1)-current, 
there is an additional structural property which helps. 

Consider the Weierstrass $\zeta$-function satisfying $\zeta'=-\wp$, see appendix. It has a single simple pole 
at $u=0$ in the fundamental parallelogram
but is only quasi-periodic, $\zeta(u+1)=\zeta(u) + \eta_1, \zeta(u+\tau)=\zeta(u;\tau) +\eta_2$. As a consequence, 
the function $\zeta(u)- \eta_1 u$ is periodic under $u \to u+1$, and changes by $\eta_2  - \eta_1 \tau=-2\pi i $ under 
$u \to u+\tau$.
The combination $\{ \zeta(v'-v) - (v'-v)\eta_1\} F(\xi,u,v')$ is also not doubly periodic in $v'$ for any fixed $u\in A$. However, integrating around a contour $\gamma_\square$ surrounding the fundamental 
parallelogram as in fig. \ref{fig:1}, 
\ben
\label{cancellation}
0=\oint_{\gamma_\square} \dd u \,  F(s,u,v') \{ \zeta(v'-v) - (v'-v)\eta_1\}, 
\een
still gives zero because of the special property of $F$ that, for $u \in A$, 
\ben
\label{cancellation1}
\begin{split}
\int_0^1 F(s,u,v'+\epsilon) \dd v'
=& 
\begin{cases}
\langle [1-e^s (1-\Delta)^{-1}]^{-1} j(u) \Omega |  j_0 \Omega \rangle & \text{if $-\beta < \Im(\epsilon)<0$,}\\
\langle j_0 \Omega | [1-e^s(1-\Delta^{-1})^{-1}]^{-1} j(u) \Omega \rangle & \text{if $\beta > \Im(\epsilon)>0,$}
\end{cases}
\\
=& \ 0,
\end{split}
\een
which follows from the mode expansion of $j(v)=\sum_{n =-\infty}^\infty  j_n e^{-2\pi i n v}$ and the fact that $j_0$ is the charge operator, which annihilates any 
state in the vacuum sector. Hence also $j_0|\Omega\rangle = 0$ for our thermal state, since it can be viewed as a statistical operator on the vacuum Hilbert space.

If we now evaluate this contour integral as before taking the properties of $F$ in thm. \ref{thm4} into account, then we get
\ben\label{inteqn1}
\begin{split}
&\frac{1}{2\pi} \{ \wp(u-v-i0) - \eta_1 \} = F(s,u,v+i0) \\
&+ \frac{1}{2\pi i} e^s \bint_A \dd v' \,  F(s,u,v'+i0) \{ \zeta(v'-v-i0) - (v'-v) \eta_1 \}.
\end{split}
\een
Here, $\bint$ denotes the regulated integral described in and below cor. \ref{cor:2}.

We now consider the connection with formulae in the literature for the 
kernel of the modular Hamiltonian. First, we decompose 
\ben
\label{Gdecomp}
G(u-v-i0) = S(u-v) + \frac{i}{2}C(u-v), 
\een
with $\frac{i}{2} C(u-v)=\delta'(u-v)$ symmetric (imaginary) and $S(u-v)=-\frac{1}{2\pi} {\rm P.V.}  \left( \wp(u-v) - \eta_1\right)$ the symmetric (real) part of $G$ as in \eqref{current2point} (P.V. means Cauchy principal value).
Since the correlation function satisfies $\langle \Omega | j(f)^* j(f) \Omega \rangle \ge 0$, we get $G(\overline f, f) \ge 0$ for any test-function $f \in C^\infty_0(A, \CC)$. Decomposing $f=g+ih$ into a real and imaginary part, we get 
\ben 
\label{eq:cauchy}
\half |C(h,g)| \le \sqrt{S(h,h)}\sqrt{S(g,g)}
\een
for all $g,h \in C^\infty_{0}(A,\RR)$. We then use the positive definite 
$\RR$-bi-linear form $S$ to define on $C^\infty_{0}(A, \RR)$ an inner product, which we then extend to a hermitian inner product by complex anti-linearity in the first entry. 
Let this hermitian positive definite sesqui-linear form be called $( \ , \ )_S$. Its completion defines a Hilbert space, 
${\mathcal K}_A$, which is contained in the Sobolev space $W_0^{1/2,2}(A)$. It is the 1-particle space of the GNS-Fock 
space built on $|\Omega\rangle \equiv |\Omega_\beta \rangle$. It follows from the definitions 
that ${\mathcal K}_{A} \owns f \mapsto \phi(f) |\Omega \rangle \in {\mathcal H}$ is an isometry. 
Furthermore, since $[1-e^s (1-\Delta^{\pm 1})^{-1}]^{-1}$ is a bounded, self-adjoint operator on ${\mathcal H}$ for $s>0$, it follows from \eqref{Fdef1} 
that the kernel $F^\pm(s,x,y)=F(s,x\pm i0,y)$ extends to a bounded quadratic form on ${\mathcal K}_{A}$, and hence can be identified with a bounded linear 
operator on ${\mathcal K}_{A}$.

By Riesz' theorem, there is a self-adjoint operator on this Hilbert space, $\Sigma_A$, such that $\frac{i}{2} C(f,g) = (f, \Sigma_A g)_S$ for all $f,g \in {\mathcal K}_A$.
This operator satisfies $| \Sigma_A | \le 1$, so $| \Sigma_A^{-1}| \ge 1$ in view of \eqref{eq:cauchy}. It follows that for $s>0$, $(e^{-s} - \half + \half \Sigma^{-1}_A)^{-1}$ exits and is a bounded, self-adjoint operator on ${\mathcal K}_A$. By construction, the operator $\Sigma_A$ is expressible as $(i/2) S_A^{-1} C_A$, with $S_A,C_A,G_A$ the operators defined by the 
restriction of the kernels to $A$. Alternatively,
\ben
-iC_A^{-1}G_A^{} = \half (\Sigma^{-1}_A-1).
\een
Using $C=\delta'$ and following the same argument as in sec. 4 of \cite{Casini1} (where the vacuum state 
was considered), we find that 
\ben\label{CAinv}
C_A^{-1}f(u) = \half \int_a^u f(u') \dd u' - \half \int_u^b f(u') \dd u'
\een
in the case of one interval $A=(a,b)$, leading to
\ben
\begin{split}
2\pi C_A^{-1}G_A^{} 
= \zeta(u-v-i0) - u\eta_1-\half[
\zeta(a-v-i0) +\zeta(b-v-i0) - (a+b)\eta_1
]. 
\end{split}
\een
Note that the terms in $[\dots]$ do not depend on $u$, and therefore could be added in the integral \eqref{cancellation} in view of \eqref{cancellation1}. Therefore, we can write our 
integral equation \eqref{inteqn1} in operator form also as:
\ben
F^+_A(s) [1 + \half e^s(\Sigma^{-1}_A-1) ]  =- G_A
\een
where $F^+_A(s)$ is the bounded operator on ${\mathcal K}_{A}$ corresponding to the kernel $F(s,u,v+i0)$. 
Since we already know that the operator on the right side can be inverted for $s>0$, it follows that 
\ben
\label{Fsoln}
F^+_A(s) = -G_A [1 + \half e^s(\Sigma^{-1}_A-1) ]^{-1}
\een
is a solution to our integral equation in operator notation. Integrating this operator identity over $s$ as in eq. \eqref{bosemod}, we obtain 
\ben
\begin{split}
\int_0^\infty \dd s \, F^+_A(s) =& -G_A \int_0^\infty \dd s \, [1 + \half e^s(\Sigma^{-1}_A-1) ]^{-1} \\
=& \ G_A \int_0^1 \dd \lambda \, [\lambda + \half (\Sigma^{-1}_A-1) ]^{-1} \\
=& \ G_A  \log \left[ \frac{\Sigma^{-1}_A + 1}{\Sigma^{-1}_A-1} \right],
\end{split}
\een
and thereby in view of eq. \eqref{bosemod}
\ben
\label{peschl2}
\langle \Omega | j(u) (\log \Delta) j(v) \Omega \rangle = \left(G_A   \log \left[ \frac{\Sigma^{-1}_A + 1}{\Sigma^{-1}_A-1} \right] \right) (u,v). 
\een
This formula is sometimes quoted in the literature \cite{PE} for the matrix elements of the modular Hamiltonian of a free boson field, and should be thought of as analogous 
to \eqref{peschl1} for the fermion field. A similar derivation also goes through if $A=\cup_i (a_i,b_i)$ consists of an arbitrary number of intervals.  By contrast to 
the results in the literature, our derivation is based on the rigorously derived integral equations, rather than a formal analog with finite-dimensional quantum systems.

We now sketch how to find an explicit expression for the matrix element \eqref{peschl2} of the modular Hamiltonian in the case of one interval. 
To calculate the logarithm of operators, we need the spectral decomposition of the self-adjoint operator $\Sigma_A^{-1}$, which we write as
\ben
\Sigma_A^{-1} = -\int_{-\infty}^\infty \dd s \coth(s) \, U_s(u)  V_s(v)
\een
Here, $V_s$ is a generalized\footnote{
More precisely, $V_s$ is a linear form on a domain ${\mathcal D} \subset {\mathcal K}_A$ defined by $g \mapsto \bint_A V_s(u) g(u) \dd u$. 
}
left-eigenfunction of $\Sigma_A^{-1}$ with eigenvalue $V_s \Sigma_A^{-1}  = -\coth(s)  V_s$, 
and $U_s$ satisfies $\bint_a^b V_s(u) U_{s'}(u) \dd u = \delta(s-s')$. 
Paralleling the steps in sec. 4 of \cite{Casini1}, one can see that ($q_i$ ranging over the endpoints $a,b$ of the interval)
\ben\label{Vdef}
V_s(u) = \sum_{q_i}
\frac{c_{q_i}(s) \vartheta_1(u-q_i+is(b-a)/2\pi)}{\vartheta_1(u-q_i)}\left( -\frac{\vartheta_1(u-a) }{\vartheta_1(u-b) } \right)^{-is/2\pi}
\een
with boundary value prescription $u+i0$ understood, is such a generalized eigenfunction. 
This eigenfunction is constructed in such a way that is has a multiplicative jump by $e^s$ across the interval $(a,b)$ (from the last term), 
and such that it is doubly periodic in $u$ (cf. table \ref{tab:1}). Furthermore, near any one $q_i$ of the endpoints $a,b$, it is bounded by $\lesssim |u-q_i|^{-1}$ in accordance 
with thm. \ref{thm4}. The constants $c_{q_i}(s)$ are at first adjusted so that 
\ben
\label{eq:meanvalue}
\bint_0^1 V_s(u-i0) \, \dd u = 0.  
\een
Then, we have again the relation \eqref{cancellation} with $V_s$ in place of $F$, and evaluating this contour integral by deforming the contour around the fundamental 
parallelogram to a tight contour around $(a,b)$ as in fig. \ref{fig:1} then gives the desired eigenvalue equation. That \eqref{eq:meanvalue} impose no actual loss of 
generality is demonstrated in the end. 

Property \eqref{eq:meanvalue} is evidently equivalent to 
\ben
\sum_{q_i} c_{q_i}(s) I_{q_i}(s) =0, 
\een
where 
\ben
I_q(s) =\bint_a^b \dd u \,  \frac{\vartheta_1(u-q+is(b-a)/2\pi)}{\vartheta_1(u-q)} \left( -\frac{\vartheta_1(u-a) }{\vartheta_1(u-b) } \right)^{-is/2\pi} .
\een
The constants  $c_{q_i}(s)$ can further be adjusted so that 
$-i {\rm sgn}(s) C^{-1}_A \overline V_s = U_s $  
is a right-eigenvector of $\Sigma_A^{-1}$, satisfying the desired normalization
$\bint_a^b V_s(u) U_{s'}(u) \dd u = \delta(s-s')$, see 
sec. 3 of \cite{Casini1} for further discussion of this point. The $\delta$-function in this expression can only come from the the contribution of the integrals
near the boundary points $q_i$, which gives a practical way of evaluating these constraints. 

Using $\log[(\coth(s)-1)/(\coth(s)+1)] = -2s$ and $\Sigma_A^{-1} U_s = -\coth(s) U_s$,
 it immediately follows that 
\ben
\label{eq:modham}
-\frac{i}{2} \log \left[ \frac{\Sigma^{-1}_A + 1}{\Sigma^{-1}_A-1} \right] C_A^{-1} =  \int_{-\infty}^\infty \dd s \, |s| \,  U_s(u)    \overline{U_s(v)}  =: k_A(u,v) \ .
\een
Therefore in view of \eqref{peschl2} and the commutation relation $[j(u),j(v)]=i\delta'(u-v)$, the final answer may also be (formally) rewritten as
\ben
\label{eq:modham1}
\langle \Omega | j(u) (\log \Delta, j(v) \Omega \rangle = \langle \Omega | j(u) [H_A, j(v)] \Omega \rangle, 
\een
 where $H_A= \int_{A \times A} k_A(u,v) j(u) j(v) \dd u \dd v$ and $k_A(u,v)$ the kernel of the operator on the right side of \eqref{eq:modham}.
 We refrain here form analyzing in detail the remaining integrals.
 
 This construction can be generalized to the case of $p$ intervals $A=\cup_{j=1}^p (a_j, b_j)$ in the following way. The ansatz for the generalized eigenfunction is now
 \ben\label{Vdefp}
V_s(u) = \sum_{q_i}
\frac{c_{q_i}(s) \vartheta_1(u-q_i+is|A|/2\pi)}{\vartheta_1(u-q_i)}\left( -\frac{\prod_j \vartheta_1(u-a_j) }{\prod_j \vartheta_1(u-b_j) } \right)^{-is/2\pi} ,
\een
where $|A|=\sum_j (b_j-a_j)$.
On the $2p$ coefficients $c_{q_i}(s)$ we impose \eqref{eq:meanvalue} and the $p$ constraints 
\ben
\label{eq:meanvalue1}
\bint_{a_j}^{b_j} V_s(u) \, \dd u = 0, \quad j=1, \dots, p.   
\een 
Of these, only $p-1$ are independent because \eqref{eq:meanvalue} and the periodicity of $V_s(u)$ can be used to show that $\sum_j \bint_{a_j}^{b_j} V_s(u) \, \dd u = 0$
by integrating $V_s(u)$ around a contour as in fig. \ref{fig:1}.
A priori, \eqref{eq:meanvalue1}, \eqref{eq:meanvalue1} are again a restriction on the possible set of eigenfunctions. But in the end one shows that, in fact, all eigenfunctions of $\Sigma^{-1}_A$ satisfy this constraint.  
They are equivalent to 
\ben
\sum_{q_i} c_{q_i}(s) I_{q_i}^{j}(s) =0, \quad j=1, \dots, p, 
\een
where 
\ben
I_q^j(s) =\bint_{a_j}^{b_j} \dd u \,  \frac{\vartheta_1(u-q+is|A|/2\pi)}{\vartheta_1(u-q)} \left( -\frac{\prod_j \vartheta_1(u-a_j) }{\prod_j \vartheta_1(u-b_j) } \right)^{-is/2\pi} .
\een
This leaves us with $p$ linearly independent left-eigenfunctions, $V^k_s(u), k=1, \dots, p$. The constants  $c_{q_i}(s)$ are further be adjusted so that 
$-i {\rm sgn}(s) C^{-1}_A \overline {V_s^k} = U_s^k$  
or equivalently
\ben
U_s^k(u) = -i {\rm sgn}(s) \bint_{a_j}^u \dd u' \, \overline {V_s^k(u')}, \quad \text{for $u\in (a_j,b_j)$,}
\een 
is a right-eigenvector of $\Sigma_A^{-1}$,  satisfying the desired ortho-normalization
$\bint_A V_s^k(u) U_{s'}^{k'}(u) \dd u = \delta_{kk'}\delta(s-s')$. 

Following an argument given in sec. 5 of \cite{Casini1}, let us finally explain why  conditions \eqref{eq:meanvalue},  \eqref{eq:meanvalue1}, initially 
only imposed for convenience in order to derive the desired eigenfunctions, impose no actual loss of generality.   
By expanding the elliptic functions near the end-points of the 
intervals and using the regularization prescription implicit in $\bint$ described in the remark after cor. \ref{cor:2}, that the limits of the functions 
$\lim_{s \to 0} U_s^k(u) = \chi_k(u)$ yields a set of functions whose span is equal to the set of indicator functions $\{1_{(a_j,b_j)}(u)\}_{j=1, \dots, p}$. 
Therefore, the orthogonality relation implies that the $p$ eigenfunctions $\{V_s^k(u) \}_{k=1,\dots, p}$ are already complete, because 
any eigenfunction must satisfy \eqref{eq:meanvalue1}.

Then a similar analysis as in the case of one interval, expanding the bounded operator $F^+(s)$ on ${\mathcal K}_A$ in the generalized basis $U^k_s, V^k_s$
of $\Sigma_A$, gives
again \eqref{eq:modham1} with $H_A = \int_{A \times A} k_A(u,v) j(u) j(v) \dd u \dd v$, where 
\ben
\sum_{k=1}^p  \int_{-\infty}^\infty \dd s \, |s| \,  U^k_s(u)    \overline{U^k_s(v)}  =: k_A(u,v)
\een
now.  Again, we refrain here form analyzing in detail the remaining integrals.

\section{Conclusion}
\label{sec:Conclusion}

In this work we have studied the modular flows of multi-component regions in chiral CFTs using methods from complex analysis. 
Our main tool was the KMS-condition built into modular theory, which we have combined with  input from CFT such as locality and analyticity. 
The main general results concern matrix elements of the modular operator of the general type $\langle \Omega| \phi(x) f(\Delta) \phi(y) \Omega \rangle$, 
where $f$ are various functions such as $\log$, resolvents, complex powers, and $\phi$ a primary field. The states $|\Omega\rangle$ studied in this work were 
the vacuum and Gibbs (thermal) states. In some cases, our results for the matrix elements can be expressed in terms of integral 
equations of Cauchy-type \cite{gakh,mush}, see e.g. cor. \ref{cor:2}, \ref{cor:3}. 

Solving these types of equations was possible in a number of examples, such as free fermions and the $U(1)$-current. To keep the paper at a reasonable length, we have not 
carried out in this work all the resulting integrations, which would be needed if one is interested in more explicit answers. This would be possible. 
In the examples considered, we were also able to explain the relationship with known results in the literature \cite{Casini0, Casini1, blanco, fries}, and also with certain general, but formal (in the QFT context), 
methods known specifically for such free field theories, see \cite{PE,PE1}. In this sense, our methods also serve to make these formal approaches rigorous. 

Our analysis is, in principle, not limited to such free CFTs, and in fact e.g., cor. \ref{cor:2}, \ref{cor:3}, refer to general primary fields in an arbitrary chiral CFT, subject only 
to certain standard assumptions. It would be interesting to study the implications of these general results further, which we leave for future work \cite{HollandsHieu}.

\medskip
\noindent
{\bf Acknowledgements:} Part of this work was carried out while I was visiting IHES, Paris (March 2019), and Instituto di Atomico Balseiro, Bariloche (March 2018). It is a pleasure to thank these institutions for hospitality and  for financially supporting those visits. 
I am grateful to the Max-Planck Society for supporting the collaboration between MPI-MiS and Leipzig U., grant Proj. Bez. M.FE.A.MATN0003. 
Discussions with H. Casini, P. Fries, M. Huerta, F. Otto, and S. Del Vecchio are gratefully acknowledged. 

\appendix

\section{Conventions for elliptic functions}

Our conventions for the $\vartheta$-functions are:
\ben
\begin{split}
\vartheta_1(u; \tau) &=\sum_{n \in {\mathbb Z}} (-1)^{n-\frac12} e^{i\pi \tau (n+\frac12)^2 + 2\pi i (n+\frac12) u}\\
\vartheta_2(u; \tau) &=\sum_{n \in {\mathbb Z}} e^{i\pi \tau (n+\frac12)^2 + 2\pi i (n+\frac12) u} \\
\vartheta_3(u; \tau) &= \sum_{n \in {\mathbb Z}} e^{i\pi \tau n^2 + 2\pi i n u}, \\
\vartheta_4(u; \tau) &= \sum_{n \in {\mathbb Z}} (-1)^n e^{i\pi \tau n^2 + 2\pi i n u}.
\end{split}
\een

\begin{table}[h]
\begin{center}
\begin{tabular}{|c | c | c | c | c |}
\hline
& $\vartheta_1$ & $\vartheta_2$ & $\vartheta_3$ & $\vartheta_4$ \\ \hline \hline
$\vartheta_i(u+1)/\vartheta_i(u)$ & $-1$& $-1$& $1$& $1$  \\
$\vartheta_i(u+\tau)/\vartheta_i(u)$ & $-M$ & $M$ & $M$ & $-M$  \\
\hline
\end{tabular}
\end{center}
\caption{Periodicity properties, with $M=e^{-i\pi \tau} e^{-2\pi i u}$.} \label{tab:1}
\end{table}
\noindent
The $\wp$-function is given by 
\ben
\wp(u;\tau) = \sum_{(m,n)\neq(0,0)} \left(
\frac{1}{(u+m+\tau n)^2} -\frac{1}{(m+\tau n)^2}
\right).
\een
The Weierstrass $\zeta$-function is a function defined so that $\zeta'(u,\tau) = -\wp(u;\tau)$. Explicitly, 
\ben
\zeta(u;\tau) = \frac{1}{u} + \sum_{(m,n)\neq(0,0)} \left(
\frac{1}{u+m+\tau n} -\frac{1}{m+\tau n} + \frac{u}{(m+\tau n)^2}
\right).
\een
It has the periodicity $\zeta(u+1) = \zeta(u)+\eta_1, \zeta(u+\tau)=\zeta(u)+\eta_2$.

\end{document}